\newcommand{\cut}[1]{}
\newcommand{\lang}{{{System~M}}\xspace}
\newcommand{\lss}{$\mbox{LS}^2$}
\newcommand{\fstar}{$\mbox{F}^\star$}
\newcommand{\Paragraph}[1]{\vspace{5pt}\noindent{\bf #1}}
\newcommand{\rulename}[1]{{\sc #1}}
\newcommand{\nf}{{\ensuremath{{\tt nf}}}}
\newcommand{\fv}{\ensuremath{{\tt fv}}}
\newcommand{\ftv}{\ensuremath{{\tt ftv}}}
\newcommand{\fa}{\ensuremath{{\tt fa}}}
\newcommand{\dom}{\ensuremath{{\tt dom}}}
\newcommand{\clete}{\ensuremath{{\tt lete}}}
\newcommand{\cletc}{\ensuremath{{\tt letc}}}
\newcommand{\ecomp}{\ensuremath{{\tt comp}}}
\newcommand{\cret}{\ensuremath{{\tt ret}}}
\newcommand{\cfix}{\ensuremath{{\tt fix}}}
\newcommand{\cif}{\ensuremath{{\tt if}}}
\newcommand{\cthen}{\ensuremath{{\tt then}}}
\newcommand{\celse}{\ensuremath{{\tt else}}}
\newcommand{\cact}{\ensuremath{{\tt act}}}
\newcommand{\kact}{\ensuremath{{\tt Act}}}
\newcommand{\tcomp}{\ensuremath{{\tt comp}}}
\newcommand{\tcmp}{\ensuremath{{\tt cmp}}}
\newcommand{\tptr}{\ensuremath{{\tt ptr}}}
\newcommand{\tunit}{\ensuremath{{\tt unit}}}
\newcommand{\tbool}{\ensuremath{{\tt bool}}}
\newcommand{\texp}{\ensuremath{{\tt any}}}
\newcommand{\tmsg}{\ensuremath{{\tt msg}}}
\newcommand{\btrue}{\ensuremath{{\tt tt}}}
\newcommand{\bfalse}{\ensuremath{{\tt ff}}}
\newcommand{\const}{\ensuremath{{\mathit bv}}}
\newcommand{\bt}{\ensuremath{{\tt b}}}
\newcommand{\jok}{\ensuremath{\mathrel{\sf ok}}}
\newcommand{\jtrue}{\ensuremath{\mathrel{\sf true}}}
\newcommand{\jonempty}{\ensuremath{\mathrel{\sf silent}}}
\newcommand{\conj}{\mathrel{\wedge}}
\newcommand{\disj}{\mathrel{\vee}}
\newcommand{\imp}{\mathrel{\Rightarrow}}
\newcommand{\pred}[1]{\textsf{#1}\xspace}
\newcommand{\lsem}{[\![}
\newcommand{\rsem}{]\!]}
\newcommand{\trace}{\mathcal{T}}
\newcommand{\defeq}{\,\mathrel{\stackrel{\textit{def}}{=}}\,}
\newcommand{\rg}[2]{\mathcal{RG}\lsem{#1}\rsem_{#2}}
\newcommand{\rd}[2]{\mathcal{RT}\lsem{#1}\rsem_{#2}}
\newcommand{\rvi}[3]{{\mathcal{RV}}(#3)\lsem{#1}\rsem_{#2}}
\newcommand{\rei}[3]{{\mathcal{RE}}(#3)\lsem{#1}\rsem_{#2}}
\newcommand{\rci}[3]{{\mathcal{RC}}(#3)\lsem#1\rsem_{#2}}
\newcommand{\rai}[3]{{\mathcal{RA}}(#3)\lsem{#1}\rsem_{#2}}
\newcommand{\rfi}[3]{{\mathcal{RF}}(#3)\lsem#1\rsem_{#2}}
\newcommand{\rgi}[3]{{\mathcal{RG}}(#3)\lsem{#1}\rsem_{#2}}
\newcommand{\reinv}[2]{\mathcal{RE}_\textit{INV}\lsem{#1}\rsem_{#2}}
\newcommand{\rcinv}[2]{\mathcal{RC}_\textit{INV}\lsem{#1}\rsem_{#2}}
\newcommand{\rvinv}[2]{\mathcal{RV}_\textit{INV}\lsem{#1}\rsem_{#2}}
\newcommand{\rfinv}[2]{\mathcal{RF}_\textit{INV}\lsem{#1}\rsem_{#2}}
\newcommand{\interp}[1]{\lsem{#1}\rsem}
\newcommand{\steps}[1]{\xrightarrow{#1}}
\newcommand{\stepsone}{\hookrightarrow}
\newcommand{\betaone}{\hookrightarrow_\beta}
\newcommand{\rt}{\mathrel{\triangleright}}
\newcommand{\next}{\ensuremath{{\sf next}}}
\newcommand{\stuck}{{\sf stuck}}
\newcommand{\nat}{\mathbb{N}}
\newcommand{\cset}{\mathsf{C}}
\newcommand{\ubertype}{\mathsf{Type}}
\newcommand{\pset}[1]{\mathcal{P}(#1)}
\newcommand{\tpair}[3]{#1.(#2, #3)}
\newcommand{\cnfn}[2]{\textit{confine}~(#1)~(#2)}
\theoremstyle{plain}
\newtheorem{thm}{Theorem}
\newtheorem{lem}[thm]{Lemma}
\theoremstyle{definition}
\newcommand{\ee}{\ensuremath{\mathcal{E}}}
 \newcommand{\reada}{\textsf{\small read}\xspace}
 \newcommand{\writea}{\textsf{\small write}\xspace}
\newcounter{linenum}
\newenvironment{packedenumerate}{\vspace{-3pt}
\begin{enumerate}
\setlength{\topsep}{0pt}
\setlength{\itemsep}{0pt}
\setlength{\partopsep}{0pt}
}{\end{enumerate}}
\newenvironment{packeditemize}{\vspace{-3pt}
\begin{itemize}
\setlength{\topsep}{0pt}
\setlength{\itemsep}{0pt}
\setlength{\partopsep}{0pt}
}{\end{itemize}}
\newenvironment{proofsketch}{\vspace{-5pt}\noindent\emph{Proof (sketch):}\hspace*{0.25em}}{ \hspace*{\fill} \qed}
\newcommand{\notes}[1]{}
\newcommand{\dg}[1]{\notes{Deepak says: #1}}
\newcommand{\limin}[1]{\notes{Limin says: #1}}
\newcommand{\eat}[1]{}
\newcommand{\todo}[1]{}
\long\def\symbolfootnotetext[#1]#2{\begingroup%
\def\thefootnote{\fnsymbol{footnote}}\footnotetext[#1]{#2}\endgroup}
\long\def\symbolfootnotemark[#1]{\begingroup%
\def\thefootnote{\fnsymbol{footnote}}\footnotemark[#1]\endgroup}
\newcommand{\action}[1]{\mathtt{#1}}
\newcommand{\at}{@}
\newcommand{\jon}{\circ}
\newcommand{\shash}{s\_hash}
\newcommand{\valpcr}{\pred{val\_pcr}}
\newcommand{\valNV}{\pred{val\_NV}}
\newcommand{\hprefix}{\pred{hash\_prefix}}
\newcommand{\llpcr}{\mathtt{pcr17}}
\newcommand{\hchain}{\mathtt{hash\_chain}}
\newcommand{\chash}{\mathtt{code\_hash}}
\newcommand{\sinit}{\pred{service\_init}}
\newcommand{\skey}{skey}
\newcounter{linen}
\newcommand{\linestart}{\setcounter{linen}{1}{\scriptstyle\arabic{linen}}\quad}
\newcommand{\newl}{\stepcounter{linen}\\{\scriptstyle\arabic{linen}}\quad}
\newcommand{\stlet}[1]{\mathtt{let}~#1 = }
\newcommand{\cbnd}[2]{#1 \leftarrow #2;}
\newcommand{\mi}[1]{\mathit{#1}}
\newcommand{\runmodule}{\mi{runmodule}}
\newcommand{\srvc}{\mi{srvc}}
\newcommand{\service}{\mi{service}}
\newcommand{\Nloc}{\mi{Nloc}}
\newcommand{\state}{\mi{state}}
\newcommand{\newstate}{\mi{state}'}
\newcommand{\stry}{\mathsf{service\_try}}
\newcommand{\sexec}{\mathsf{service\_invoke}}
\begin{document}
\pagestyle{plain}
\title{\lang: A Program Logic for Code Sandboxing and Identification}

\authorinfo{Limin Jia}
{ECE \& INI\\
Carnegie Mellon University}
{liminjia@cmu.edu}
\authorinfo{Shayak Sen}
{CS\\
Carnegie Mellon University}
{shayaks@cs.cmu.edu}
\authorinfo{Deepak Garg}
{Max Planck Institute for\\
Software Systems}
{dg@mpi-sws.org}
\authorinfo{Anupam Datta}
{CS \& ECE \\
Carnegie Mellon University}
{danupam@cmu.edu}

\maketitle

\begin{abstract}
Security-sensitive applications that execute untrusted code often
check the code's integrity by comparing its syntax to a known good
value or sandbox the code to contain its effects. \lang is a new
program logic for reasoning about such security-sensitive
applications. \lang extends Hoare Type Theory (HTT) to trace safety
properties and, additionally, contains two new reasoning
principles. First, its type system internalizes logical equality,
facilitating reasoning about applications that check code
integrity. Second, a confinement rule assigns an effect type to a
computation based solely on knowledge of the computation's sandbox. We
prove the soundness of \lang relative to a step-indexed trace-based
semantic model. We illustrate both new reasoning principles of \lang
by verifying the main integrity property of the design of Memoir, a
previously proposed trusted computing system for ensuring state
continuity of isolated security-sensitive applications.
\end{abstract}

\section{Introduction}
\label{sec:intro}

\dg{High-level comment. The \rulename{Eq} rule is now conceptually
  trivial. It will help to explain what difficulty it causes in the
  semantics. This can be done in the intro or later.}

Software systems, such as Web browsers, smartphone platforms, and
extensible operating systems and hypervisors, are designed to provide
subtle security properties in the presence of adversaries who can
supply code, which is then executed with the privileges of the trusted
system.  For example, webpages routinely execute third-party
JavaScript with full access to their content; 
smartphones execute apps from open app stores, often with very lax
sandboxes; operating system kernels include untrusted (and often
buggy) device drivers; and trusted computing platforms load programs
from disk and only later verify loaded programs using the Trusted Platform
Module (TPM)~\cite{TPM2.0specs}. Despite executing potentially adversarial code,
all these systems have security-related goals, often \emph{safety
properties} over traces~\cite{lamport77:safety}. For example, a hypervisor must
ensure that an untrusted guest operating system running on top of it cannot
modify the hypervisor's page table, a webpage must ensure that an
embedded untrusted advertisement cannot access a user's password, and
trusted computing mechanisms must enable a remote party to check that an
expected software stack was loaded in the expected order on an
untrusted server.

Secure execution of untrusted code in trusted contexts rely on two
common mechanisms. First, untrusted code is often run inside a
\emph{sandbox} that confines its interaction with key system resources
to a restricted set of interfaces. This practice is seen in
Web browsers, hypervisors, and other security-critical systems. Second,
\emph{code identification} mechanisms are used to infer that
an untrusted piece of code is in fact syntactically equal to a known
piece of code. These mechanisms include distribution of signed
code, and trusted computing mechanisms~\cite{TPM2.0specs} that leverage
hardware support to enable remote parties to check the identity of
code on an untrusted computer.
Motivated by these systems, we present a program logic, called
{\lang}, for modeling and proving safety properties of systems that
securely execute adversary-supplied code via sandboxing and code
identification.

\lang's design is inspired by Hoare Type Theory
(HTT)~\cite{nanevski:jfp08,nanevski:esop07,nanevski:tldi09}. Like HTT,
a monad separates computations with side-effects from pure
expressions, and a monadic type both specifies the return type of a
computation and includes a postcondition that specifies the
computation's side-effects.
The postcondition of a computation type in {\lang} uses predicates
over the entire trace of the computation. This is motivated by our
desire to verify safety properties~\cite{lamport77:safety}, which are,
by definition, predicates on traces. Further, the postcondition
contains not one but two predicates on traces. One predicate, the
standard \emph{partial correctness assertion}, holds if the
computation completes. The other, called the \emph{invariant
assertion}, holds at all intermediate points of the computation, even
if the computation is stuck or divergent. The invariant assertion is
directly used to represent safety properties.

To this basic infrastructure, we add two novel reasoning principles
that internalize the rationale behind commonly used mechanisms for
ensuring secure execution of adversary-supplied code: code
identification and sandboxing.  These rules derive effects of
untyped code
potentially provided by an adversary and, hence, enable the typing
derivation of the trusted code to include as sub-derivations, the
reasoning of effects of the adversarial code.
 \dg{I don't think that our rules allow
 embedding of the typing derivation of the adversarial code. The whole
 point is to not type the adversarial code.}

The first principle, a rule called \rulename{Eq}, ascribes the type of
a program to another program $e'$: if $e$ is
syntactically equal to $e'$ and $e : \tau$, then $e': \tau$.  This
rule is useful for typing programs read from
adversary-modifiable memory locations when separate reasoning can
establish that the value stored in the location is, in fact, syntactically equal to
some known expression with a known type.
Depending on the application,
such reasoning may be based in a dynamic check (e.g., in secure
boot~\cite{tcsurvey} the hash of a textual reification of a program
read from adversary-accessible memory is compared to the corresponding
hash of a known program before executing the read program) or it
may be based in a logical proof showing the inability of the adversary
to write the location in question (e.g., showing that guests cannot
write to hypervisor memory).

Our second reasoning principle, manifest in a rule called
\rulename{Confine}, allows us to type partially specified
adversary-supplied code from knowledge of the \emph{sandbox} in which
the code will execute.  The intuition behind this
rule is that if all side-effecting interfaces available to a
computation maintain a certain invariant on the shared state, then
that computation cannot violate that invariant, irrespective of its
actual code. 
The \rulename{Confine} rule generalizes prior work of Garg \emph{et al.}
on reasoning about interface-confined adversarial code in a
first-order language~\cite{garg10:ls2}.
The main difference from Garg \emph{et al.} \cite{garg10:ls2} is that in this paper
trusted interfaces can receive and execute code, in addition to data,
from the adversary and other trusted components. Our use of the
\rulename{Confine} rule stresses our view that assumptions made about
adversarial code should be minimized. In contrast, a lot of work,
e.g., proof-carrying code~\cite{necula97:pcc}, requires that
adversarial code be checked in a rich type system prior to execution,
which eliminates the need for a rule like \rulename{Confine}.
Section~\ref{sec:example} explains intuitions behind these two
principles in more detail.

We show soundness of \lang relative to a step-indexed
model~\cite{ahmed:esop06} built over syntactic traces. As in some
prior
work~\cite{datta03:composition,datta07:entcs,dfgk09:ls2,garg10:ls2},
our semantics of assertions and postconditions account for
interleaving actions from concurrently executing programs including
adversarial programs and, hence, our soundness theorem implies that
all verified properties hold in the presence of adversaries, which is
a variant of robust safety, proposed by Gordon et
al.~\cite{Gordon:2003:ATS}.
\lang supports \emph{compositional
proofs}---security proofs of sequentially composed programs are built
from proofs of their sub-programs. \lang also admits concurrent
composition---properties proved of a program hold when that program
executes concurrently with other, even adversarial, programs.

{\lang} is the first program logic that allows
proofs of safety for programs that \emph{execute} adversary-supplied
code with adequate precautions, but does not force the adversarial
code to be completely available for typing. Other frameworks like
Bhargavan \emph{et al}'s contextual theorems~\cite{bhargavan10:rcf}
for F7 achieve expressiveness similar to the \rulename{Confine} rule
for a slightly limited selection of trace properties. (We compare to
related work in Section~\ref{sec:related}.)
Our step-indexed model of
Hoare types is also novel; although our exclusion of preconditions,
our use of call-by-name $\beta$-reduction,
and the inclusion of adversary-supplied code make the model nonstandard.

\lang can be used to model and verify protocols as well as system designs.  We
demonstrate the reasoning principles of \lang by verifying the state
continuity property of the design of Memoir~\cite{memoir}, a
previously proposed trusted computing system. For reasons of space, we
elide proofs, some technical details and several typing rules from
this paper. These are presented in the accompanying technical
appendix.

\section{Term Language and Operational Semantics}
\label{sec:language}
We summarize \lang's term syntax in Figure~\ref{fig:exp-syntax}.  Pure
expressions, denoted $e$, are distinguished from effectful
computations, denoted $c$.  An expression can be a variable, a
constant, a function, a polymorphic function, a function application,
a polymorphic function instantiation, or a suspended
computation. Constants can be Booleans ($\btrue, \bfalse$), natural
numbers ($n \in \mathcal{N}$), thread identifiers ($\iota \in
\mathcal{I}$), and memory locations ($\ell \in \mathcal{L}$). We use
$\cdot$ as the place holder for the type in a polymorphic function
instantiation.  Suspended computations $\ecomp(c)$ constitute a monad
with return $\cret(e)$ and bind $\clete(e_1,x.c_2)$.

{\lang} is parametrized over a set of action symbols $A$, which are
instantiated with concrete actions based on specific application
domains. For instance, $A$ may be instantiated with memory operations
such as \pred{read} and \pred{write}. An action, denoted $a$, is the
application of an action symbol $A$ to expression arguments.

A basic computation is either an atomic action ($\cact(a)$) or
$\cret(e)$ that returns the pure expression $e$ immediately.
$\cfix\,f(x).c$ is a fixpoint operator. $f$, which represents a
suspended fixpoint computation, may appear free in the body
$c$. Computation $(c\; e)$ is the application of a fixpoint
computation to its argument. $\cletc(c_1, x.c_2)$ denotes the
sequential composition of $c_1$ and $c_2$, while $\clete(e_1, x.c_2)$
is the sequential composition of the suspended computation to which
$e_1$ reduces and $c_2$. In both cases, the expression returned by the
first computation is bound to $x$, which may occur free in~$c_2$. We
sometimes use the alternate syntax $\cbnd{x}{c_1}c_2$ and
$\stlet{x}e_1;c_2$. When the expression returned by the first
computation is not used $c_2$, we write $c_1;c_2$ and $e_1;c_2$.

\begin{figure}
\centering
\(
\begin{array}{lclll}
 \textit{Base values} & \const & ::= & \btrue ~|~ \bfalse ~|~ \iota ~|~
 \ell ~|~ n 
\\ 
\textit{Expressions} & e & ::=  & x ~|~ \const~|~
\lambda x.e  ~|~ \Lambda X.e 
\\ & & ~|~ & e_1\;e_2  ~|~e\; \cdot 
~|~ \ecomp(c) \\
\textit{Actions} & a & ::= & A ~|~ a\; e ~|~ a\; \cdot \\
\textit{Computations} & c & ::= &  
 \cact(a) ~|~  \cret(e) ~|~ \cfix\,f(x).c ~|~
c\; e 
\\ & & ~|~&  \cletc(c_1, x.c_2)  ~|~ \clete(e_1, x. c_2)
\\ & & ~|~&  c_1;c_2  ~|~ e_1;c_2
~|~ 
\cif\ e\ \cthen\ c_1\ \celse\ c_2
\end{array}
\)
\caption{Term Syntax}
\label{fig:exp-syntax}
\end{figure}

The operational semantics of {\lang} are small-step and based on
interleaving of concurrent threads.

\vspace{5pt}
\(\begin{array}{llll}
%
\textit{Stack} & K & ::= & [] ~|~ x.c :: K \\
\textit{Thread} & T & ::= & \langle \iota; K; c \rangle ~|~  
\langle \iota; K; e \rangle ~|~   \langle \iota;\stuck \rangle\\
\textit{Configuration} & C & ::= & \sigma \rt T_1, \ldots, T_n
\end{array}\)
\vspace{5pt}

A thread $T$ is a unit of sequential execution. A non-stuck thread is
a triple $\langle\iota; K; c\rangle$ or $\langle\iota; K; e\rangle$,
where $\iota$ is a unique identifier of that thread (drawn from a set
$\mathcal{I}$ of such identifiers), $K$ is the execution
(continuation) stack, and $c$ and $e$ are the computation and expression
currently being evaluated. A thread permanently enters a stuck state,
denoted $\langle \iota;\stuck \rangle$, after performing an illegal
action, such as accessing an unallocated memory location.
An execution stack is a list of frames of the form $x.c$ recording the
return points of sequencing statements in the enclosing context.  In a
frame $x.c$, $x$ binds the return expression of the computation
preceding $c$.  A configuration of the system is a shared state
$\sigma$ and a set of all threads. $\sigma$ is application-specific;
for the rest of this paper, we assume that it is a standard heap
mapping pointers to expressions, but this choice is not essential. For
example, in modeling network protocols, the heap could be replaced by
the set of undelivered (pending) messages on the network.

For pure expressions, we use call-by-name $\beta$-reduction
$\rightarrow_\beta$. This choice simplifies the operational semantics
and the soundness proofs, as explained in Sections~\ref{sec:discussion}. We elide the standard rules for
$\rightarrow_\beta$.
The small-step transitions for threads and system configurations are
shown in Figure~\ref{fig:semantics:reduction-config}.  The relation
$\sigma \rt T \stepsone \sigma' \rt T'$ defines a small-step
transition of a single thread. ${\cal C} \steps{} {\cal C'}$ denotes a
small-step transition for configuration ${\cal C}$; it results from
the reduction of any single thread in ${\cal C}$.

\begin{figure}[tb]
\noindent\framebox{$\sigma \rt T \stepsone \sigma' \rt T'$}
\begin{center}
\(%
 \inferrule*[right=R-ActS]{
\\\next(\sigma, a) = (\sigma', e) 
\\ e\neq\stuck
 }{ \sigma \rt \langle \iota;
   x.c ::K; \cact(a) \rangle \stepsone \sigma' \rt \langle \iota;
   K;  c[e/x]  \rangle}
\) 

\vspace{7pt} 

\(
 \inferrule*[right=R-ActF]{
\\\next(\sigma, a) = (\sigma', \stuck) 
 }{ \sigma \rt \langle \iota;
   x.c ::K; \cact(a) \rangle \stepsone \sigma' \rt \langle \iota;\stuck \rangle}
\) 

\vspace{7pt} 

\(
 \inferrule*[right=R-Stuck]{  }{ \sigma \rt \langle \iota;\stuck \rangle 
\stepsone \sigma \rt \langle \iota;\stuck \rangle}
\) 

\vspace{7pt} 

\(
\inferrule*[right=R-Ret]{
 }{ \sigma \rt \langle \iota;  x.c :: K; \cret(e) \rangle
  \stepsone  \sigma \rt \langle \iota; K; c[e/x]
  \rangle}
\)

\vspace{7pt} 

\(
\inferrule*[right=R-SeqE2]{e \rightarrow_\beta e'
}{\sigma \rt \langle \iota;
  K; e \rangle \betaone  \sigma \rt \langle
  \iota; K; e' \rangle}
\)

\vspace{7pt} 

\(
\inferrule*[right=R-SeqE3]{
}{\sigma \rt \langle \iota;
  x.c_2:: K; \ecomp(c_1) \rangle \stepsone  \sigma \rt \langle
  \iota; x.c_2:: K; c_1 \rangle}
\)

\vspace{7pt} 

\(
 \inferrule*[right=R-Fix]{ 
}{\sigma \rt \langle \iota; K; (\cfix f(x).c)\; e 
   \rangle\\ \stepsone \sigma \rt \langle \iota; K; c[\lambda z.\ecomp(\cfix(f(x).c)\; z)/f][e/x] \rangle}
\)
\end{center}
\caption{Selected small-step reduction semantics of configurations}
\label{fig:semantics:reduction-config}
\end{figure}

The rules for $\sigma \rt T \stepsone \sigma' \rt T'$ are mostly
straightforward.  The rules for evaluating an atomic action
(\rulename{R-ActS} and \rulename{R-ActF}) rely on a function
$\mathsf{next}$ that takes the current store $\sigma$ and an
action $a$, and returns a new store and an
expression, which are the result of the action. If the action is
illegal, then $\mathsf{next}(\sigma, a) = (\sigma', \stuck)$. If the
action returns a non-stuck expression $e$ (rule \rulename{R-ActS}),
then the top frame ($x.c$) is popped off the stack, and $c[e/x]$
becomes the current computation of the thread.  If $\mathsf{next}$
returns $\stuck$ (rule \rulename{R-ActF}), then the thread enters the
stuck state and permanently remains there.
When a sequencing statement $\clete(e_1, x.c_2)$ is evaluated, the
frame $x.c_2$ is pushed onto the stack, and $e_1$ is first reduced to
a suspended computation $\ecomp(c_1)$; then $c_1$ is evaluated.
When a fixpoint $(\cfix f(x).c); e$ is evaluated, $f$ is substituted
with a function whose body is a suspension of $\cfix f(x).c$.

Any \emph{finite} execution of a configuration results in a trace
$\trace$, defined as a finite sequence of reductions. With each
reduction we associate a time point $u$, also called a (logical)
\emph{time point}. These time points on the trace are monotonically
increasing. A trace annotated with time is
written $\steps{u_0}C_0 \steps{u_1} C_1 \ldots \steps{u_n} C_n$, where
$u_i\leq u_{i+1}$.  We follow the convention that the reduction from
$C_{i}$ to $C_{i+1}$ happens at time $u_{i+1}$ and that its effects
occur immediately. Thus the state at time $u_i$ is the state in
$C_{i}$.

\section{Motivating Application}
\label{sec:example}

We briefly review Memoir~\cite{memoir}, our main application, and
highlight the challenges in analyzing Memoir to motivate the novel
typing rules for deriving properties of adverary-supplied code using
code identification and sandboxing.

\subsection{Overview of Memoir}
\label{ssec:challenges}

Memoir provides state-integrity guarantees for stateful security-sensitive
services invoked by potentially malicious parties. Such services often rely on
untrusted storage to store their persistent state. An example of such a
service is a password manager that responds with a stored
password when it receives a request containing a URL and a username.
The service would want to ensure secrecy and integrity of its state;
in this case, the set of stored passwords.
Simply encrypting and signing the service's state cannot
prevent the attacker from invoking the service with a valid but old
state, and consequently mounting service rollback attacks. For the password
manager service, this attack could cause the service to respond with old
(possibly compromised) passwords. Memoir solves this problem by using the
TPM to provide state integrity guarantees.
Memoir relies on the following TPM features: 

\begin{packeditemize} 
\item \emph{Platform configuration registers} (PCRs) contain
20-byte hashes known as \emph{measurements} that summarize the current
configuration of the system. The value they contain can only be updated
in two ways: (1) a \emph{reset} operation which sets the value of the PCR to a fixed
default value; (2) an \emph{extend} operation which takes as argument a value $v$
and updates the value of the PCR to the hash of the concatenation of its current value with $v$.

\item \emph{Late launch} is a command that can be used to securely load a
program.  It extends the hash of the textual reification of the program into a
special PCR (PCR17). Combined with the guarantees provided by a PCR, late launch
provides a mechanism for precise code identification.

\item \emph{Non-volatile RAM} (NVRAM) provides persistent storage that
allows access control based on PCR measurements. Specifically, permissions on
NVRAM locations can be tied to a PCR $p$ and value $v$ such that the location
can only be read when the value contained in $p$ is $v$.
\end{packeditemize}

Memoir has two phases: service initialization and service
invokation.
During initialization, the Memoir module is assigned an NVRAM
block. It is also given a
service to protect. The module generates a new symmetric key that is used
throughout the lifetime of the service.  It sets the permissions on accesses to
the NVRAM block to be tied to the hash stored in PCR 17, which contains the
hash of the code for Memoir and the service. To prevent rollback attacks, it
uses a \emph{freshness tag} which is a chain of hashes of all the requests
received so far. The secret key and an initial freshness tag are stored in the
designated NVRAM location. The service then runs for the first time to generate
an initial state, which along with the freshness tag is encrypted with the
secret key and stored to disk. This encryption of the service's state along
with the freshness tag is called a \emph{snapshot}.
\begin{figure}
\(\begin{array}{l}
\linestart
\mi{runmodule} (\mi{srvc}, \mi{snap},  \mi{req}, \mi{Nloc})  =\newl
~~~\cdots\newl
~~~\cbnd{(\skey, \mi{freshness\_tag})}{\cact(\action{NVRAMread}~\mi{Nloc})} \newl
~~~\cbnd{service\_state}{\mi{check\_decrypt\_snapshot}~(\mi{snap})}\newl
~~~\cdots\newl
~~~\cbnd{(\mi{state}', \mi{resp})\\
~~~~~~~~~~~~~~~}{(\mi{srvc}~\mi{ExtendPCR}~\mi{ResetPCR}~\cdots)~(\mi{state}, \mi{req})}\newl
~~~\cdots
\end{array}\)
\caption{Snippet of invokation code}
\label{fig:snippet}
\end{figure}

After initialization, a service can be invoked by providing Memoir with an
NVRAM block, a piece of service code, and a snapshot. In
Figure~\ref{fig:snippet}, we show a snippet of the Memoir service invokation
code.  Memoir retrieves the key and freshness tag  from the NVRAM.  Memoir then
decrypts the snapshot and verifies that the freshness tag in the provided state
matches the one stored in NVRAM.  If the verification succeeds, Memoir computes
a new freshness tag and updates the NVRAM.  Next, it executes the service to
generate a new state and a response. The new snapshot corresponding to the new
state and freshness tag is stored to disk.

The security property we prove about Memoir is that the service can
only be invoked on the state generated by the last completed instance
of the service.
The proof of security for Memoir requires reasoning about the effects
the service, which is provided by potentially malicious parties.

To derive properties of the $\mi{runmodule}$ code shown above one needs to
assign a type to $\mi{srvc}$, which is provided by an adversary. The service
$\mi{srvc}$, run on line 6, is a function that contains no free actions.
However, $\mi{srvc}$ takes as arguments interface functions corresponding to
every atomic action in our model.  Shown above are $\mi{ExtendPCR}$ and
$\mi{ResetPCR}$ which are simply wrappers for the corresponding atomic actions.

For example, the proof requires deriving the following two invariant properties
about $\mi{srvc}$:

\begin{packedenumerate}
\item It does not change the value of the PCR to a state that allows the adversary to later read the NVRAM.
\item It does not leak the secret key.
\end{packedenumerate}
\vspace{-5pt}
The first invariant is derived using the fact that the service is confined
to the interface exposed by the TPM. The second invariant is
derived in three steps: (i) prove that $\mi{srvc}$ is syntactically equal
to the initial service; (ii) assume that the initial service does not
leak the secret key; and (iii) hence infer that $\mi{srvc}$ does not
leak the secret key. We next describe \lang's typing rules that enable
such reasoning.

\subsection{Typing Adversary Supplied Code}

\paragraph{Reasoning about effects of confinement}
In analyzing 
programs that execute
adversary-supplied code, one often encounters a partially trusted
program, whose code is \emph{unknown}, but which is \emph{known} or
assumed to be confined to the use of a specific set of interfaces to
perform actions on shared state. In our Memoir example, every program
on the machine is confined to the interface provided by the TPM. Using
just this confinement information, we can sometimes deduce a useful
effect-type for the partially trusted program. Suppose $c$ is a closed
computation, which syntactically does not contain any actions and can
invoke as subprocedures the computations $c_1, \ldots, c_n$ only
(i.e., $c$ is \emph{confined} to $c_1,\ldots,c_n$). If all actions
performed by $c_1, \ldots, c_n$ satisfy a predicate $\varphi$, then
the actions performed by $c$ must also satisfy~$\varphi$, irrespective
of the code of~$c$. Hence, we can statically specify the
effects of $c$, without knowing its code, but knowing the effects of
$c_1,\ldots,c_n$.

We formalize this intuition in a typing rule called
\rulename{Confine}.
To explain this rule, we introduce some notation. Let
$\tau$ denote types in {\lang} that include postconditions for
computations and, specifically, let $\tcmp(\tau, \varphi)$ denote the
monadic type of computations that return a value of type $\tau$ and
whose actions satisfy the predicate $\varphi$. (The notation
$\tcmp(\tau, \varphi)$ is simpler than our actual computation types,
but it suffices for the explanation here.)

As an illustration of our \rulename{Confine} rule, consider any closed
expression $e$. Assume that $e$ does
not contain any primitive actions. Then, we claim that for any
$\varphi$, $e$ has the type $\tcmp(\tbool, \varphi)
\rightarrow \tcmp(\tbool, \varphi)$. To understand this claim, assume
that $\varphi$ is the property ``the action is not a
\writea to memory''. To show
that $e: \tcmp(\tbool, \varphi) \rightarrow \tcmp(\tbool, \varphi)$,
we must show that for any $v: \tcmp(\tbool, \varphi)$, $e\, v:
\tcmp(\tbool, \varphi)$. Hence, we must show that the actions
performed by the computation, say $c$, that $e\, v$ evaluates to do
not include \writea. This can be argued easily: Because $e$ is closed
and does not contain any actions, the only way this computation $c$
could \writea is by invoking the computation $v$. However, because $v:
\tcmp(\tbool, \varphi)$, $v$ does not \writea. Hence, $e \, v:
\tcmp(\tbool, \varphi)$.

In fact, we can assign $e$ any type, including higher-order function
types, as long as the effects in that type are $\varphi$.  Let the
predicate $\cnfn{\tau} {\varphi}$ mean that $\varphi=\varphi'$ for all
nested types of the form $\tcomp(\tau',\varphi')$ in $\tau$.
Let $\cnfn{\Gamma}{\varphi}$ mean that every type $\tau$ that $\Gamma$ maps
to satisfies  $\cnfn{\tau} {\varphi}$.
Let
$\fa(e)=\emptyset$ mean that $e$ syntactically does not contain any
actions.
Then, the idea of typing
through confinement is captured by the following
rule. The rule says that for any $e$ without any
actions, if $\tau$'s nested effects are $\varphi$, and the types
of the free variables in $e$ also only have $\varphi$ as effects, then
 $e: \tau$ with any predicate $\varphi$. (Our actual typing rule, shown
in Section~\ref{ssec:typing-rules} after more notation has been
introduced, is more complex. The actual rule also admits predicates
over traces, which are more general than predicates over individual
actions that we have considered here.)
%
\[
\inferrule*[right=Confine] {
\fa(e)=\emptyset\\\fv(e)\in\Gamma
\\\\ \cnfn{\tau} {\varphi}
\\ \cnfn{\Gamma} {\varphi}} {\Gamma \vdash e : \tau}
\]

In our Memoir example, we use the \rulename{Confine} rule to derive the invariants
of the service invoked by the attacker.  For instance, if we can show
that each of the TPM primitives do not reset the value of the PCR, then using
the \rulename{Confine} rule, we can claim that $\srvc$, when applied to these
primitives does not reset the value of the PCR. We revisit this proof with
specific details in Section~\ref{ssec:typing-examples}.

In typing a statically unknown expression using the \rulename{Confine} rule we
assume that the expression is syntactically free of actions and that all of its
free variables are in $\Gamma$.  These are reasonable assumptions for untrusted
code to be sandboxed. In an implementation these assumptions can be discharged
either by dynamic checks during execution, by static checks during program
linking, or by hardware-enforced interface confinement. For example, in our Memoir
analysis, the hardware ensures that TPM state can be modified by the service only
using the TPM interface.

\paragraph{Deriving properties based on code integrity} Next we need
to show that $\mi{srvc}$ does not leak its secret key. We assume this
property about the initial service Memoir was invoked with. (This
property could be verified either by manual audits or automated static
analysis of the service code).  However, in our model the adversary
could invoke Memoir on malicious service code (e.g., replacing a
legitimate password manager service with code of the adversary's
choice). In this case, we can show with additional reasoning that
$\mi{srvc}$ invoked later must be the same program as the intial
service.  To allow typing $\mi{srvc}$, based on the proof of equality
with the initial service and an assumed type for the initial service,
we add a new rule called \rulename{Eq}.

\vspace{-10pt}
\[
\inferrule*[right=Eq]{\Gamma \vdash e : \tau\\
 \Gamma \vdash  e = e' \jtrue }{
\Gamma \vdash e' : \tau }
\]
\vspace{-10pt}

The \rulename{Eq} rule assigns the type $\tau$ of any expression $e$
to any other expression~$e'$, which is known to be syntactically equal to
$e$. This rule is trivially sound.

This pattern of first establishing code identity (identify an unknown
code with some known code) and then using it to refine types is quite
common in proofs of security-relevant properties. A similar
pattern arises in analysis of systems that rely on memory
protections to ensure that code read from the shared memory is the
same as a piece of trusted code, and therefore, safe to execute.  In
Datta \emph{et al.}'s work on analysis of remote attestation
protocols~\cite{dfgk09:ls2}, similar patterns arise for typing
potentially modified software executed in a machine's boot
sequence. Their model is untyped, but if it were to be typed,
\rulename{Eq} could be used to complete the proofs.

\section{Type System and Assertion Logic}
\label{sec:type-system}
\begin{figure}[t!]
\(
\hspace{-5pt}
\begin{array}{lcll}
\textit{\small Expr types} & \tau & ::= &  X ~|~
\bt ~|~\Pi x{:}
\tau_1. \tau_2 ~|~ \forall X.\tau
~|~ \tcomp(\eta_c) ~|~ \texp
\\
\textit{\small Comp types} & \eta & ::= & x{:}\tau.\varphi
~|~  \varphi ~|~  (x{:}\tau.\varphi, \varphi')\\
\textit{\small Closed c types} & \eta_c & ::= &
u_1.u_2.i.(x{:}\tau.\varphi_1, \varphi_2)
\\ & &~|~ & \Pi x{:}\tau. u_1.u_2.i.(y{:}\tau.\varphi_1, \varphi_2)
\\
\textit{\small Assertions} & \varphi & ::= & P ~|~ e_1 = e_2 ~|~ \varphi \; e ~|~ \top ~|~ \bot ~|~ \neg\varphi
\\ & & ~|~ & \varphi_1 \conj \varphi_2
~|~\varphi_1 \disj \varphi_2
~|~ \forall x{:} \tau. \varphi ~|~ \exists x{:}\tau. \varphi
\vspace{5pt}
\\
  \textit{\small Action Kinds} & \alpha & ::= & \kact(\eta_c)
  ~|~ \Pi x{:}\tau. \alpha ~|~ \forall X.\alpha
\\
\textit{\small Type var ctx} & \Theta & ::= & \cdot ~|~ \Theta, X\\
\textit{\small Signatures} & \Sigma & ::= & \cdot ~|~ \Sigma, A:: \alpha\\
\textit{\small Logic var ctx} & \Gamma^L & ::= & \cdot ~|~ \Gamma^L, x:\bt ~|~ \Gamma^L, x: \texp \\
\textit{\small Typing ctx} & \Gamma & ::= & \cdot ~|~ \Gamma, x: \tau \\
\textit{\small Formula ctx} & \Delta & ::= & \cdot ~|~\Delta, \varphi\\
\textit{\small Exec ctx} & \Xi & ::= & u_b:b, u_e:b, i:b
\end{array}
\)
\caption{Types and typing contexts}
\label{fig:type-syntax}
\end{figure}

The syntax for {\lang} types is shown in Figure~\ref{fig:type-syntax}.
Types for expressions, denoted $\tau$, include type variables ($X$),
a base type $\bt$, dependent function types ($\Pi
x{:}\tau_1. \tau_2$), and polymorphic function types ($\forall
X.\tau$).
Since \lang focuses on deriving trace properties of programs, the
difference between base types such as \tunit\ and \tbool\ is of little
significance. Therefore, \lang has one base type $\bt$ to classify all
first-order terms.  The type $\texp$ contains all syntactically
well-formed expressions ($\texp$ stands for ``untyped''). Memory
always stores expressions of type {\texp} because the adversary could
potentially write to any memory location.

Similar to HTT, a
suspended computation $\ecomp(c)$ is assigned a monadic type
$\tcomp(\eta_c)$, where $\eta_c$ is a closed computation type. A
closed computation type $u_1.u_2.i.(x{:}\tau.\varphi_1, \varphi_2)$
contains two postconditions, $\varphi_1$ and $\varphi_2$. Both are
interpreted relative to a trace $\trace$. $\varphi_1$, the
\emph{partial correctness assertion}, holds whenever a computation of
this type finishes execution on the trace. It is parametrized by the
id $i$ of the thread that runs the computation, the interval $(u_b,
u_e]$ during which the computation runs and the return value $x$ of
  the computation. $\varphi_2$, called the \emph{invariant assertion},
  holds while a computation of the computation type is still executing
  (or is stuck), but has not returned. It is parametrized by the id
  $i$ of the thread running the computation and the time interval
  $(u_b, u_e]$ over which the computation has executed. Formally, a
    suspended computation $\ecomp(c)$ has type
    $\tcomp(u_1.u_2.i.(x{:}\tau.\varphi_1, \varphi_2))$ if the
    following two properties hold for every trace $\trace$: (1) if a
    thread $\iota$ on trace $\trace$ begins to run $c$ at time $U_1$
    and at time $U_2$, $c$ returns an expression $e$, then $e$ has
    type $\tau$, and $\trace$ satisfies $\varphi_1[U_1, U_2, \iota,
      e/u_1, u_2, i,x]$; (2), if a thread $\iota$ on trace $\trace$
    begins to run $c$ at time $U_1$ and at time $U_2$, $c$ has not
    finished, then $\trace$ satisfies $\varphi_2[U_1, U_2, \iota/u_1,
      u_2, i]$. The meaning of all types is made precise in
    Section~\ref{ssec:model}.

The type $\eta$ may be
either a partial correctness assertion, an invariant assertion, or a
pair of both. Fixpoint
computations have the type $\Pi
x{:}\tau. u_1.u_2.i.(y{:}\tau.\varphi_1, \varphi_2)$,
discussed in more detail with typing rules. If
$f$ has this type, then for any $e: \tau$, $(f\; e)$ is a recursive
computation of closed computation type $u_1.u_2.i.(y{:}\tau.\varphi_1,
\varphi_2)[e/x]$.

Assertions, denoted $\varphi$, are standard first-order logical
formulas interpreted over traces. Atomic assertions are denoted $P$.

We write $\alpha$ to categorize actions. A fully applied action has the
type  $\kact(\eta_c)$, where $\eta_c$ denotes the action's effects.
\subsection{Typing Rules}
\label{ssec:typing-rules}
Our typing judgments use several contexts.  $\Theta$ is a list of type
variables. The signature $\Sigma$ contains specifications for action
symbols. $\Gamma^L$ contains logical variable type bindings. These
variables can only be of the type $\bt$ or $\texp$.  $\Gamma$ contains
dependent variable type bindings. $\Delta$ contains logical
assertions.  The ordered context $\Xi = u_b,u_e,i$ provides reference
time points and a thread id to typing judgments for computations. When
typing a computation, $(u_b, u_e]$ are parameters representing the
  interval during which the computation executes and $i$ is a
  parameter representing the id of the thread that executes the
  computation. A summary of the typing judgments is shown below.

\vspace{2pt}
\hspace{-18pt}
\(
\begin{array}{ll}
  u{:}\bt; \Theta; \Sigma; \Gamma^L;\Gamma;\Delta \vdash_Q e: \tau
  & \mbox{expression $e$ has type $\tau$}
  \\  u{:}\bt; \Theta; \Sigma;
 \Gamma^L;\Gamma;\Delta \vdash_Q c : \eta_c
  & \mbox{fixed-point computation $c$ has type $\eta_c$}
  \\\Xi;  \Theta; \Sigma;\Gamma^L; \Gamma;\Delta \vdash_Q c
  : \eta
  &  \mbox{computation $c$ has type $\eta$}
  \\  \Xi;  \Theta; \Sigma; \Gamma^L;\Gamma;\Delta \vdash \varphi
  \jonempty &
  \mbox{ $\varphi$ holds while reductions are}
\\&\mbox{non-effectful}
  \\\Theta;  \Sigma; \Gamma^L;\Gamma;\Delta \vdash \varphi  \jtrue
  & \mbox{$\varphi$ is true}
\end{array}
\)
\vspace{2pt}

When typing expressions and fixpoint computations, $u$ is earliest
time point when the term can be evaluated on the trace. The first
three judgments are indexed by a qualifier $Q$, which can either be
empty or $u_b.u_e.i.\varphi$, which we call an invariant.  Variables
$u_b$, $u_e$, and $i$ have the same meaning as the context $\Xi$, and
may appear free in $\varphi$.
Rules indexed with $u_b.u_e.i.\varphi$ are used for deriving
properties of programs that execute adversarial code. Roughly
speaking, the context $\Gamma$ in these rules contains variables that
are place holders for expressions that satisfy the invariant $\varphi$.
 We explain here some selected rules of our type system; the
remaining rules are listed in the accompanying technical appendix.

\paragraph{Silent threads}
Reductions on a trace can be categorized into those induced by the
rules \rulename{R-ActS} and \rulename{R-ActF} in
Figure~\ref{fig:semantics:reduction-config} and those induced by other
rules. We call the former effectful and the latter non-effectful or
silent. The typing judgment $\Xi;\Theta;\Sigma;\Gamma^L;\Gamma;\Delta \vdash \varphi\jonempty$
specifies properties of threads while they perform only silent
reductions or do not reduce at all. The judgment is auxiliary in
proofs of both partial correctness and invariant assertions, as will
become clear soon. The following rule states that if $\varphi$ is
true, then a trace containing a thread's silent computation
satisfies $\varphi$.

\vspace{-8pt}
\[
\inferrule*[right=Silent]{
  \Xi;\Theta;\Sigma; \Gamma^L;\Gamma;\Delta \vdash \varphi\jtrue
\\  \Xi; \Theta;\Sigma; \Gamma^L;\Gamma;\Delta \vdash \varphi \jok}{
\Xi;\Theta;\Sigma;\Gamma^L;\Gamma;\Delta  \vdash \varphi\jonempty
}
\]
\vspace{-10pt}

The type system may be extended with other sound rules for this
judgment. \limin{to update} For instance, the following is a trivially
sound rule: $u_b.u_e.i;\Theta;\Sigma;\Gamma^L;\Gamma;\Delta \vdash
(\forall l, t, u_b{<}t\leq u_e \imp\neg\pred{Read}\ i\ l\
t)\jonempty$. If a thread $i$ is not performing any action
during time interval $(u_b,u_e]$, then it does not read memory
during that time interval.

\begin{figure}[t!]
{\bf Partial correctness typing}
\begin{mathpar}
\mprset{flushleft}
\inferrule*[right=Act]{u_1;\Theta; \Sigma;\Gamma^L,u_2,i; \Gamma;\Delta \vdash_Q a ::
   \kact(\tpair{u_b.u_e.j}{x{:}\tau.\varphi_1}{\varphi_2} )
\\
   u_1, u_2, i; \Theta; \Sigma;\Gamma^L; \Gamma;\Delta  \vdash
   \varphi \jonempty
\\ \fv(a)\in\dom(\Gamma)
\\ \mbox{let}~\gamma=[u_1,u_2,i/u_b,u_e,j]
\\ d; \Sigma;\Gamma^L; \Gamma\vdash
    u_1.u_2.i.(x{:}\tau.\varphi_1\gamma,
 \varphi_2\gamma\conj \varphi)\jok
}{
 u_1, u_2, i; \Theta; \Sigma;\Gamma^L; \Gamma;\Delta
    \vdash_Q \cact(a) : (x{:}\tau.\varphi_1\gamma,
 \varphi_2\gamma\conj \varphi)}
\and
\inferrule*[right=Ret]{
u_2; \Theta; \Sigma;\Gamma^L,  u_1, i;
\Gamma;\Delta 
\vdash_Q e: \tau \\
    u_1, u_2, i; \Theta; \Sigma;\Gamma^L; \Gamma;\Delta  \vdash
   \varphi \jonempty
\\ \fv(e)\subseteq\dom(\Gamma) }{
   u_1, u_2, i;  \Theta; \Sigma;\Gamma^L; \Gamma;\Delta
     \vdash_Q \cret(e): x{:}\tau.((x = e) \conj
   \varphi)}
\and
\inferrule*[right=SeqC]{
    u_0, u_1, i;
 \Theta; \Sigma;\Gamma^L;  u_3,\Gamma;\Delta , u_0\leq u_1 \vdash \varphi_0 \jonempty
   \\ u_1, u_2, i; \Theta; \Sigma;\Gamma^L, u_0:
   \bt,  u_3;\Gamma;\Delta,
    u_1 < u_2, \varphi_0
\\\\~~~~\vdash_Q c_1 : x{:}\tau.\varphi_1
  \\\\ u_2, u_3, i; \Theta; \Sigma;\Gamma^L,
   u_0, u_1;\Gamma,x : \tau;\Delta,u_2 < u_3,\varphi_0, \varphi_1
    \\\\~~~~\vdash_Q
    c_2: y{:}\tau'.\varphi_2
\\\\
  \Theta; \Sigma;\Gamma^L,u_1, u_2,  u_0, u_3, i;\Gamma,
   x{:}\tau,y:\tau';\Delta
\\\\~~~~\vdash
  (\varphi_0 \conj \varphi_1 \conj
  \varphi_2) \Rightarrow \varphi \jtrue
\\  \Theta;\Sigma;\Gamma^L, u_0, u_3, i;\Gamma,
y:\tau' \vdash  \varphi \jok
\\\fv(\cletc(c_1, x.c_2))\subseteq\dom(\Gamma)
}{
  u_0, u_3, i; \Theta; \Sigma;\Gamma^L; \Gamma;\Delta  \vdash_Q
   \cletc(c_1, x.c_2): y{:}\tau'.\varphi
}
\and
\inferrule*[right=SeqCComp]{
    u_0, u_1, i;
 \Theta; \Sigma;\Gamma^L, u_3;\cdot;\Delta , u_0\leq u_1 \vdash
 \varphi_0 \jonempty
   \\ u_1, u_2, i; \Theta; \Sigma;\Gamma^L, u_0:
   \bt,  u_3;\cdot;\varphi_0, u_1\leq u_2  
\\\\~~~~\vdash_Q c_1 : x{:}\tau.\varphi_1
  \\\\ u_2, u_3, i; \Theta; \Sigma;\Gamma^L;
   u_0, u_1; x : \tau;\Delta,u_2 \leq u_3,\varphi_0, \varphi_1
   \\\\~~~~\vdash_{Q2}
    c_2: y{:}\tau'.\varphi_2
\\\\
  \Theta; \Sigma;\Gamma^L;  u_0, u_3, i;\Gamma,
   u_1, u_2, y:\tau';\Delta
\\\\~~~~\vdash
  (\varphi_0 \conj \varphi_1 \conj
  \varphi_2) \Rightarrow \varphi \jtrue
\\  \Theta;\Sigma;\Gamma^L;  u_0, u_3, i,\Gamma,
y:\tau' \vdash  \varphi \jok
}{
  u_0, u_3, i; \Theta; \Sigma;\Gamma^L; \Gamma;\Delta  \vdash_{Q2}
   (c_1; c_2): y{:}\tau'.\varphi
}
\end{mathpar}

{\bf Invariant typing}
\begin{mathpar}
\mprset{flushleft}
 \inferrule*[right=SeqCI]{
 \Theta ;\Sigma;\Gamma^L,  u_0, u_3, i;\Gamma;\Delta\vdash  \varphi \jok
\\  u_0, u_1, i;
 \Theta; \Sigma;\Gamma^L, u_3;\Gamma;\Delta , u_0\leq u_1 \vdash \varphi_0 \jonempty
\\  u_0, u_3, i;
 \Theta; \Sigma;\Gamma^L; \Gamma;\Delta , u_0\leq u_3 \vdash \varphi'_0 \jonempty
 \\u_1, u_2, i;  \Theta; \Sigma;\Gamma^L,u_0:
 \bt, u_3;\Gamma;\Delta,   u_1 < u_2, \varphi_0
\\\\~~~~\vdash_Q c_1 : x{:}\tau.\varphi_1
\\u_1, u_3, i;  \Theta; \Sigma;\Gamma^L;\Gamma;\Delta, u_0:
 \bt, u_1 \leq u_3, \varphi_0 \vdash_Q c_1 : \varphi'_1
  \\ u_2, u_3, i; \Theta; \Sigma;\Gamma^L; \Gamma;\Delta,
   u_0, u_1, x: \tau, u_2\leq u_3,\varphi_0, \varphi_1
\\\\~~~~\vdash_Q
   c_2: \varphi_2
\\\\
 \Theta; \Sigma;\Gamma^L,  u_0, u_3, i;\Gamma;\Delta  \vdash
   \varphi'_0 \imp \varphi \jtrue
\\\\
 \Theta; \Sigma;\Gamma^L,  u_0, u_3, i;\Gamma,u_1;\Delta
\vdash
  (\varphi_0 \conj \varphi'_1) \imp \varphi \jtrue
\\\\
  \Theta; \Sigma;\Gamma^L,  u_0, u_3, i;\Gamma, u_1, u_2,
  x{:}\tau;\Delta
\\\\~~~~\vdash (\varphi_0\conj \varphi_1 \conj
  \varphi_2) \imp \varphi \jtrue
\\\fv(\cletc(c_1, x.c_2))\subseteq\dom(\Gamma)
}{
  u_0, u_3, i; \Theta; \Sigma;\Gamma^L; \Gamma;\Delta  \vdash_Q
   \cletc(c_1, x.c_2): \varphi }
\end{mathpar}
\caption{Selected Rules for Computation Typing}
\label{fig:comp-typing}
\end{figure}

\paragraph{Partial correctness typing for computations}
Figure~\ref{fig:comp-typing} shows selected rules for establishing
partial correctness postconditions of computations.  The judgment
$u_1, u_2, i;\Theta; \Sigma;\Gamma^L;\Gamma;\Delta \vdash c: x{:}\tau.\varphi$ means that if in
trace $\trace$ any thread with id $\iota$ begins to execute
computation $c$ at time $U_1$, and at time $U_2$, $c$ returns an
expression $e$, and $\trace$ satisfies all the formulas in $\Delta$,
then $e$ has type $\tau$, and $\trace$ also satisfies $\varphi[U_1,
U_2, \iota, e/ u_1, u_2, i, x]$.

In rule \rulename{Act}, the type of an atomic action is directly
derived from the specification of the action symbol in $a$.  We elide
rules for the judgment $a ::
\kact(\tpair{u_1.u_2.i}{x{:}\tau.\varphi_1}{\varphi_2})$, which
derives types for actions based on the specifications in~$\Sigma$.  We
explain the invariant assertions for actions with the discussion of invariant
typing for computations. When
typing $a$, the logical variable typing context includes $u_2:\bt$ and
$i:\bt$, because they may appear free in $\Gamma$ and $\Delta$.  For
brevity, we elide the types for variables of type $\bt$, as they are
obvious from the context. 

Rule \rulename{Ret} assigns $e$'s type to $\cret(e)$. The trace
$\trace$ containing the evaluation of $\cret(e)$ satisfies two
properties, which appear in the postcondition of $\cret(e)$. First,
the return expression, which is bound to $x$, is $e$ (assertion $(x =
e)$). Second, $\trace$ satisfies any property $\varphi$ such that
$\varphi \jonempty$ holds. This is because reduction of $\cret(e)$ is
silent. Here $e$ is typed under the time point $u_2$, indicating that
$e$ can only be evaluated after
$u_2$. 

Rule \rulename{SeqC} types the sequential composition
$\cletc(c_1,x.c_2)$. Starting at time point $u_0$ and returning at
$u_3$, the execution of $\cletc(c_1,x.c_2)$ in any thread $i$ can be
divided into three segments for some $u_1,u_2$: between time $u_0$ and
$u_1$, where thread $i$ takes only a silent step, pushing $x.c_2$ onto
the stack; between time $u_1$ and $u_2$, where the computation $c_1$
runs; and between time $u_2$ and $u_3$, where $c_2$ runs.  The first
three premises of \rulename{SeqC} assert the effects of each these
three segments. When type checking $c_2$, the facts learned from the
execution so far ($\varphi_0$ and $\varphi_1$) are included in the
context. The fourth premise checks that $\varphi$ is the logical
consequence of the conjunction of the three evaluation segments'
properties.

The above rules have the same qualifier $Q$ in the premises and the
conclusion. Rule \rulename{SeqCComp} combines derivations with
different qualifiers in a sequencing statement. 
The $\Gamma$ context in the typing of $c_1$ and $c_2$ must be
empty. 
Because the free variables in
$c_1$ are place holders for expressions that satisfy an invariant
$\varphi_1$, while the free variables in $c_2$ are for ones
that satisfy a different invariant $\varphi_2$, $c_1$ and $c_2$ cannot share
free variables except those in $\Gamma^L$. 
Note that both $Q$ and $Q2$ can be empty.
This rule is necessary for
typing the sequential composition of two programs that contain
differently sandboxed code: $c_1$ executes sandboxed code that satisfies
$\varphi_1$ and $c_2$ either contains no sandboxed programs, or ones that
satisfy $\varphi_2$ .



\paragraph{Invariant typing for computations}
The meaning of the invariant typing judgment $u_1, u_2, i;\Theta;
\Sigma;\Gamma^L; \Gamma;\Delta
\vdash c: \varphi$ is the following: Assuming that on a trace
$\trace$, thread $\iota$ begins to execute $c$ at time $U_1$, and at
time $U_2$ $c$ has not yet returned (this includes the possibility
that $c$ is looping indefinitely or is stuck), if $\trace$ satisfies
assumptions in $\Delta$, then $\trace$ also satisfies $\varphi[U_1,
  U_2, \iota/ u_1, u_2, i]$.

  We first explain the invariant assertions for actions (rule
  \rulename{Act}).  The thread executing the atomic action is silent
  before the action returns. Therefore, the invariant assertion of the
  action is the conjunction of the invariant specified in $\Sigma$ and
  the effect of being silent.

Next, we explain the rule \rulename{SeqCI} for the sequencing statement
$\cletc(c_1, x.c_2)$. We need to consider three cases when deriving
the invariant assertion $\varphi$ of $\cletc(c_1, x.c_2)$
in the interval $(u_0, u_3]$: (1) the computation
  has not started until $u_3$ (2) the computation $c_1$ started but
  has not returned until $u_3$, (3) the computation $c_1$ has
  returned, but $c_2$ has not returned until~$u_3$. The first five
  premises of rule \rulename{SeqCI} establish properties of a silent
  thread, the partial correctness and invariant assertions of the
  computation in $c_1$, and the invariant assertion of $c_2$. The next
  three judgments check that in each of the three cases (1)--(3), the
  final assertion $\varphi$ holds.

  For example, $\ecomp(\cletc(\cact(\reada\ e), x.\cret x))$ can be
  assigned the following type. Predicate $(\pred{mem}\ l\ v\ u)$ is
  true when at time $u$, memory location $l$ is allocated and stores
  the expression $v$.  Predicate $\pred{eval}\ e\ e'$ is true if $e$
  $\beta$-reduces to $e'$, which cannot reduce further. 
$\pred{Write}\ \iota\ l\ e\ u$ states that
thread $\iota$ writes to address $l$ expression $e$ at time $u$.
 The partial
  correctness assertion states that this suspended computation returns
what's stored in the location that $e$ reduces to. The invariant
assertion states that during its execution, the thread executing it
does not write to the memory.

\vspace{-8pt}
\begin{tabbing}
  $\tcomp(u_b.u_e.i.($\= $r{:}\texp. 
    \forall l, v,$ $\pred{eval}\ e\ l \conj \pred{mem}\ l\ v\ u_e 
    \imp y =e $, 
\\\> $\forall l,v,u,$\= $u_b< u \leq u_e     
  \imp\neg \pred{write}\ i \ l\ v\ u))$
\end{tabbing}
\vspace{-8pt}


\paragraph{Fixpoint computation}
The fixpoint is typed under 
a time point $u$, which is the earliest
time when the fixpoint is unrolled. 


\(\hspace{-15pt}
\mprset{flushleft}
 \inferrule*[right=Fix]{
\Gamma_1 =  y:\tau, f: \Pi
 y{:}\tau. \tcomp(u_1.u_3.i.(x{:}\tau_1.\varphi, \varphi'))
\\\\
 u_1, u_2, i; \Theta;\Sigma;\Gamma^L; \Gamma;\Delta,
  u\leq  u_1\leq u_2 \vdash\varphi_0 \jonempty
\\\\ u_2, u_3, i; \Theta;
   \Sigma;\Gamma^L,u_1, u;
   \Gamma, \Gamma_1;\Delta,  u_2 <  u_3, \varphi_0  \vdash_Q c : x{:}\tau_1.\varphi_1
\\\\ u_2, u_3, i; \Theta;
   \Sigma;\Gamma^L; u_1, u;
   \Gamma, \Gamma_1;\Delta, u_2 \leq   u_3, \varphi_0  \vdash_Q c :\varphi_2
   \\\\ \Theta;\Sigma;\Gamma^L, u_1, u,  u_2, u_3,  i; \Gamma, \Gamma_1,
  x:\tau_1;\Delta   \vdash
  (\varphi_0 \conj \varphi_1) \Rightarrow \varphi \jtrue
   \\\\ \Theta;\Sigma;\Gamma^L , u_1 , u_2, u_3,
   i, u; \Gamma,\Gamma_1;\Delta
             \vdash (\varphi_0 \conj \varphi_2 \imp \varphi') \jtrue
   \\\\ \Theta;\Sigma;\Gamma^L, u_1,u_3 ,
   i , u; \Gamma, y:\tau;\Delta\vdash
   \varphi_0[u_3/u_2] \imp \varphi' \jtrue
\\\\ \Theta;\Sigma;\Gamma^L,u;\Gamma \vdash \Pi   y{:}\tau.u_1.u_3.i.(x{:}\tau_1.\varphi,
   \varphi') \jok
\\ \fv(\cfix(f(y).c))\in\dom(\Gamma)
 }{ u; \Theta;\Sigma;\Gamma^L; \Gamma;\Delta  \vdash_Q \cfix(f(y).c) : \Pi
   y{:}\tau.u_1.u_3.i.(x{:}\tau_1.\varphi, \varphi') }
\)

Rule \rulename{Fix} simultaneously establishes the partial correctness
and invariant assertions of a fixpoint. The third and fourth premises
establish the partial correctness and invariant assertions of the body
$c$ of the fixpoint. The fifth premise checks that the specified
partial correctness assertion $\varphi$ is entailed by the conjunction
of the assertions of a silent thread and the assertion of the
body. The next two premises check the invariant assertion $\varphi'$.
For example, $\cfix\ f(x).\writea\ x\ 0; \reada\ x;\clete(f(x{+}1); z.\cret\ z)$
has the type:
\vspace{-5pt}
 \begin{tabbing}
 $\Pi x{:}\bt.u_b.u_e.i.($\=$y{:}\texp.\bot$,
\\\>$\forall u, l, v,$\=$u_b<u\leq u_e\conj\reada\ i\ l\ u$
\\\>\> $\imp\exists u', u'<u\conj\writea\ i\ l\ v\ u')$
\end{tabbing}
\paragraph{Expression typing}

\begin{figure}[t!]
\begin{mathpar}
\mprset{flushleft}
\inferrule*[right=Comp]{
 u_1, u_2, i;  \Theta;\Sigma;\Gamma^L;u_e,\Gamma;\Delta, u_1\geq u_e
\vdash_Q c : (x{:}\tau.\varphi_1, \varphi_2)
\\
   \Theta;\Sigma;\Gamma^L, u_e{:}\bt, u_1{:}\bt, u_2{:}\bt, i{:}\bt;\Gamma, x:\tau ;\Delta
\vdash \varphi_1 \Rightarrow \varphi'_1  \jtrue
\\\\
   \Theta;\Sigma;\Gamma^L,u_e{:}\bt, u_1{:}\bt, u_2{:}\bt, i{:}\bt ;\Gamma;\Delta
 \vdash \varphi_2 \imp \varphi'_2  \jtrue
 \\ \Theta;\Sigma;\Gamma^L,u_e{:}\bt;\Gamma\vdash
 \tpair{u_1.u_2.i}{x{:}\tau.\varphi'_1}{\varphi'_2}\jok
\\\fv(c)\subseteq\dom(\Gamma)
}{
u_e; \Theta;\Sigma;\Gamma^L;\Gamma;\Delta  \vdash_Q \ecomp(c) :
  \tcomp(\tpair{u_1.u_2.i}{x{:}\tau.\varphi'_1}{\varphi'_2})}
\and
\inferrule*[right=Eq]{u;\Theta;\Sigma;\Gamma^L;\Gamma;\Delta \vdash_Q e : \tau\\
\Theta;\Sigma;\Gamma^L,u;\Gamma;\Delta \vdash  e = e' \jtrue
\\ \fv(e')\subseteq\dom(\Gamma)}{
u;\Theta;\Sigma;\Gamma^L;\Gamma;\Delta \vdash_Q e' : \tau }
\and
\inferrule*[right=Confine]
{
 \mbox{$\varphi$ is trace composable}
\\ u_b, u_e, i; \Theta;\Sigma;\Gamma^L,u;\Gamma;\Delta \vdash \varphi\jonempty
\\ u_b{:}\bt, u_e{:}\bt, i{:}\bt \vdash
\varphi\jok
 \\\fa(e)=\emptyset
\\\fv(e)\subseteq\Gamma
 \\\cnfn{\tau}{u_b.u_e.i.\varphi}
 \\\cnfn{\Gamma}{u_b.u_e.i.\varphi}
}
{u;\Theta;\Sigma;\Gamma^L;\Gamma;\Delta  \vdash_{u_b.u_e.i.\varphi} e : \tau}
\and
\inferrule*[right=Conf-sub]
{u;\Theta;\Sigma;\Gamma^L;\Gamma;\Delta  \vdash e : \tau
\\ u_b{:}\bt, u_e{:}\bt, i{:}\bt \vdash \varphi\jok
}
{u;\Theta;\Sigma;\Gamma^L;\Gamma;\Delta  \vdash_{u_b.u_e.i.\varphi} e : \tau}
\end{mathpar}
\vspace{-10pt}
\caption{Selected expression typing rules}
\label{fig:typing-expression}
\end{figure}

Similar to the fixpoint, the expression typing judgment is
parameterized over a time point $u$, which is the earliest time point
that $e$ is evaluated. Recall that the typing rule for $\cret(e)$
types $e$ under the time point when $\cret(e)$ returns. This is
because $e$ can only be evaluated after $\cret(e)$ finishes. Most
expression typing rules are standard. A representative subset is
listed in Figure~\ref{fig:typing-expression}.

Rule \rulename{Comp} assigns a
monadic type to a suspended computation by checking the
computation. Since the suspended computation can only execute after
$u_e$, the logical context of the first premise can safely assume that
the beginning time point of $c$ is no earlier than $u_e$. As usual, the
rule also builds-in weakening of postconditions.

The rule \rulename{Eq}, motivated in Section~\ref{ssec:challenges},
assigns an expression $e'$, the type of $e$, if $e$ is syntactically equal to
$e'$.

The rule \rulename{Confine}, motivated in
Section~\ref{ssec:challenges}, allows us to type an expression from
the knowledge that it contains no actions and that its free variables
will be substituted with expressions with effect $\varphi$. The main
generalization from the simpler rule presented in
Section~\ref{ssec:challenges} is that now $\varphi$ is a predicate
over an interval and a thread in a trace, not just a predicate over
individual actions. The intuitive idea behind the rule is similar: If
$c$ is a computation that is free of actions and confined to use the
computations $c_1,\ldots,c_n$ for interaction with the shared state,
and each of the computations $c_1,\ldots,c_n$ maintain a trace
invariant $\varphi$ while they execute, then as $c$ executes, it
maintains $\varphi$.

Technically, because $\varphi$ also accepts as arguments any interval
on a trace (it has free variables $u_b, u_e$), we require that
$\varphi$ be \emph{trace composable}, meaning that if $\varphi$ holds
on two consecutive intervals of a trace, then it hold across the union
of the intervals. Formally, $\varphi$ is trace composable if $\forall
u_1, u_2, u_3,i. \; (\varphi(u_1, u_2,i) \conj \varphi(u_2, u_3,i))
\imp \varphi(u_1, u_3,i)$.  Further $\varphi$ has to hold on intervals
when thread $i$ is silent. This prevents us from derving arbitrary
properties of untrusted code. For instance, $\varphi$ cannot be
$\bot$. (No trace can satisfy the invariant $\bot$.) This rule relies
on checking that $\tau$ relates to the invariant $\varphi$,
represented as the relation $\cnfn{\tau} {u_b.u_e.i.\varphi}$. This
relation means that $\varphi$ is both the partial correctness
assertion and the invariant assertion in every computation type
$\tcomp(\eta_c)$ occurring in $\tau$. Similarly, $\Gamma$ is required
to map every free variable in $e$ to a type that satisfied the same
relation.
The conclusion is indexed by the invariant
$u_b.u_e.i.\varphi$ to record the fact that all substitutions for
variables in $\Gamma$ need to satisfy $\varphi$.

\begin{center}\vspace{-5pt}
\(
\inferrule{ }{
  \cnfn{b}{u_b.u_e.i.\varphi}}
\)

\vspace{5pt}

\(
\inferrule{ \cnfn{\tau_1}{u_b.u_e.i.\varphi}~~~~
\cnfn{\tau_2}{u_b.u_e.i.\varphi} }{
  \cnfn{\Pi
    \_{:}\tau_1. \tau_2}{u_b.u_e.i.\varphi}}
\)

\vspace{5pt}

\(
\inferrule{\cnfn{\tau}{u_b.u_e.i.\varphi} }{
 \cnfn{\tcomp(\tpair{u_b.u_e.i}{x{:}\tau.\varphi}{\varphi})}{u_b.u_e.i.\varphi}}
\)
\end{center}\vspace{-5pt}

The \rulename{Confine} rule itself does not stipulate any conditions
on the predicate $\varphi$, other than requiring that $\varphi$ be
trace composable. However, if $e$ is of function type, and expects
some interfaces as arguments, then in applying \rulename{Confine} to
$e$, we must choose a $\varphi$ to match the actual effects of those
interfaces, else the application of $e$ to the interfaces cannot be
typed.

The rule \rulename{Conf-Sub} constrains a regular typing derivation to
a specific invariant $u_b.u_e.i.\varphi$. This is sound because the
first premise does not require the substitutions for $\Gamma$ to
satisfy any specific invariant, so they can be narrowed down to any
invariant. The conclusion must be tagged with the invariant $\varphi$,
because: (1) $\tau$ could be a base type, in which case, the invariant is
not evident in $e$'s type; and (2) the types in $\Gamma$ are allowed to
contain nested effects that are not $\varphi$.
Reason (1) is also why the conclusion of the \rulename{Confine} rule
is indexed.

Finally, the time point enables expression types to include facts that
are established by programs executed earlier. For example, the return type of 
$\cletc(a_1; z.\cret(\ecomp (a_2)))$ can be the following, assuming
that the effect of action $a_1$ is $\pred{A}_1\ i\ u$, and $a_2$ is 
$\pred{A}_2\ i\ u$.
\vspace{-5pt}
\begin{tabbing}
 $\tcomp(u_b.u_e.i.($\=$r{:\bt}.\exists\ u, u_b{<}u{\leq} u_e
\conj \pred{A}_2\ i\ u\conj\exists j, u', u'{<}u
 \conj\pred{A}_1\ j\ u',$ 
\\\>$\top))$.
\end{tabbing}\vspace{-5pt}

We wouldn't have been able to know that $A_1$ happens before $A_2$
without the time point in the expression typing rules.

\paragraph{Logical Reasoning}
{\lang} includes a proof system for first-order logic, most of which
is standard.  We show here the rule \rulename{Honest}, which allows us
to deduce properties of a thread based on the invariant assertion of
the computation it executes.

\vspace{3pt}
\(
\inferrule*[right=Honest]{
   u_1, u_2, i; \Theta;\Sigma;\Gamma^L;\cdot; \Delta \vdash c :  \varphi
\\  \Theta;\Sigma;\Gamma^L; \cdot; \Delta \vdash \pred{start}(I, c, u)
\jtrue
\\ \Theta;\Sigma\vdash\Gamma^L, \Gamma \jok
 }{
 \Theta;\Sigma;\Gamma^L; \Gamma; \Delta  \vdash \forall u'{:}\bt.(u'{>}u)  \imp
 \varphi[u,u',I/u_1, u_2,i] \jtrue}
\)
\vspace{3pt}

If we know that a thread $\iota$ starts executing at time $u$ with
payload computation $c$ (premise $\pred{start}(\iota, c, u)$) and
computation $c$ has an invariant postcondition $\varphi$, then we can
conclude that at any later point $u'$, $\varphi$ holds for the
interval $(u,u']$. The condition that $c$ be typed under an empty
$\Gamma$ context is required by the soundness proofs, which we
discuss in Section~\ref{ssec:soundness}.

\subsection{Examples}
\label{ssec:typing-examples}
\label{ssec:formalexamples}

We prove the following state continuity property of Memoir. It
states that after the service has been initialized at time $u_i$ with
the key $\skey$, whenever we invoke the service with
$\state$ at a time point $u$, later than $u_i$,
 it must be the case that, the service
was either initialized or produced the state $\state$ at a time point
$u'$. Moreover, there is  no invokations of the service between $u'$
and $u$.
\vspace{-3pt}
\[
\begin{array}{@{}l}
~~\forall u_i, \state, \newstate, \skey, i_{init}, s_{init}\\
~~~~~~\sinit (i_{init}, \skey, \service, s_{init})@u_i \imp\\
~~~~~~~~~~\forall u>u_i.~\stry(i, \skey, \state, \newstate)@u \imp\\
~~~~~~~~~~~  \exists j, u' < u.~((\exists s. \sexec(j, \skey, s, \state)@u'\\
~~~~~~~~~~~~~~~~~~~~~~~~~~~~~~~~~~~~\lor \stry(j, \skey, \state)@u'  \\
~~~~~~~~~~~~~~~~~~~~~~~~~~~~~~~~~~~~\lor \sinit(j, \skey, \state)@u')\\
~~~~~~~~~~~~~~~~~~~~~~~~~~~\land (\forall j'.~ \neg \sexec(j', skey, \cdots)\jon(u',u)]))
\end{array}\vspace{-3pt}
\]

The expressiveness of the first-order logic enables us to specify the
above property, where the ordering of events is crucial.
For the full proofs, we refer the reader to our technical appendix.
We now revisit our discussion in Section~\ref{sec:example} and highlight critical uses
of the \lang program logic in the proof.
Recall that Memoir has two phases: service
initialization and service invocation.  During initialization,
we assume that the Memoir module $\runmodule$ (Figure~\ref{fig:snippet}) is
assigned NVRAM location $\Nloc$ and service $\service$. The
permission for accessing $\Nloc$ (which stores the secret key used to encrypt state and
the freshness tag) is set to the value of
PCR 17. This PCR stores a nested hash $\shash = H(h || \mathit{code\_hash}(\service))$.
Here, the term $H(x)$ denotes hash of $x$, $||$ denotes concatenation, $h$ is
any value and
$\mathit{code\_hash(x)}$ is a hash of the textual reification of program $x$.
After initialization, we prove the following two key invariants about executions of
$\runmodule$:

\begin{packedenumerate}

\item\emph{PCR Protection:}  The value of PCR 17 contains the
value $\shash$ only during late launch sessions running $\runmodule$.

\item\emph{Key Secrecy:} If the key corresponding to a service
is available to a thread, then it must have either generated it or read it from
$Nloc$.

\end{packedenumerate}
We prove these invariants using the \rulename{Honest} rule, which requires us to type
$\mi{runmodule}$. Since $\mi{runmodule}$ invokes $\mi{srvc}$,
we need to type $\mi{srvc}$. Recall that $\mi{srvc}$ is adversarially-supplied
code. Thus, in typing it we make use of the \rulename{CONFINE} and \rulename{EQ} rules.

For the first invariant, we derive the necessary
type for $\mi{srvc}$ by typing against the TPM interface.
The particular invariant type we wish to derive about $\mi{srvc}$ is that in a late launch
session if the value in the PCR has been set to a value that is not a prefix
of $\shash$, then $\mi{srvc}$ cannot change the value in the PCR to something that
is a prefix of $\shash$ (i.e., it cannot fool the NVRAM access control mechanism
into believing that $\mi{service}$ was loaded when it was not).

\(\vspace{3pt}
\begin{array}{@{}l}
 (\mi{srvc}~\mi{ExtendPCR}~\mi{ResetPCR}~\cdots)~(\mi{state}, \mi{req}) : \\
~~~~~~~~~~~~~~~~\tcmp(u_b,u_e,i.~\neg\pred{PCRPrefix}(\llpcr, \shash)@u_b \imp \\
~~~~~~~~~~~~~~~~~~~ \forall u \in (u_b, u_e].~(\pred{InLLSession}(u, \runmodule, i)  \\
~~~~~~~~~~~~~~~~~~~~~~~~~~~~~~~~~~~\imp \neg\pred{PCRPrefix}(\llpcr,\shash)@u)
\end{array}\vspace{3pt}
\)

To derive this type using the \rulename{Confine} rule, it is sufficient to show
that each function in the TPM interface can be assigned this type. For example,
the  $\mi{ExtendPCR}$ interface satisfies this invariant as it can only
extend a PCR value. This derivation is a key step in proving that the service
does not change the value of the PCR to a state that allows any entity other than
$\mi{runmodule}$ to read the NVRAM location $Nloc$ (i.e., the first invariant of
$\mi{srvc}$ in Section~\ref{ssec:challenges}).

Similarly, we can prove that the permissions on $Nloc$
are always tied to PCR 17 being $\shash$, by typing $srvc$ with the invariant
that the permissions on $Nloc$ cannot be changed.
Thus, whenever $\Nloc$ is read from, the value
of PCR 17 is $\shash$.
We also show separately that in any particular
instance of $\mi{runmodule}$ with $\srvc$, the state of PCR 17 must be $H(h ||
\mi{code\_hash}(\srvc))$ for some $h$. Therefore, by $\Nloc$'s access
control mechanism, we prove that $H(h||\mi{code\_hash}(\srvc)) = \shash$ and
therefore $\srvc = \service$ (where $=$ denotes syntactic equality).

This is a key step to proving the key secrecy invariant. It allows us to
transfer assumptions about the known Memoir service $service$ to
the adversarially-supplied service $srvc$. Specifically, we assume that
$service$ has the following type $\tau_{sec}$ (which means that if the input of
$service$ does not contain a secret $s$ then the output doesn't contain it) and
an invariant $\pred{KeepsSecret}(i, s, Nloc)$ (which means that $s$ is
not sent out on the network and the only NVRAM location $s$ possibly written to
is $Nloc$).

\(\vspace{3pt}
\begin{array}{@{}l}
  \tau_{sec} = \Pi i:\tmsg.~\tcmp(
  u_b, u_e, i.~\\
  ~~~~~~~(x:\tmsg. \forall s.~\neg\pred{Contains}(i, s) \imp \neg \pred{Contains}(x,s),\\
  ~~~~~~~~\forall s.~\neg \pred{Contains}(i, s) \imp \pred{KeepsSecret}(i, s, \mi{Nloc})\jon(u_b, u_e])) \\
\end{array}\vspace{3pt}
\)

Using the above assumption about $\mi{service}$ and the proof that $\mi{srvc} = \mi{service}$, we use \rulename{Eq} to derive the required type for $srvc$ (i.e., the second invariant of $\mi{srvc}$ discussed in Section~\ref{ssec:challenges}).

\section{Semantics and Soundness}
\label{sec:semantics}

We build a step-indexed semantic model~\cite{ahmed:esop06} for types
and prove soundness of \lang relative to that. Central to the
semantics is the notion of invariant. We build two sets of semantics:
one is a semanticsx for invariants of the form $u_b.u_e.i.\varphi$
($\reinv{u_b.u_e.i.\varphi}{}$), and the other is an invariant-indexed
semantics for types ($\rei{\tau}{}{u_b.u_e.i.\varphi}$). These two
sets coincide when $\cnfn{\tau}{u_b.u_e.i.\varphi}$ holds
(Lemma~\ref{lem:cnfn-confine-short}).

\subsection{A Step-indexed Semantics for Invariants}
We define $\rvinv{\Phi}{\trace;u}$, $\reinv{\Phi}{\trace;u}$,
$\rcinv{\Phi}{\trace;u}$ ($\Phi=u_b.u_e.i.\varphi$), the sets of
step-indexed normal forms, expressions, and computations that satisfy
the invariant $\varphi$ respectively.
$\trace$ is the trace that the term is evaluated on and $u$ is the
earliest time point when the term is evaluated.  These sets categorize
invariant-confined adversarial programs.

We first define the set of step-indexed computations that satisfy an
invariant $\varphi$ below. An indexed computation $(k,c)$ belongs to
this relation if the following holds: (1) during any interval $u_B$
and $u_E$ when thread $\iota$ executes $c$ on $\trace$,
$\varphi[u_B,u_E,\iota/u_b,u_e,i]$ holds on $\trace$ and (2) if $c$
completes at time $u_E$, then the expression that $c$ returns, indexed
by the remaining steps of the trace, satisfies the same invariant.
\vspace{-5pt}
\begin{tabbing}
  $\rcinv{u_b.u_e.i.\varphi }{\trace; u} =$  

\\$\{ (k,  c)~|~ $
 $\forall u_B, u_E, \iota, u\leq u_B\leq u_E,$
\\\quad\=
 let $\gamma=[u_B,u_E,\iota/u_1,u_2,i]$,
\\\>
 $j_b$ is the length of the trace from time $u_B$ to the end of
$\trace$
\\~\> $j_e$ is the length of the trace from time $u_E$ to the end of $\trace$
\\~\>$k \geq j_b > j_e$, 
\\\> the configuration at time $u_1$ is
$\steps{u_B}\sigma_{b}\rt \cdots, \langle\iota; x.c'::K;
c\rangle\cdots$
\\\> the configuration at time $u_E$ is
$\steps{u_E}\sigma_{e}\rt \cdots, \langle\iota; K;
c'[e'/x]\rangle\cdots$ 
\\\>between $u_B$ and $u_E$, the stack of thread $i$
always contains $x.c'{::}K$
\\\> $\Longrightarrow$ \=
 $(j_e, e') \in  \reinv{u_b.u_e.i.\varphi}{\trace; u_E} $
 and $\trace\vDash_\theta \varphi[e'/x]\}$ 
\\$\cap$
 $\{ (k, c)~|~ $
 $\forall u_B, u_E, \iota, u\leq u_B\leq u_E,$
 let $\gamma=[u_B,u_E,\iota/u_1,u_2,i]$,
\\\>
$j_b$ is the length of the trace from time $u_B$ to the end of
$\trace$,
\\~\> 
$j_e$ is the length of the trace from time $u_E$ to the end of $\trace$
\\~\> $k \geq j_b \geq j_e$, 
\\\> the configuration at time $u_B$ is
 ~~$\steps{u_B}\sigma_{b}\rt \cdots, \langle\iota; x.c'::K; c\rangle\cdots$
\\\> between $u_B$ and $u_E$ (inclusive), the stack of 
thread $i$ always 
\\\>contains prefix $x.c'{::}K$
\\\> $\Longrightarrow$
 $\trace\vDash_\theta \varphi\}$ 
 \end{tabbing}
\vspace{-5pt}

We explain some parts of the definition. 
At time $u_B$, thread
$\iota$ begins to run $c$, which is formalized by requiring that the
thread $\langle\iota; K; c\rangle$ is in the configuration right
after time $u_B$. At time $u_E$, $c$ returns an expression $e'$ to its
context, which is formalized by requiring that thread $\iota$'s top
frame is popped off the stack with $e'$ substituted for $x$, and that
the top frame remains unchanged between $u_B$ and $u_E$.
 Both $u_B$
and $u_E$ are within the last $k$ configurations of the trace because
the length of the trace is $n$ and $k \geq j_b >j_e$.  The earliest time point to interpret $e'$ is $u_E$, which
is when $e'$ is returned. The
index for the returned expression $e'$ is $j_e$, which is less
than~$k$. Hence, our step-indices count the number of remaining steps
in the trace. 
Moreover, these remaining steps include not just steps of the thread
containing $c$, but also other threads. This ensures the 
computation $c$'s postconditions hold even when it executes concurrently
with other threads (robust safety; Theorem~\ref{thm:robust}).
For the second set, $c$ must not have finished at $u_E$, so between $u_b$ and $u_e$, no
frame on the stack $x.c'::K$ should have been popped.


The relation $\rvinv{u_b.u_e.i.\varphi}{\trace;u}$ includes all normal
expressions that are not introduction forms (i.e. functions and
suspended computations). These normal forms cannot be further reduced
in any evaluation context, and therefore do not have any effects (they
are silent). A function is in this relation if, given arguments
maintaining the same invariant, the function body also maintains that
invariant. As is standard, the step-index of the argument is smaller
than that of the function because function application consumes a
step. The case of polymorphic functions is defined similarly.  A
suspended computation $\ecomp(c)$ belongs to this relation if $c$
belongs to the $\rcinv{u_b.u_e.i.\varphi}{\trace;u}$ relation defined
earlier.  \vspace{-5pt}
\begin{tabbing}
$\rvinv{u_b.u_e.i.\varphi}{\trace;u}~=$ $\{(k, \nf)~|~$
$\nf \neq \lambda x.e, \Lambda X.e, \ecomp(c)\} $
\\$\cup \{(k,\ecomp(c))~|~$
$(k, c)\in \rcinv{u_b.u_e.i.\varphi}{\trace;u}
\}$
\\ $\cup \{(k,\lambda x.e')~|~$\=$
 \forall j, u', j<k, u'\geq u$
\\\>
$ (j,e')\in\reinv{u_b.u_e.i.\varphi}{\trace;u'}$ 
\\\>
$\Longrightarrow(j,
e[e'/x])\in\reinv{u_b.u_e.i.\varphi}{\trace;u'}\}$
\\ $\cup$
$\{(k,\Lambda x.e)~|~\forall j, j<k$
$\Longrightarrow(j, e)\in\reinv{u_b.u_e.i.\varphi}{\trace;u}
\}$
\end{tabbing}\vspace{-5pt}

The definition of the $\reinv{u_b.u_e.i.\varphi}{\trace;u}$ relation
is standard: if $e$ evaluates to a normal form $\nf$ in $m$ steps,
then $\nf$ has to be in the value relation indexed by the number of
the remaining steps.
\vspace{-5pt}
\begin{tabbing}
$\reinv{u_b.u_e.i.\varphi}{\trace;u}$=
\\$\{(k, e)~|$\=
$\forall 0\leq m \leq k, e\rightarrow^{m} e'\nrightarrow$
\\\>$\Longrightarrow (n-m, e')\in \rvinv{u_b.u_e.i.\varphi}{\trace;u}\}$
\end{tabbing}\vspace{-5pt}

This relation includes all programs (including ill-typed ones) that
satisfy the invariant if executed in a context that satisfies that
invariant. This relation justifies the soundness of
\rulename{Confine} rule. Confined adversary-supplied code is in the
$\reinv{u_b.u_e.i.\varphi}{\trace; u}$ relation (Lemma~\ref{lem:inv-confine-short}).

\subsection{A Step-indexed Model for Types}
\label{ssec:model}
As programs include adversarial code, which requires its evaluation
context to maintain an invariant, the semantics of types need to be
indexed by invariants of the form $u_b.u_e.i.\varphi$.

\Paragraph{Types}
The interpretation of an expression type $\tau$ is a semantic type,
written $\cset$. Each $\cset$ is a set of pairs; each pair contains a
step-index and an expression. The expression has to be in normal form,
denoted $\nf$, that cannot be reduced further under call-by-name
$\beta$-reduction. $\cset$ contains the set of all possible indices
and all syntactically well-formed normal forms. This is used to
interpret the type $\texp$ of untyped programs.  As usual, we require
that $\cset$ be closed under reduction of step-indices.  Let
$\pset{S}$ denote the powerset of $S$. The set of all semantic types
is denoted $\ubertype$.  \vspace{-5pt}
\begin{tabbing}
 $\ubertype$ $\defeq$\= $\{\cset ~|~$\= $\cset\in \pset{\{(j, \nf) ~|~
    j\in\nat\}}\conj$
\\\>\> $(\forall k, \nf, (k, \nf)\in \cset \conj
j < k \Longrightarrow (j, \nf)\in \cset) \conj$
\\\>\> $(\forall k, \nf, \nf\neq\lambda x.e, \Lambda X.e, \ecomp(e)
\Longrightarrow (j, \nf)\in \cset)\}$
\end{tabbing}\vspace{-5pt}

\Paragraph{Interpretation of expression types}
We define the \emph{value and expression interpretations} of
expression types $\tau$ (written
$\rvi{\tau}{\theta;\trace;u}{\Phi}$ and
$\rei{\tau}{\theta;\trace;u}{\Phi}$), as well as the
\emph{interpretation} of computation types $\eta$ (written
$\rci{\eta}{\theta;\trace;u}{\Phi}$) simultaneously by
induction on types ($\Phi=u_b.u_e.i.\varphi$).
Let $\theta$ denote a partial map from type variables to $\ubertype$,
$\trace$ denote the trace that expressions are evaluated on, and $u$
denote the time point after which expressions are evaluated.
Figure~\ref{fig:re-inv-rel} defines the value and expression
interpretations. We omit the cases for $\texp$ and $X$.

\begin{figure*}[t!]
\(
\begin{array}{l@{~}c@{~}l}
\rvi{\bt}{\theta;\trace;u}{u_b.u_e.i.\varphi} &=&
\{(k, e)~|~ (k,e)\in\rvinv{u_b.u_e.i.\varphi}{\theta;\trace;u}\}
 \\
\rvi{\Pi x{:}\tau_1. \tau_2}{\theta;\trace;u}{u_b.u_e.i.\varphi} &=&
 \{(k,\lambda x.e) ~|~ 
\forall j< k,  \forall u', u'\geq u,
\forall e', (j,  e')  \in \rei{\tau_1}{\theta;\trace;u'}{u_b.u_e.i.\varphi}
\\& &\qquad\qquad
 \Longrightarrow (j, e_1[e'/x]) 
 \in\rei{\tau_2[e'/x]}{\theta;\trace;u'}{u_b.u_e.i.\varphi} \}\cup
\\ & &   
 \{(k,\nf) ~|~  \nf \neq \lambda x.e
 \Longrightarrow
 (k, \nf)\in\reinv{ u_b.u_e.i.\varphi}{\trace;u}\}
\\
 \rvi{\forall X.\tau}{\theta;\trace;u}{u_b.u_e.i.\varphi} &=& 
 \{(k,\Lambda X) ~|~ 
\forall j < k, \forall
 \cset\in \ubertype 
\Longrightarrow  (j, e')\in
 \rei{\tau}{\theta[X\mapsto\cset];\trace;u}{u_b.u_e.i.\varphi}\}\cup
\\ & &   
 \{(k,\nf) ~|~ \nf\neq\Lambda X.e
\Longrightarrow
 (k, \nf)\in\reinv{u_b.u_e.i.\varphi}{\trace;u}\}
\\
\multicolumn{3}{l}{\rvi{\tcomp(\tpair{u_1.u_2.i}{x{:}\tau.\varphi_1}{\varphi_2})}{\theta;\trace;u}{u_b.u_e.i.\varphi}~
=} 
\\\multicolumn{3}{l}{\quad\{(k,\ecomp(c))~|~
 \forall u_B, u_E, \iota, u\leq u_B\leq u_E,
\mbox{let}~\gamma=[u_B,u_E,\iota/u_1,u_2,i]}
\\\multicolumn{3}{l}{\qquad\qquad\qquad\quad
(k, c)\in 
\rci{x{:}\tau\gamma.\varphi_1\gamma}
{\theta;\trace;u_B, u_E, \iota}{ u_b.u_e.i.\varphi}
\cap \rci{\varphi_2\gamma}{\theta;\trace;u_B, u_E, \iota}{\_}
\}\cup}
\\\multicolumn{3}{l}{\quad
 \{(k,\nf) ~|~  \nf \neq\ecomp(c)
 \Longrightarrow
 (k, \nf)\in\reinv{u_1.u_2.i.\varphi}{\trace;u}\}
}
\\\\
\multicolumn{3}{l}{\rei{\tau}{\theta;\trace;u}{u_b.u_e.i.\varphi}~
=\{(k, e)~|~ 
\forall j < m, 
e\rightarrow^m_\beta e'\nrightarrow \Longrightarrow 
(k-m,e')\in\rvi{\tau}{\theta;\trace;u}{u_b.u_e.i.\varphi}
\}}
\end{array}
\)
\caption{Semantics for inv-indexed types}
\label{fig:re-inv-rel}
\end{figure*}

The interpretation of the base type $\bt$ is the same as
$\rvinv{\Phi}{\theta;\trace;u}$. The type $\bt$ itself
doesn't specify any effects, and, therefore, expressions in the
interpretation of $\bt$ only need to satisfy the invariant $\Phi$.
The interpretation of the function type $\Pi x{:}\tau_1.\tau_2$ is
nonstandard: the substitution for the variable $x$ is an expression,
not a value. This simplifies the proof of soundness of function
application: since {\lang} uses call-by-name $\beta$-reduction, the
reduction of $e_1\,e_2$ need not evaluate $e_2$ to a value before it
is applied to the function that $e_1$ reduces to. Further, the
definition builds-in both step-index downward closure and time delay:
given any argument $e'$ that has a smaller index $j$ and evaluates
after $u'$, which is later than $u$, the function application belongs
to the interpretation of the argument type with the index $j$ and time
point $u'$.
The interpretation of the function type also includes normal forms
that are not $\lambda$ abstractions that are in the
$\rvinv{u_b.u_e.i.\varphi}{\theta;\trace;u}$ relation. These are
adversary-supplied untyped code, which is required by our type system
to satisfy the invariant $u_b.u_e.i.\varphi$.

The interpretation of the monadic type includes suspended
computations $(k, \ecomp(c))$ such that $(k, c)$ belongs to the interpretation
of computation types, defined below. Because $c$ executes
after time $u$, the beginning and ending time points selected for
evaluating $c$ are no earlier than $u$. Similar to the interpretation
of the function type, the interpretation of the monadic type also
includes normal forms that are not monads, but satisfy the
invariant $u_b.u_e.i.\varphi$. 
The interpretation of the $\texp$ type contains all
normal forms.

We lift the value interpretation 
$\rvi{\tau}{\theta;\trace;u}{\Phi}$
to the expression
interpretation $\rei{\tau}{\theta;\trace;u}{\Phi}$
 in a standard way.

\Paragraph{Interpretation of formulas}
Formulas are interpreted on traces. We write $\trace \vDash
\varphi$ to mean that $\varphi$ is true on trace $\trace$. 
\vspace{-3pt}
\[
\begin{array}{lcl}
\trace \vDash P\;\vec{e} & ~\mbox{iff}~ & P\;\vec{e}  \in\varepsilon(\trace)
\\ 
\trace \vDash \pred{start}(I, c, U)
 & ~\mbox{iff}~ & 
 \mbox{thread $I$ has $c$ as the active} 
\\ & &\mbox{computation with an
   empty stack} 
\\ & & \mbox{at time $U$ on $\trace$}
\\
\trace \vDash \forall x{:}\tau.\varphi & ~\mbox{iff}~ &
\forall e, e \in\interp{\tau}~\mbox{implies}~
 \trace\vDash \varphi[e/x]
\end{array}\vspace{-3pt}
\]

We assume a valuation function $\varepsilon(\trace)$ that returns the
set of atomic formulas that are true on the trace $\trace$.  For
first-order quantification, we select terms in the denotation of the
types ($\interp{\tau}$), which is defined as follows:
\vspace{-3pt}
\[
\begin{array}{l@{~}c@{~}l}
 \interp{\texp} &=&  \{e ~|~ e~\mbox{is an expression}\}
\\
\interp{\bt} &=&
\{e~|~ e\rightarrow^* \const\}
 \\
\interp{\Pi x{:}\tau_1. \tau_2} &=&
 \{\lambda x.e ~|~ 
\forall e',e'  \in \interp{\tau_1}
 \Longrightarrow e_1[e'/x]
 \in\interp{\tau_2}\}
\end{array}\vspace{-3pt}
\]

The types of the logical variables can only be $\bt$, $\texp$ and
function types. The interpretation of these types is much simpler
than that of expressions.

\Paragraph{Interpretation of computation types}
The interpretation of a computation type, $\rci{ x{:}\tau.\varphi }
{\theta;\trace; \Xi}{u_b.u_e.i.\varphi_1}$, is a set of step-indexed
computations. The trace $\trace$ contains the execution of the
computation. $\Xi = u_b,u_e,i$ has its usual meaning, except that
$u_b$, $u_e$, and $i$ are concrete values, not variables.

We define the semantics of the partial correctness type, denoted $\rci{ x{:}\tau.\varphi
}{\theta; \trace; \Xi}{u_b.u_e.i.\varphi_1}$, below.
Informally, it contains the set of indexed computations
$c$, if $\trace$ contains a complete execution of the computation $c$
in the time interval $(u_b,u_e]$ in thread $\iota$ such that $c$
returns $e'$ at time $u_e$ and it is also the case that $\trace$
satisfies $\varphi[e'/x]$ and that $e'$ has type $\tau$ semantically.
Similar to the $\rcinv{\Phi}{\trace;u}$ relation,
these remaining steps include not just steps of the thread
executing $c$, but also other threads. 
The invariant ${u_b.u_e.i.\varphi_1}$ is used in the
specification of the return value. 

\vspace{-5pt}
\begin{tabbing}
  $\rci{ x{:}\tau.\varphi }
  {\theta;\trace; u_1, u_2, i}{u_b.u_e.i.\varphi_1} =$  $\{ (k,
  c)~|~ $\\
~ \=
 $j_b$ is the length of the trace from time $u_1$ to the end of
$\trace$
\\~\> $j_e$ is the length of the trace from time $u_2$ to the end of $\trace$
\\~\>$k \geq j_b > j_e$, 
\\\> the configuration at time $u_1$ is
$\steps{u_1}\sigma_{b}\rt \cdots, \langle\iota; x.c'::K;
c\rangle\cdots$
\\\> the configuration at time $u_2$ is
$\steps{u_2}\sigma_{e}\rt \cdots, \langle\iota; K;
c'[e'/x]\rangle\cdots$ 
\\\>between $u_1$ and $u_2$, the stack of thread $i$
always contains $x.c'{::}K$
\\\> $\Longrightarrow$ \=
 $(j_e, e') \in  \rei{\tau}{\theta;\trace; u_2}{u_b.u_e.i.\varphi_1} $
\\\>\>  and $\trace\vDash \varphi[e'/x]\}$ 
 \end{tabbing}
\vspace{-5pt}

The interpretation for the invariant assertions is defined similarly,
and we omit its definition.  Because $c$ is being evaluated and
produces no return value, the interpretation need not be indexed by an
invariant. We write \_ in place of the invariant.



\subsection{Examples}
\label{ssec:semantics-examples}

We illustrate some key points of our semantic model. We
instantiate the {\next} function (Section~\ref{sec:language}) for the
\reada 
action as follows:
\vspace{-5pt}
\[
\next(\sigma, \reada\ e_1\ e_2) =
\left\{\begin{array}{ll}
 (\sigma, \sigma(\ell))  & \ell\in\dom(\sigma) 
\\
(\sigma, \stuck) & \ell\notin\dom(\sigma)
\end{array}
\right.
\]\vspace{-5pt}

Predicate
$\pred{stuck}\ \iota\ u$ is true when thread $\iota$ is in the
$\stuck$ state at time $u$. 
The first example below shows the semantic specification of the \reada
action. The partial correctness assertion states that as long as the
location $l$ being read is allocated when the \reada happens, the
thread does not get stuck and the expression $y$ returned by \reada is
the in-memory content $v$ of the location read. The
invariant assertion states that between the time the \reada action
becomes the redex and the time it reduces, the thread is not stuck.

\vspace{-3pt}
\begin{packedenumerate}
 \item \begin{tabbing}
 $(n, \cact(\reada\ e))\in$ 
 \\ $\rci{y{:}\texp.
   \forall l, v,$\= $\pred{mem}\ l\ v\ u_2 
   \conj \pred{eval}\ e\ l
   \imp$
\\\>
   $(y = e)\conj\neg\pred{stuck}\ i @(u_1, u_2]}{\theta; \trace; u_1,u_2,i}{\Phi}$
 \end{tabbing}

\item $\rci{\forall j, l, e, t. (\neg\pred{Write}\ j\ l\ e\ t)}
{\theta;\trace; u_1, u_2, i}{\Phi} = \emptyset$ 
\end{packedenumerate}

The second example states that
the interpretation of the invariant computation type
($\forall j, l, e, t. (\neg\pred{Write}\ j\ l\ e\ t)$), which
states that no thread performs a write action at any time, is the empty
set. This is because
the semantics of invariant assertions require that \emph{any} trace
containing the execution of such a computation satisfy this
invariant. A trivial counterexample is a trace containing a second
thread that writes to memory.

\subsection{Soundness of the Logic}
\label{ssec:soundness}

We prove that our type system is sound relative to the semantic model
of Section~\ref{ssec:model}. 
We start by defining valid substitutions for contexts.  We
write $\rd{\Theta}{}$ to denote the set of valid semantic
substitutions for $\Theta$.
We write $\rgi{\Gamma}{\theta;\trace;u}{\Phi}$ to denote a set of
substitutions for variables in $\Gamma$.  
Each indexed substitution is
a pair of an index and a substitution $\gamma$ for 
variables.

We first prove two key lemmas. Lemma~\ref{lem:cnfn-confine-short}
states that when all the effects in $\tau$ are $u_b.u_e.i.\varphi$,
then the interpretation of $\tau$ is the same as the interpretation of
the invariant $u_b.u_e.i.\varphi$. The proof is by induction on the
structure of $\tau$.

\begin{lem}[Indexed types are confined]
~\label{lem:cnfn-confine-short}
$\cnfn{\tau}{u_b.u_e.i.\varphi}$ implies
$\rei{\tau}{\theta;\trace;u}{u_b.u_e.i.\varphi} =
\reinv{u_b.u_e.i.\varphi}{\trace;u}$. 
\end{lem}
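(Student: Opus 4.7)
My plan is to prove the equality by induction on the structure of $\tau$. Since both $\rei{\cdot}{\theta;\trace;u}{\Phi}$ and $\reinv{\Phi}{\trace;u}$ are obtained by the same lifting schema from their underlying value relations $\rvi{\cdot}{\theta;\trace;u}{\Phi}$ and $\rvinv{\Phi}{\trace;u}$, it suffices to establish the corresponding equality on values for each shape of $\tau$ permitted by the hypothesis $\cnfn{\tau}{u_b.u_e.i.\varphi}$. Only four cases arise (base, function, polymorphic, computation), since the cnfn predicate as defined does not cover $\texp$ or type variables.

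The base case $\tau = \bt$ is immediate: by definition, $\rvi{\bt}{\theta;\trace;u}{\Phi}$ is literally $\rvinv{\Phi}{\trace;u}$. For $\tau = \Pi \_{:}\tau_1.\tau_2$, the cnfn rule supplies $\cnfn{\tau_i}{\Phi}$ for $i=1,2$, so the inductive hypothesis gives $\rei{\tau_i}{\theta;\trace;u'}{\Phi} = \reinv{\Phi}{\trace;u'}$. The semantic definition of $\rvi{\Pi\_{:}\tau_1.\tau_2}{\theta;\trace;u}{\Phi}$ is the union of (i) a ``proper'' part containing $(k,\lambda x.e)$ whose behaviour on arguments in $\rei{\tau_1}$ lands in $\rei{\tau_2}$, and (ii) a residual part of non-$\lambda$ normal forms in $\reinv{\Phi}{\trace;u}$. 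Rewriting (i) by the IH collapses it exactly to the lambda clause of $\rvinv{\Phi}{\trace;u}$, and clause (ii) coincides with the non-introduction-form and $\ecomp(\cdot)$ clauses of $\rvinv{\Phi}{\trace;u}$ by definition. The polymorphic case $\forall X.\tau$ is analogous, using that $\rvinv{\Phi}{\trace;u}$ is independent of $\theta$, so the IH applied at $\theta[X\mapsto \cset]$ yields the same relation for every choice of semantic type $\cset$.

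The monadic case $\tau = \tcomp(\tpair{u_b.u_e.i}{x{:}\tau_0.\varphi}{\varphi})$ is the crux. The cnfn rule forces both postconditions to be literally the $\varphi$ of $\Phi$ and supplies $\cnfn{\tau_0}{\Phi}$. Unfolding $\rvi{\tcomp(\cdots)}{\theta;\trace;u}{\Phi}$, a pair $(k,\ecomp(c))$ lies in the proper part precisely when, for every interval $(u_B,u_E]$ and thread $\iota$, we have $(k,c)\in \rci{x{:}\tau_0\gamma.\varphi\gamma}{\theta;\trace;u_B,u_E,\iota}{\Phi} \cap \rci{\varphi\gamma}{\theta;\trace;u_B,u_E,\iota}{\_}$. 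I then unfold these two partial-correctness and invariant interpretations and match them, clause by clause, against the two sets whose intersection defines $\rcinv{\Phi}{\trace;u}$. The stack-shape conditions and the step-index constraints ($k\ge j_b > j_e$ in the returning case, $k\ge j_b \ge j_e$ in the non-returning case) are literally the same on both sides, and the only remaining discrepancy is that $\rci{x{:}\tau_0\gamma.\varphi\gamma}{\cdots}{\Phi}$ demands $(j_e,e')\in\rei{\tau_0}{\theta;\trace;u_E}{\Phi}$ whereas $\rcinv{\Phi}{\trace;u}$ demands $(j_e,e')\in\reinv{\Phi}{\trace;u_E}$; the IH on $\tau_0$ closes this gap. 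The residual clause of non-$\ecomp$ normal forms in $\reinv{\Phi}{\trace;u}$ lines up with the corresponding clause in $\rvi{\tcomp(\cdots)}{\cdots}{\Phi}$ directly.

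The main obstacle I anticipate is precisely this monadic case, since aligning the two-postcondition semantics of $\rvi{\tcomp(\cdots)}{}{\Phi}$ with the single-invariant definition of $\rcinv{\Phi}{\trace;u}$ requires careful bookkeeping of the interval parameters $u_B,u_E,\iota$, the trace-suffix step indices $j_b,j_e$, and the substitution $\gamma$. A minor secondary obstacle is verifying that $\cnfn{\cdot}{\Phi}$ is preserved under expression substitution in the dependent cases, so that the IH may be applied to $\tau_2[e'/x]$ when needed; this reduces to observing that cnfn is defined by purely structural recursion on the shape of $\tau$ and does not constrain the formulas syntactically occurring inside beyond their equality with $\varphi$, which substitution preserves uniformly on both sides.
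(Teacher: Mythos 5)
Your proposal is correct and follows essentially the same route as the paper's proof: induction on the structure of $\tau$, with the base case immediate from the definitions, the function (and residual normal-form) clauses handled via the inductive hypothesis on $\tau_1,\tau_2$, and the monadic case discharged by unfolding both computation-type interpretations and using the inductive hypothesis on the return type to reconcile $\rei{\tau_0}{\theta;\trace;u_E}{\Phi}$ with $\reinv{\Phi}{\trace;u_E}$ (the paper factors this last step out as a third mutually-inductive clause relating $\rci{\cdot}{\cdot}{\Phi}$ to $\rcinv{\Phi}{\trace;u}$, but the content is identical). The only cosmetic discrepancy is that the $\cnfn{\cdot}{\cdot}$ rules as given do not actually generate a $\forall X.\tau$ case, so that branch of your argument is vacuous.
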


The following lemma states that if $e$
does not contain any actions, then $e$, with its free variables
substituted by expressions that satisfy an invariant
$u_b.u_e.i.\varphi$, satisfies the same invariant. The proof is by
induction on the structure of $e$. 

\begin{lem}[Invariant confinement]~\label{lem:inv-confine-short}
If $\varphi$ is composable, 
and  thread $\iota$ silent between time $u_B$
and $u_E$ implies $\trace\vDash\varphi[u_B,u_E,I/u_b,u_e,i]$,
then $\fa(e)=\emptyset$, $\fv(e)\in\dom(\gamma)$,
and $(n,\gamma)\in\reinv{u_b.u_e.i.\varphi}{\trace;u}$
imply $(n, e\gamma)\in\reinv{u_b.u_e.i.\varphi}{\trace;u}$.

\end{lem}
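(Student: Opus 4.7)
I would proceed by strong induction on the step index $n$, with an inner structural induction on $e$ mutually with a structural induction on $c$ for the suspended-computation case. I read the hypothesis $(n,\gamma)\in\reinv{u_b.u_e.i.\varphi}{\trace;u}$ pointwise: for every $x\in\dom(\gamma)$, $(n,\gamma(x))\in\reinv{u_b.u_e.i.\varphi}{\trace;u}$. Let $\Phi$ abbreviate $u_b.u_e.i.\varphi$ throughout.

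For $e=x$, the conclusion is immediate from the pointwise hypothesis. For constants and other normal forms that are not introduction forms, the thread evaluating them stays silent over any enclosing interval, so the silent-hypothesis on $\varphi$ yields the invariant. For introduction forms I unfold the value-relation clause of $\rvinv{\Phi}{\trace;u}$: given $\lambda x.e_0$ and any $(j,e')\in\reinv{\Phi}{\trace;u'}$ with $j<n$ and $u'\geq u$, the outer IH applied to $e_0$ with the extended substitution $\gamma[x{\mapsto}e']$ (still pointwise invariant-satisfying at index $j$) and $\fa(e_0)=\emptyset$ yields $(j,e_0\gamma[e'/x])\in\reinv{\Phi}{\trace;u'}$; polymorphic abstractions are analogous. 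Elimination forms reduce to the introduction-form case by applying the IH to the head and then the value-relation condition to the argument.

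The bulk of the work is in the suspended-computation case $e=\ecomp(c)$, which amounts to showing $(n,c\gamma)\in\rcinv{\Phi}{\trace;u}$ by a sub-induction on $c$. The case $\cact(a)$ is vacuous since $\fa(c)=\emptyset$; $\cret(e_0)$ performs only silent reductions before returning, so both the partial-correctness and the invariant conjuncts follow from the silent-hypothesis together with the outer IH applied to $e_0$ for the returned expression; $\cfix\,f(y).c_0$ unrolls into a body in which the recursive call is placed behind a suspension whose value-relation condition is supplied by the outer step-indexed IH at a strictly smaller index; conditionals and the application of a fixpoint computation to an expression combine the outer IH on sub-terms with the silent-hypothesis for the head reduction.

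The principal obstacle is the sequencing cases $\cletc(c_1,x.c_2)$ and $\clete(e_1,x.c_2)$. For an interval $(u_B,u_E]$ over which the sequence is observed, I would split it at the time $u_M$ at which $c_1$ returns some $e'$ (when it does) into a silent stack-push prefix, a $c_1$-segment on $(u_B,u_M]$, and a $c_2$-segment on $(u_M,u_E]$. The IH on $c_1$ over its segment yields $\varphi$ there and furnishes $e'$ in the expression relation at the residual index; extending $\gamma$ with $x\mapsto e'$ and invoking the IH on $c_2$ yields $\varphi$ on the last segment. Trace-composability of $\varphi$, together with the silent-hypothesis for the interstitial bookkeeping reductions, then stitches these segment-wise invariants into a single invariant on the full interval $(u_B,u_E]$. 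The branch in which $c_1$ never returns within $(u_B,u_E]$ is handled directly by the invariant conjunct of the IH on $c_1$, again appealing to composability to absorb the silent prefix. Throughout, the step-index accounting must match the diminishing suffix-lengths $j_b$ and $j_e$ appearing in the definition of the computation invariant relation.
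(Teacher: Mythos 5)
Your proposal is correct and follows essentially the same route as the paper's proof: a mutual induction over the structure of expressions and computations (with the step index decreasing at the fixpoint unrolling), using the silent-thread hypothesis for non-effectful reductions, the non-introduction-form clause of the value relation for stuck elimination forms, and trace composability to stitch the segment-wise invariants across sequencing and fixpoint intervals. The only cosmetic difference is that you take the step index as the outer induction measure where the paper inducts on term structure and sub-inducts on $n$ only for the fixpoint case; the two orderings justify exactly the same appeals to the hypothesis.
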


The soundness theorem (Theorem~\ref{thm:soundness-short}) has two
different statements for judgements with the empty qualifier and the
invariant qualifier. The ones for judgments with an empty qualifier
state that for any invariant $\Phi$, if the substitution for $\Gamma$
belongs to the interpretation of types, then the expression
(computation) belongs to the interpretation of its type, indexed by
the same invariant $\Phi$. For judgments qualified by a specific
invariant $\Phi$, the soundness theorem statements are also specific
to that $\Phi$.

\dg{Use of bullets in this theorem is awkward. Each bullet is not an independent statement.}

\begin{thm}[Soundness]
~\label{thm:soundness-short}

Assume that $\forall A::\alpha \in\Sigma$, 
$\forall \Phi, \trace, n, u, (n,
A)\in\rai{\alpha}{\cdot;\trace;u}{\Phi}$, 
\begin{enumerate}
\item 
\begin{enumerate}
\item  
 $ \ee::u:\bt; \Theta;\Sigma;\Gamma^L;\Gamma;\Delta \vdash_\Phi e : \tau$, 
 $\forall \theta\in\rd{\Theta}{}$, 
 $\forall \gamma^L\in\interp{\Gamma^L}$,
 $\forall U, U', U'\geq U$, let $\gamma_u=[U/u]$, 
 $\forall \trace$,
 $\forall n, \gamma$,   $(n; \gamma)\in \rgi{\Gamma\gamma_u\gamma^L}{\theta;\trace;U'}{\Phi}$,
 $\trace\vDash\Delta\gamma\gamma_u\gamma^L$
 implies $(n; e\gamma) \in \rei{\tau\gamma\gamma_u\gamma^L}
 {\theta;\trace;U'}{\Phi}$
\item 
 $ \ee:: u_1, u_2, i; \Theta;\Sigma;\Gamma^L;\Gamma;\Delta \vdash_\Phi c : \eta$, 
 $\forall$ $u$, $u_B$, $u_E$, $\iota$ s.t. $u\leq u_B\leq u_E$,  
let $\gamma_1 = [u_B, u_E,\iota/u_1,u_2,i]$
 $\forall \theta\in\rd{\Theta}{}$, 
 $\forall \gamma^L\in\interp{\Gamma^L}$,
 $\forall \trace$,
   $\forall n, \gamma, (n; \gamma)\in
 \rgi{\Gamma\gamma_1\gamma^L}{\theta;\trace;u}{\Phi}$,
 $\trace\vDash\Delta\gamma\gamma_1\gamma^L$ 
implies $(n; c\gamma) \in \rci{{\eta}\gamma\gamma_1\gamma^L}{\theta;
   \trace; u_B,u_E,\iota}{\Phi}$
\end{enumerate}
\item 
\begin{enumerate}
\item  
 $ \ee::u:\bt; \Theta;\Sigma;\Gamma^L;\Gamma;\Delta \vdash e : \tau$, 
 $\forall \theta\in\rd{\Theta}{}$, 
 $\forall \gamma^L\in\interp{\Gamma^L}$,
 $\forall U, U', U'\geq U$, let $\gamma_u=[U/u]$, 
  $\forall \trace$, $\forall\Phi$,
 $\forall n, \gamma$,   $(n; \gamma)\in \rgi{\Gamma\gamma_u\gamma^L}{\theta;\trace;U'}{\Phi}$,
 $\trace\vDash\Delta\gamma\gamma_u\gamma^L$
 implies $(n; e\gamma) \in \rei{\tau\gamma\gamma_u\gamma^L}
 {\theta;\trace;U'}{\Phi}$
\item 
 $ \ee:: u_1, u_2, i; \Theta;\Sigma;\Gamma^L;\Gamma;\Delta \vdash c : \eta$, 
 $\forall$ $u$, $u_B$, $u_E$, $\iota$ s.t. $u\leq u_B\leq u_E$,  
let $\gamma_1 = [u_B, u_E,\iota/u_1,u_2,i]$
 $\forall \theta\in\rd{\Theta}{}$, 
 $\forall \gamma^L\in\interp{\Gamma^L}$,
 $\forall \trace$, $\forall\Phi$,
   $\forall n, \gamma, (n; \gamma)\in
 \rgi{\Gamma\gamma_1\gamma^L}{\theta;\trace;u}{\Phi}$,
 $\trace\vDash\Delta\gamma\gamma_1\gamma^L$ 
implies $(n; c\gamma) \in \rci{{\eta}\gamma\gamma_1\gamma^L}{\theta;
   \trace; u_B,u_E,\iota}{\Phi}$
\item 
 $\ee::\Theta;\Sigma;\Gamma^L;\Gamma;\Delta\vdash \varphi
  \jtrue$,
 $\forall \theta\in\rd{\Theta}{}$, 
 $\forall \gamma^L\in\interp{\Gamma^L}$,
 $\forall \trace$, $\forall \Phi$, 
   $\forall n, \gamma,u, (n; \gamma)\in
  \rgi{\Gamma\gamma^L}{\theta;\trace;u}{\Phi}$,
 $\trace \vDash \Delta\gamma^L\gamma$
 implies
 $\trace \vDash \varphi\gamma^L\gamma$
\end{enumerate}
\end{enumerate}
\end{thm}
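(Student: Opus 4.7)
The plan is to prove the theorem by simultaneous induction on the structure of the five derivations (expression typing with qualifier $\Phi$, computation typing with qualifier $\Phi$, expression typing, computation typing, and assertion validity). The induction needs to be mutual because, for instance, \rulename{Comp} reduces an expression typing goal to a computation typing goal, while \rulename{SeqC} reduces a computation typing goal to several expression typings, computation typings, and validity judgments in a side-condition. For each case I would fix an arbitrary step-index $n$, trace $\trace$, substitutions $\theta$, $\gamma^L$, $\gamma$ of the required shapes, and reduce the goal to showing membership in the appropriate semantic relation; the bulk of the work is then a careful unfolding of $\rei{\cdot}{}{}$ and $\rci{\cdot}{}{}$ and application of the inductive hypotheses at smaller indices.

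The conceptually easy but still delicate cases are \rulename{Eq} and \rulename{Confine}. For \rulename{Eq}, I use the inductive hypothesis on $e:\tau$ to get $(n,e\gamma)\in\rei{\tau}{\theta;\trace;U'}{\Phi}$, and the validity premise $\vdash e = e'\jtrue$ — interpreted via the valuation $\varepsilon(\trace)$ and the semantics of equality — forces $e\gamma$ and $e'\gamma$ to be syntactically equal under any valid substitution, so membership transfers directly. For \rulename{Confine}, I first use Lemma~\ref{lem:cnfn-confine-short} to rewrite the goal $(n, e\gamma)\in\rei{\tau\gamma}{\theta;\trace;U'}{u_b.u_e.i.\varphi}$ as a goal about $\reinv{u_b.u_e.i.\varphi}{\trace;U'}$, so the type $\tau$ effectively disappears. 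Then I use the substitution hypothesis $(n,\gamma)\in\rgi{\Gamma\gamma^L}{\theta;\trace;U'}{\Phi}$ together with $\cnfn{\Gamma}{\Phi}$ (via the same lemma) to conclude that each component of $\gamma$ lies in $\reinv{\Phi}{\trace;U'}$, and finally invoke Lemma~\ref{lem:inv-confine-short} with $\fa(e)=\emptyset$ to close the case. The silence premise of \rulename{Confine} and the trace-composability side-condition are precisely what Lemma~\ref{lem:inv-confine-short} needs, so the match is clean.

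The sequencing rules \rulename{SeqC}, \rulename{SeqCComp}, \rulename{SeqCI} require a careful accounting of step-indices: a concrete trace $\trace$ witnesses one of three execution patterns (neither subcomputation has started, $c_1$ is running, $c_2$ is running), and I must pick the right intermediate time points $u_1,u_2$ from the trace itself. Using the monotonicity of step-indices (the number of remaining steps only decreases) and the fact that the semantics of $\rci{\cdot}{}{}$ is defined over arbitrary starting subintervals within the last $k$ configurations, the inductive hypothesis applied to each segment yields the required postconditions $\varphi_0,\varphi_1,\varphi_2$, and the validity premises of the rule combine them into $\varphi$. For actions (\rulename{Act}), I would appeal directly to the signature assumption $(n, A)\in\rai{\alpha}{\cdot;\trace;u}{\Phi}$, peel off argument applications, and combine with the silent premise to obtain the invariant assertion.

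The main obstacle I anticipate is the \rulename{Fix} case, which requires an inner induction on the step-index to break the circular dependency between typing the body (which assumes $f$ already has the fixpoint type) and establishing that type in the first place. Concretely, I would show by induction on $k$ that $(k, \cfix\,f(y).c)$ lies in the value interpretation of $\Pi y{:}\tau.\,u_1.u_3.i.(x{:}\tau_1.\varphi,\varphi')$: the base case $k=0$ is vacuous, and for the step case, after the reduction $\rulename{R-Fix}$ substitutes a strictly smaller-index suspension of $f$, we can use the inner IH to place that substitution in the interpretation and then apply the outer IH on the typing derivation of the body. Coordinating this step-indexed Löb-style argument with the partial-correctness and invariant postconditions simultaneously (as the rule does) is the delicate bookkeeping, and likely where the most care is needed. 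The \rulename{Honest} rule is the only other subtle case; its restriction to empty $\Gamma$ is precisely what lets the IH fire with a vacuous substitution so that any valid trace witnessing $\pred{start}(I,c,u)$ produces a running thread whose computation is in $\rci{\varphi}{}{}$, from which the desired invariant follows.
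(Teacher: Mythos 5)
Your proposal matches the paper's proof essentially step for step: induction on typing derivations with a sub-induction on step-indices for \rulename{Fix}, the combination of Lemma~\ref{lem:cnfn-confine-short} and Lemma~\ref{lem:inv-confine-short} (applied in exactly the order you describe) for \rulename{Confine}, the three-segment time-point decomposition for the sequencing rules, and the empty-$\Gamma$ restriction enabling the index-mismatched instantiation of the IH in \rulename{Honest}. The only quibble is that the $k=0$ base of the fixpoint sub-induction is not quite vacuous --- one still needs the silence premise to discharge the invariant on the degenerate interval --- but this does not affect the soundness of your approach.
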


We prove the soundness theorem by induction on typing derivations and
a subinduction on step-indices for the case of fixpoints.

The proof of soundness of the rule \rulename{Confine} (\textit{2.(a)}) first uses
Lemma~\ref{lem:cnfn-confine-short} to show that a substitution
$\gamma$ for $\Gamma$, where $\gamma$ maps each variable in $\Gamma$ to
the type interpretation of $\Gamma(x)$ is also a substitution
where  $\gamma(x)$ belongs to the interpretation of the invariant. Then
we use Lemma~\ref{lem:inv-confine-short} to show that the untyped term $e\gamma$
belongs to the interpretation of the invariant. Applying
Lemma~\ref{lem:cnfn-confine-short}  again, we can show that $e\gamma$
is in the interpretation of $\tau$. The $\mathit{confine}$ relations in
the premises are key to this proof.
The proof of the rule \rulename{Conf-Sub} uses the induction hypothesis
directly: a derivation with an empty qualifier can pick substitutions
with any invariant $\varphi$.

To prove the soundness of \rulename{Honest}, we need to show that
given any substitution $(n,\gamma)$ for $\Gamma$, the trace satisfies
the invariant of $c$. From the last premise of \rulename{Honest}, we
know that $c$ starts with an empty stack. $c$ can never return because
there is no frame to be popped off the empty stack. Therefore, at any
time point after $c$ starts, the invariant of $c$ should
hold. However, the length of the trace after $c$ starts, denoted $m$,
is not related to $n$. To use the induction hypothesis, we need to use
substitution $(m,\gamma)$ for $\Gamma$. 
Because $\Gamma$ is empty, we
complete the proof by using the induction hypothesis on the first
premise given an empty substitution $(m,\cdot)$.

An immediate corollary of the soundness theorem is the following
robust safety theorem, which states that the invariant assertion of a
computation $c$'s postcondition holds even when $c$ executes
concurrently with other threads, including those that are
adversarial. The theorem holds because we account for adversarial
actions in the definition of
$\rci{\eta}{\theta;\Delta;\Xi}{u_b.u_e.i.\varphi}$. A similar theorem
holds for partial correctness assertions.

\begin{thm}[Robust safety]\label{thm:robust}
If
\begin{packeditemize}
\item $u_1, u_2, i;  \Delta \vdash c : \varphi$, $\trace\vDash \Delta $,
\item $\trace$ is a trace obtained by executing the parallel
  composition of threads of ID ($\iota_1$, .. $\iota_k$),
\item at time $U_b$, the computation that thread $\iota_j$ is about to run is $c$
\item at time $U_e$, $c$ has not returned
\end{packeditemize}
 then $\trace\vDash \varphi [U_b, U_e, \iota_j/u_1, u_2, i]$.
\end{thm}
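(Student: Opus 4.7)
The plan is to obtain the theorem as a direct instantiation of the soundness theorem (Theorem~\ref{thm:soundness-short}, case 2(b), for invariant typing), followed by an unfolding of the invariant interpretation of computation types. Because the derivation $u_1, u_2, i; \Delta \vdash c : \varphi$ uses empty $\Theta$, $\Sigma$, $\Gamma^L$, and $\Gamma$, no nontrivial substitutions for these contexts need to be constructed.

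Concretely, I would take $\theta = \cdot$, $\gamma^L = \cdot$, $\gamma = \cdot$, and instantiate the soundness statement at $u_B = U_b$, $u_E = U_e$, $\iota = \iota_j$, producing the time substitution $\gamma_1 = [U_b, U_e, \iota_j/u_1, u_2, i]$. Since the prefix of $\trace$ on which we reason is finite, pick $n$ at least the number of configuration steps from $U_b$ onward, and pick any invariant $\Phi$ (for instance one that is trivially true). The hypothesis $\trace \vDash \Delta$ supplies $\trace \vDash \Delta\gamma\gamma_1\gamma^L$, and the empty substitution trivially lies in $\rgi{\cdot}{\theta;\trace;u}{\Phi}$. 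Soundness therefore yields $(n, c) \in \rci{\varphi\gamma_1}{\cdot;\trace; U_b, U_e, \iota_j}{\Phi}$.

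Next I would unfold the invariant-assertion interpretation of $\rci{\varphi\gamma_1}{\cdot;\trace; U_b, U_e, \iota_j}{\_}$, which (by analogy with the second set in the $\rcinv$ clause) says precisely: if at time $U_b$ thread $\iota_j$ has $c$ as the redex on top of a frame prefix $x.c'::K$ on its stack, and that prefix is preserved through $U_e$, then $\trace \vDash \varphi[U_b, U_e, \iota_j/u_1, u_2, i]$. The hypotheses ``$\iota_j$ is about to run $c$ at $U_b$'' and ``$c$ has not returned by $U_e$'' translate to exactly these shape-of-configuration and stack-preservation conditions for thread $\iota_j$, delivering the desired conclusion.

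The main obstacle is bridging the informal phrase ``about to run $c$'' with the semantic requirement of an explicit top frame $x.c'::K$: for a top-level thread whose stack is empty, I would either push a sentinel return frame (which is never popped because $c$ never returns in the interval) or invoke a vacuous instance of the frame-prefix condition. A related subtlety, which is precisely what makes this a \emph{robust} safety statement, is that the step-index $n$ counts reductions of the entire configuration rather than only those of $\iota_j$; this is exactly what forces the semantic condition to hold under arbitrary interleavings with the other, possibly adversarial, threads $\iota_k$, so no additional argument about concurrent contexts is needed beyond what soundness already supplies.
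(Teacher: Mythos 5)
Your proposal matches the paper's own argument: the paper establishes robust safety as an immediate corollary of Theorem~\ref{thm:soundness-short}, instantiated with empty contexts, precisely because the invariant interpretation $\rci{\varphi}{\theta;\trace;\Xi}{\_}$ is defined over the whole interleaved trace (step-indices count all threads' reductions), which is the observation you make at the end. One small caution on your empty-stack workaround: only the sentinel-frame reading is viable, since a vacuous instance of the frame-prefix premise would make membership in the interpretation uninformative rather than yield $\trace\vDash \varphi[U_b, U_e, \iota_j/u_1, u_2, i]$.
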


\section{Discussion}
\label{sec:discussion}
\paragraph{Proving non-stuckness}
We can use {\lang}'s invariant assertions
to verify that a program always remains non-stuck. 
Recall the example from Section~\ref{ssec:semantics-examples}.
We can
prove non-stuckness for a computation $c$ by showing that it has
the invariant postcondition $(\neg\pred{stuck}\; i) @(u_b, u_e]$.
To complete such a proof, we would require that
all action types assert non-stuckness in their postconditions under
appropriate assumptions on the past trace. For instance, the first example in
Section~\ref{ssec:semantics-examples} states that we can assert
non-stuckness in the postcondition of the \reada action, if the memory
location being read has been allocated.

\paragraph{Choice of reduction strategy}
{\lang} uses call-by-name $\beta$-reduction for expressions, which
simplies the operational semantics as well as the soundness proofs.
Other evaluation strategies we have considered force us to use
$\beta$-equality in place of syntactic equality in \rulename{Eq}. This
makes the system design, semantics, and soundness proofs very
complicated. In particular, the \rulename{Eq} rule that uses
$\beta$-equality cannot be proven sound in a model where expressions
are indexed by their reduction steps.

\section{Related Work}
\label{sec:related}
\Paragraph{Hoare Type Theory (HTT)}
In HTT~\cite{nanevski:jfp08,nanevski:esop07,nanevski:tldi09}, a monad
classifies effectful computations, and is indexed by the return type,
a pre-condition over the (initial) heap, and a postcondition over the
initial and final heaps. This allows proofs of functional correctness
of higher-order imperative programs. The monad in {\lang} is motivated
by, and similar to, HTT's monad.  However, there are several
differences between {\lang}'s monad and HTT's monad.  A {\lang}
postcondition is a predicate over the entire execution trace, not just
the initial and final heaps as in HTT. It also includes an invariant
assertion which holds even if the computation does not return. This
change is needed because we wish to prove safety properties, not just
properties of heaps. Although moving from predicates over heaps to
predicates over traces in a sequential language is not very difficult,
our development is complicated because we wish to reason about robust
safety, where adversarial, potentially untyped code interacts with
trusted code. Hence, we additionally incorporate techniques to reason
about untyped code (rules \rulename{Eq} and \rulename{Confine}).  We
also exclude standard Hoare pre-conditions from computation
types. Usually, pre-conditions ensure that well-typed programs do not
get stuck. We argued in Section~\ref{sec:discussion} that in {\lang}
this property can be established for individual programs using only
invariant postconditions. The standard realizability semantics of
HTT~\cite{peterson:esop08} are based on a model of CPOs, whereas our
model is syntactic and step-indexed~\cite{ahmed:esop06}.

RHTT~\cite{nanevski:oakland11} is a relational extension of HTT used
to reason about access and information flow properties of
programs. That extension to HTT is largely orthogonal to ours and the
two could potentially be combined into a larger framework with
capabilities of both. The properties that can be proved with RHTT and
System M are different. System M can verify safety properties in the
presence of untyped adversaries; RHTT verifies relational, non-trace
properties assuming fully typed adversaries.

\Paragraph{{\lss} and PCL}
{\lang} is inspired by and based upon a prior program logic, {\lss},
for reasoning about safety properties of first-order order programs in
the presence of adversaries~\cite{garg10:ls2}. The main conceptual
difference from {\lss} is that in {\lang} trusted and untrusted
components may exchange code and data, whereas in {\lss} this
interface is limited to data. Our \rulename{Confine} rule for
establishing invariants of an unknown expression from invariants of
interfaces it has access to is based on a similar rule called RES in
{\lss}. The difference is that {\lang}'s rule allows typing
higher-order expressions, which makes it more complex, e.g., we must
index the typing derivations with invariants and define
interpretations for invariants based on step-indexing programs to
obtain soundness. {\lss} itself is based on a logic for reasoning
about Trusted Computing Platforms~\cite{dfgk09:ls2} and Protocol
Composition Logic (PCL) for reasoning about safety properties of
cryptographic protocols~\cite{datta07:entcs}.

\Paragraph{Rely-guarantee reasoning}
There are two broad kinds of techniques to prove invariants over state
shared by concurrent programs. \emph{Coarse-grained} reasoning
followed in, e.g., Concurrent Separation Logic
(CSL)~\cite{brookes07:csl} and the concurrent version of
HTT~\cite{nanevski:tldi09}, assumes clearly marked critical regions
and allows programs to violate invariants on shared state only within
them. This assumes that resource contention is properly synchronized,
which is generally unrealistic when executing concurrently with an
unspecified adversary.  In contrast, \emph{fine-grained} reasoning
followed in, e.g., the method of Owicki-Gries~\cite{owicki:ai76} and
its successor, rely-guarantee reasoning~\cite{jones83:rely}, makes no
synchronization assumption, but has a higher proof burden at each
individual step of a computation. In proofs with {\lang}, including
the Memoir example in this paper, we use a template for
rely-guarantee reasoning taken from {\lss}. The methods used to prove
invariants within this template are different because of the new
higher-order setting.

\Paragraph{Type systems that reason about adversary-supplied code}
The idea of using a non-informative type, $\texp$, for
typing expressions obtained from untrusted sources goes back to the
work of Abadi~\cite{Abadi:1999:STS}. Gordon and
Jeffrey develop a very widely used proof
technique for proving robust safety based on this type~\cite{Gordon:2003:ATS}.
In their system, any program can be \emph{syntactically} given the type
$\texp$ by typing all subexpressions of the program $\texp$.
Although {\lang}'s use of the $\texp$ type is similar, our
proof technique for robust safety is different. It is \emph{semantic}
and based on that in PCL---we allow for arbitrary adversarial
interleaving actions in the semantics of our computation types
(relation $\rci{\eta}{\theta;\trace;\Xi}{\Phi}$ in
Section~\ref{ssec:model}). Due to this generalized semantic
definition, robust safety (Theorem~\ref{thm:robust}) is again a
trivial consequence of soundness (Theorem~\ref{thm:soundness-short}).

Several type systems for establishing different kinds of safety
properties build directly or indirectly on the work of
Abadi~\cite{Abadi:1999:STS} and Gordon and
Jeffrey~\cite{Gordon:2003:ATS}. Of these, the most recent and advanced
are RCF~\cite{bengtson11:rcf} and its
extensions~\cite{bhargavan10:rcf,swamy11:fstar}.
RCF is based on types
refined with logical assertions, which provide roughly the same
expressiveness as {\lang}'s dependently-typed computation types.
By
design, RCF's notion of trace is monotonic: the trace is an unordered
\emph{set} of actions (programmer specified ghost annotations) that
have occurred in the past~\cite{Fournet:2007:TDA}. This simplified
design choice allows scalable implementation.
On the other
hand, there are safety properties of interest that rely on the order
of past events and, hence, cannot be directly represented in RCF's
limited model of traces. An example of this kind is measurement
integrity in attestation protocols~\cite[Theorems~2~\&~4]{dfgk09:ls2}.
In contrast to RCF, we designed {\lang} for verification of general
safety properties (so the measurement integrity property can be
expressed and verified in {\lang}), but we have not considered
automation for {\lang} so far.

{\fstar}~\cite{swamy11:fstar} extends F7 with quantified types, a rich
kinding system, concrete refinements and several other features taken
from the language Fine~\cite{Swamy:2010:Fine}. This allows
verification of stateful authorization and information flow properties
in {\fstar}. Quantified predicates can also be used for full
functional specifications of higher-order programs. Although we have
not considered these applications so far, we believe that {\lang} can
be extended similarly.

The main novelty of {\lang} compared to the above mentioned line of
work lies in the \rulename{Eq} and \rulename{Confine} rules that
statically derive computational effects of untyped adversary-supplied
code.

Code-Carrying Authorization (CCA)~\cite{cca:esorics08} is another
extension to~\cite{Gordon:2003:ATS} that enforces authorization
policies. CCA introduces dynamic type casts to allow untrusted code to
construct authorization proofs (e.g., Alice can review
paper number 10).  The language runtime uses
logical assertions made by trusted programs to constructs proofs
present in the type cast. 
The soundness of
type cast in CCA relies on the fact that untrusted code cannot make
any assertions and that it can only use those made by trusted code.
Similar to CCA, \lang also
assigns untrusted code descriptive types.
CCA checks those types at runtime; whereas the \rulename{Confine}
rule assigns types statically.

\Paragraph{Verification of TPM and Protocols based on TPM}
Existing work on verification of TPM APIs and protocols relying on TPM
APIs uses a variety of
techniques~\cite{Delaune:2011:TPM:csf,Delaune:2010:TPM:fast,Chen:2009:ASV,Gurgens:2007:SES,dfgk09:ls2}. Gurgens
et al. uses automaton to model the transitions of TPM
APIs~\cite{Gurgens:2007:SES}.
Several results~\cite{Delaune:2011:TPM:csf,Delaune:2010:TPM:fast,Chen:2009:ASV}
use the automated tool Proverif~\cite{proverif}. Proverif translates
protocols encoded in Pi calculus into horn clauses. To check security
properties such as secrecy and correspondence, the tool runs a
resolution engine on these horn clauses and clauses representing an Dolev-Yao attacker.  Proverif
over-approximates the protocol states and works with a monotonic set of
facts. Special techniques need to be applied to use Proverif to
analyze stateful protocols such as ones that use TPM
PCRs~\cite{Delaune:2011:TPM:csf}.  \lang is more expressive: it can
model and reason about higher-order functions and programs that invoke
adversary-supplied code. Reasoning about shared non-monotonic state is
possible in \lang. However, verification using \lang requires
manual proofs.  It is unclear whether our Memoir case study can be
verified using the techniques introduced in~\cite{Delaune:2011:TPM:csf},
as it requires reasoning about higher-order code.

A proof of safety formalized in TLA+ ~\cite{tlaplus} was presented in
the Memoir paper~\cite{memoir}. They showed that
Memoir's design refines an obviously safe specification that cannot be
rolled back thus implying the state integrity property we prove. However, this
proof assumes that the service being protected is a constant action with no effects. Consequently, they do not
need to reason about the service program being changed or causing unsafe effects.
Our proofs assume a more realistic model requiring that the identity of
the service be proven and that the effects of the service be analyzed based on
the sandbox provided by the TPM.


\eat{
\Paragraph{Type systems for safety properties}
The idea of using a dedicated, non-informative type, $\texp$, for
typing expressions obtained from untrusted sources goes back to the
work of Abadi~\cite{Abadi:1999:STS}. Gordon and
Jeffrey~\cite{Gordon:2003:ATS} develop a very widely used proof
technique for proving robust safety based on this type.
First, one proves soundness of the type system for all closed
well-typed programs that accept only type $\texp$ on public
interfaces, without consideration of concurrent adversaries. Next, it
is observed that any such program, composed in parallel with any other
(possibly adversarial) program, can also be
typed \emph{syntactically}, using $\texp$ to type all subexpressions
of the other program. Thus, soundness immediately implies robust
safety. Although {\lang}'s use of the $\texp$ type is similar, our
proof technique for robust safety is different. It is \emph{semantic}
and based on that in PCL---we allow for arbitrary adversarial
interleaving actions in the semantics of our computation types
(relation $\rci{\eta}{\theta;\trace;\Xi}{\Phi}$ in
Section~\ref{ssec:model}). Due to this generalized semantic
definition, robust safety (Theorem~\ref{thm:robust}) is again a
trivial consequence of soundness (Theorem~\ref{thm:soundness-short}).

Several type systems for establishing different kinds of safety
properties build directly or indirectly on the work of
Abadi~\cite{Abadi:1999:STS} and Gordon and
Jeffrey~\cite{Gordon:2003:ATS}. Of these, the most recent and advanced
are RCF~\cite{bengtson11:rcf} and its
extensions~\cite{bhargavan10:rcf,swamy11:fstar}. RCF is based on types
refined with logical assertions, which provide roughly the same
expressiveness as {\lang}'s dependently-typed computation types. By
design, RCF's notion of trace is monotonic: the trace is an unordered
\emph{set} of actions (programmer specified ghost annotations) that
have occurred in the past~\cite{Fournet:2007:TDA}. This simplified
design choice allows scalable implementation. RCF's theory has been
implemented for the language F\# in the refinement typechecker F7,
backed by the SMT solver Z3 for discharging logical
assertions. Together, they have been used to automatically verify
security properties of thousands of lines of code. Many kinds of trace
properties can be expressed and verified in RCF. This includes weak
secrecy properties and correspondence assertions~\cite{Woo:1993:SMA},
which suffice to model most authentication properties. On the other
hand, there are safety properties of interest that rely on the order
of past events and, hence, cannot be directly represented in RCF's
limited model of traces. An example of this kind is measurement
integrity in attestation protocols~\cite[Theorems~2~\&~4]{dfgk09:ls2}.
In contrast to RCF, we designed {\lang} for verification of general
safety properties (so the measurement integrity property can be
expressed and verified in {\lang}), but we have not considered
automation for {\lang} so far.

Bhargavan \emph{et al}~\cite{bhargavan10:rcf} develop a compositional
proof method for RCF that allows composing invariants of language
modules. They call the invariants contextual theorems. Their proof
method has expressiveness similar to {\lang}'s rule
\rulename{Confine}. The main difference between the two methods is one
of mode of use. Whereas Bhargavan \emph{et al} use their method to
prove meta-properties of modules, without internalizing them into the
assertion logic, the \rulename{Confine} rule establishes invariants as
postconditions of module clients. A more superficial difference is
that Bhargavan \emph{et al} rely directly on both logical and semantic
reasoning to prove contextual theorems, whereas we rely directly on
logical reasoning and only indirectly on semantic reasoning manifest
in the soundness theorem.\limin{not sure about the last sentence.}

{\fstar}~\cite{swamy11:fstar} extends F7 with quantified types, a rich
kinding system, concrete refinements and several other features taken
from the language Fine~\cite{Swamy:2010:Fine}. This allows
verification of stateful authorization and information flow properties
in {\fstar}. Quantified predicates can also be used for full
functional specifications of higher-order programs. Although we have
not considered these applications so far, we believe that {\lang} can
be extended similarly.

{\lang}'s rules \rulename{Eq} and \rulename{Confine} allow a value of
type $\texp$ to be given a more precise type, based on other logical
reasoning. This allows the value to be used in trusted code (not typed
$\texp$).
\rulename{Eq}
and \rulename{Confine} are merely two instances of rules that could be
used to precisely type adversary-supplied code and data. Depending on
application, one may need other such rules, or possibly even dynamic
type checks over adversary-supplied input. For example, in
cryptographic protocols, the decryption primitive, when invoked with a
non-public symmetric key, acts as a dynamic check on the ciphertext
and so the result of decryption may have a precise type even when the
ciphertext is of type $\texp$~\cite{Fournet:2007:TDA}.

Code-Carrying Authorization (CCA)~\cite{cca:esorics08} is another
extension to~\cite{Gordon:2003:ATS} that enforces authorization
policies. CCA introduces dynamic type casts to allow untrusted code to
construct authorization proofs (e.g., Alice can review a particular
paper).  To evaluate a dynamic type cast, the language runtime uses
logical assertions made by trusted programs to constructs proofs
present in the type cast. In other words, properties asserted by
untrusted code are verified at runtime. Similar to CCA, \lang also
assigns untrusted code types that assert properties of the untrusted
code. However, \lang does not perform runtime
checks. The soundness of
type cast in CCA
relies on the fact that untrusted code cannot make any
assertions, but to use those made by trusted code.
CCA checks these proofs at runtime; whereas in \lang,
the soundness of
\rulename{Confine} depends on statically constructed typing
derivations of the interfaces that the untrusted code invokes.
}

\section{Conclusion}
\label{sec:conclusion}

{\lang} is a program logic for proving safety properties of programs
that may execute adversary-supplied code with some
precautions. {\lang} generalizes Hoare Type Theory with invariant
assertions, and adds two novel typing rules---\rulename{Eq} and
\rulename{Confine}---that allow typing adversarial code using
reasoning in the assertion logic and assumptions about the code's
sandbox, respectively. We prove soundness and robust safety relative to a
step-indexed, trace model of computations. Going further, we would
like to build tools for proof verification and automatic deduction in
{\lang}.

\bibliographystyle{abbrvnat}

\appendix
\section{Semantics}

\paragraph{Semantics for invariant properties}
Next we define a logical relation indexed only by an invariant
property $u_b.u_e.i.\varphi$. 
\begin{tabbing}
$\rvinv{u_b.u_e.i.\varphi}{\trace;u}~=$ $\{(k, \nf)~|~$
$\nf \neq \lambda x.e, \Lambda X.e, \ecomp(c)\} $
\\$\cup \{(k,\ecomp(c))~|~$
$(k, c)\in \rcinv{u_b.u_e.i.\varphi}{\trace;u}
\}$
\\ $\cup \{(k,\lambda x.e')~|~$\=$
 \forall j, u', j<k, u'\geq u$
\\\>
$ (j,e')\in\reinv{u_b.u_e.i.\varphi}{\trace;u'}$ 
\\\>
$\Longrightarrow(j,
e[e'/x])\in\reinv{u_b.u_e.i.\varphi}{\trace;u'}\}$
\\ $\cup$
$\{(k,\Lambda x.e)~|~\forall j, j<k$
$\Longrightarrow(j, e)\in\reinv{u_b.u_e.i.\varphi}{\trace;u}
\}$
\end{tabbing}

\begin{tabbing}
$\reinv{u_b.u_e.i.\varphi}{\trace;u}$=
\\$\{(k, e)~|$\=
$\forall 0\leq m \leq k, e\rightarrow^{m} e'\nrightarrow$
\\\>$\Longrightarrow (n-m, e')\in \rvinv{u_b.u_e.i.\varphi}{\trace;u}\}$
\end{tabbing}

\begin{tabbing}
  $\rcinv{u_b.u_e.i.\varphi }{\trace; u} =$  
 $\{ (k, c)~|~ $\\
~~ \=  
 $\forall u_B, u_E, \iota, u\leq u_B\leq u_E,$
 let $\gamma=[u_B,u_E,\iota/u_1,u_2,i]$,
\\\>
$j_b$ is the length of the trace from time $u_B$ to the end of
$\trace$,
\\~\> 
$j_e$ is the length of the trace from time $u_E$ to the end of $\trace$
\\~\> $k \geq j_b \geq j_e$, 
\\\> the configuration at time $u_B$ is
 ~~$\steps{u_B}\sigma_{b}\rt \cdots, \langle\iota; x.c'::K; c\rangle\cdots$
\\\> between $u_B$ and $u_E$ (inclusive), the stack of 
thread $i$ always 
\\\>contains prefix $x.c'{::}K$
\\\> $\Longrightarrow$
 $\trace\vDash_\theta \varphi\}\cap$ 
\\$\{ (k,  c)~|~ $
 $\forall u_B, u_E, \iota, u\leq u_B\leq u_E,$
\\\quad\=
 let $\gamma=[u_B,u_E,\iota/u_1,u_2,i]$,
\\\>
 $j_b$ is the length of the trace from time $u_B$ to the end of
$\trace$
\\~\> $j_e$ is the length of the trace from time $u_E$ to the end of $\trace$
\\~\>$k \geq j_b > j_e$, 
\\\> the configuration at time $u_1$ is
$\steps{u_B}\sigma_{b}\rt \cdots, \langle\iota; x.c'::K;
c\rangle\cdots$
\\\> the configuration at time $u_E$ is
$\steps{u_E}\sigma_{e}\rt \cdots, \langle\iota; K;
c'[e'/x]\rangle\cdots$ 
\\\>between $u_B$ and $u_E$, the stack of thread $i$
always contains $x.c'{::}K$
\\\> $\Longrightarrow$ \=
 $(j_e, e') \in  \reinv{u_b.u_e.i.\varphi}{\trace; u_E} $
 and $\trace\vDash_\theta \varphi[e'/x]\}$ 
 \end{tabbing}

\begin{tabbing}
  $\rfinv{u_b.u_e.i.\varphi}{\trace;u} =$ \\
  $\{ (k, c)~|~ $\= 
  $ \forall e$, $(k, e)\in \reinv{u_b.u_e.i.\varphi}{\trace;u}
 \Longrightarrow $ 
\\\> $(k, c\;
  e)\in$ \=$\rcinv{u_b.u_e.i.\varphi}{\trace;u}\}$ 
 \end{tabbing}

\paragraph{Semantics for invariant indexed types}
Figure~\ref{fig:re-inv-rel} summaries the interpretation of types
indexed by the invariant property $u_b.u_e.i.\varphi$. The invariant
property is used to constrain the behavior of expressions that
evaluate to normal forms that do not agree with their types. 

\begin{figure*}[t!]
\[
\begin{array}{l@{~}c@{~}l}
 \rvi{\texp}{\theta;\trace;u}{u_b.u_e.i.\varphi} &=&  \{(k, \nf) ~|~ k \in \mathbb{N} \}
\\
 \rvi{X}{\theta;\trace;u}{u_b.u_e.i.\varphi} &= &  \theta(X) 
\\
\rvi{\bt}{\theta;\trace;u}{u_b.u_e.i.\varphi} &=&
\{(k, e)~|~ (k,e)\in\rvinv{u_b.u_e.i.\varphi}{\theta;\trace;u}\}
 \\
\rvi{\Pi x{:}\tau_1. \tau_2}{\theta;\trace;u}{u_b.u_e.i.\varphi} &=&
 \{(k,\lambda x.e) ~|~ 
\forall j< k,  \forall u', u'\geq u,
\forall e', (j,  e')  \in \rei{\tau_1}{\theta;\trace;u'}{u_b.u_e.i.\varphi}
\\& &\qquad\qquad
 \Longrightarrow (j, e_1[e'/x]) 
 \in\rei{\tau_2[e'/x]}{\theta;\trace;u'}{u_b.u_e.i.\varphi} \}\cup
\\ & &   
 \{(k,\nf) ~|~  \nf \neq \lambda x.e
 \Longrightarrow
 (k, \nf)\in\reinv{ u_b.u_e.i.\varphi}{\trace;u}\}
\\
 \rvi{\forall X.\tau}{\theta;\trace;u}{u_b.u_e.i.\varphi} &=& 
 \{(k,\Lambda X) ~|~ 
\forall j < k, \forall
 \cset\in \ubertype 
\Longrightarrow  (j, e')\in
 \rei{\tau}{\theta[X\mapsto\cset];\trace;u}{u_b.u_e.i.\varphi}\}\cup
\\ & &   
 \{(k,\nf) ~|~ \nf\neq\Lambda X.e
\Longrightarrow
 (k, \nf)\in\reinv{u_b.u_e.i.\varphi}{\trace;u}\}
\\
\multicolumn{3}{l}{\rvi{\tcomp(\tpair{u_1.u_2.i}{x{:}\tau.\varphi_1}{\varphi_2})}{\theta;\trace;u}{u_b.u_e.i.\varphi}~
=} 
\\\multicolumn{3}{l}{\quad\{(k,\ecomp(c))~|~
 \forall u_B, u_E, \iota, u\leq u_B\leq u_E,
\mbox{let}~\gamma=[u_B,u_E,\iota/u_1,u_2,i]}
\\\multicolumn{3}{l}{\qquad\qquad\qquad\quad
(k, c)\in 
\rci{x{:}\tau\gamma.\varphi_1\gamma}
{\theta;\trace;u_B, u_E, \iota}{ u_b.u_e.i.\varphi}
\cap \rci{\varphi_2\gamma}{\theta;\trace;u_B, u_E, \iota}{\_}
\}\cup}
\\\multicolumn{3}{l}{\quad
 \{(k,\nf) ~|~  \nf \neq\ecomp(c)
 \Longrightarrow
 (k, \nf)\in\reinv{u_1.u_2.i.\varphi}{\trace;u}\}
}
\\\\
\multicolumn{3}{l}{\rei{\tau}{\theta;\trace;u}{u_b.u_e.i.\varphi}~
=\{(k, e)~|~ 
\forall j < m, 
e\rightarrow^m_\beta e'\nrightarrow \Longrightarrow 
(k-m,e')\in\rvi{\tau}{\theta;\trace;u}{u_b.u_e.i.\varphi}
\}}
\end{array}
\]
\caption{Semantics for inv-indexed types}
\label{fig:re-inv-rel}
\end{figure*}

\begin{tabbing}
  $\rci{ x{:}\tau.\varphi }
  {\theta;\trace; u_1, u_2, i}{u_b.u_e.i.\varphi_1} =$  $\{ (k,
  c)~|~ $\\
~ \=
 $j_b$ is the length of the trace from time $u_1$ to the end of
$\trace$
\\~\> $j_e$ is the length of the trace from time $u_2$ to the end of $\trace$
\\~\>$k \geq j_b > j_e$, 
\\\> the configuration at time $u_1$ is
$\steps{u_1}\sigma_{b}\rt \cdots, \langle\iota; x.c'::K;
c\rangle\cdots$
\\\> the configuration at time $u_2$ is
$\steps{u_2}\sigma_{e}\rt \cdots, \langle\iota; K;
c'[e'/x]\rangle\cdots$ 
\\\>between $u_1$ and $u_2$, the stack of thread $i$
always contains $x.c'{::}K$
\\\> $\Longrightarrow$ \=
 $(j_e, e') \in  \rei{\tau}{\theta;\trace; u_2}{u_b.u_e.i.\varphi_1} $
\\\>\>  and $\trace\vDash \varphi[e'/x]\}$ 
 \end{tabbing}

\begin{tabbing}
  $\rci{\varphi }{\theta;\trace; u_1, u_2, i}{\_} =$  $\{ (k, c)~|~ $\\
~~ \=  
$j_b$ is the length of the trace from time $u_1$ to the end of
$\trace$,
\\~\> 
$j_e$ is the length of the trace from time $u_2$ to the end of $\trace$
\\~\> $k \geq j_b \geq j_e$, 
\\\> the configuration at time $u_1$ is
 ~~$\steps{u_1}\sigma_{b}\rt \cdots, \langle\iota; x.c'::K; c\rangle\cdots$
\\\> between $u_1$ and $u_2$ (inclusive), the stack of 
thread $i$ always 
\\\>contains prefix $x.c'{::}K$
\\\> $\Longrightarrow$
 $\trace\vDash \varphi\}$ 
 \end{tabbing}

\begin{tabbing}
  $\rfi{ \Pi x{:}\tau.u_1.u_2.i.(y{:}\tau'.\varphi, \varphi')}{\theta;\trace;u}{u_b.u_e.i.\varphi_1} =$ \\
  $\{ (k, c)~|~ $\= 
  $ \forall e,$ \=$\forall u', u_B, u_E, \iota, u\leq u'\leq
  u_B\leq u_E$, 
\\\>let $\gamma = [u_B, u_E, \iota/u_1, u_2,i]$
\\\>$(k, e)\in \rei{\tau\gamma}{\theta;\trace;u'}{u_b.u_e.i.\varphi_1}
 \Longrightarrow $ 
\\\> $(k, c\;
  e)\in$ \=$\rci{(y{:}\tau'\gamma.\varphi\gamma)[e/x]}{\theta;\trace;u_B, u_E, \iota}{u_b.u_e.i.\varphi_1}$ 
\\\>\> $\cap \rci{\varphi'\gamma[e/x]}{\theta;\trace;u_B, u_E, \iota}{}\}$
 \end{tabbing}

\begin{tabbing}
  $\rai{ \kact(u_1.u_2.i.(x{:}\tau.\varphi_1, \varphi_2))}
{\theta;\trace;u}{u_b.u_e.i.\varphi} =$ \\
$\{ (k, a)~|~ $\= $
\forall u_B, u_E, \iota,  u\leq u_B\leq u_E$,
\\\>let $\gamma = [u_B, u_E, \iota/u_1, u_2,i]$
\\\>$(k, \cact(a)) \in 
  ($\=$\rci{x{:}\tau\gamma.\varphi_1\gamma}
{\theta;\trace; u; u_B, u_E,\iota}{u_b.u_e.i.\varphi}$
\\\>$\cap   \rci{\varphi_2\gamma}
 {\theta;\trace; u; u_B, u_E,\iota}{u_b.u_e.i.\varphi} )\}$\\
\\
  $\rai{ \Pi x{:}\tau.\alpha}{\theta;\trace;u}{u_b.u_e.i.\varphi} =$ \\
  $\{ (k, a)~|~ $
 $\forall e,$\=$\forall u', , u'\geq u$, 
$(k,e)\in\rei{\tau}{\theta;\trace; u'}{u_b.u_e.i.\varphi}$
\\\>$ \Longrightarrow   (k, a\; e) \in 
\rai{\alpha[e/x]}{\theta;\trace;u'}{u_b.u_e.i.\varphi}\}$\\
\\
$\rai{ \forall X.\alpha}{\theta;\trace;u}{u_b.u_e.i.\varphi} =$
\\$\{ (k, a)~|~$\=$ 
\forall j\leq k,\forall  \cset\in \ubertype$
\\\>$\Longrightarrow (j, a\; \cdot)\in
\rai{\alpha}{\theta[X\mapsto\cset];\trace;u}{u_b.u_e.i.\varphi}\}$
\end{tabbing}

\paragraph{Formula semantics}

\[
\begin{array}{l@{~}c@{~}l}
 \interp{\texp} &=&  \{e ~|~ e~\mbox{is an expression}\}
\\
\interp{\bt} &=&
\{e~|~ e\rightarrow^* \const\}
 \\
\interp{\Pi x{:}\tau_1. \tau_2} &=&
 \{\lambda x.e ~|~ 
\forall e',e'  \in \interp{\tau_1}
 \Longrightarrow e_1[e'/x]
 \in\interp{\tau_2}\}
\end{array}
\]

\[
\begin{array}{lcl}
\trace \vDash P\;\vec{e} & ~\mbox{iff}~ & P\;\vec{e}  \in\varepsilon(\trace)
\\ 
\trace \vDash \pred{start}(I, c, U)
 & ~\mbox{iff}~ & 
 \mbox{thread $I$ has $c$ as the active} 
\\ & &\mbox{computation with an
   empty stack} 
\\ & & \mbox{at time $U$ on $\trace$}
\\
\trace \vDash \forall x{:}\tau.\varphi & ~\mbox{iff}~ &
\forall e, e \in\interp{\tau}~\mbox{implies}~
 \trace\vDash \varphi[e/x]
\\ 
 \trace \vDash \exists x{:}\tau.\varphi & ~\mbox{iff}~ &
\exists e, e\in\interp{\tau}~\mbox{and}~
  \trace\vDash \varphi[e/x]
\end{array}
\]
\section{Term Language and Operational Semantics}

\paragraph{Syntax}
\[
\begin{array}{lclll}
 \textit{Base values} & \const & ::= & \btrue ~|~ \bfalse ~|~ \iota ~|~
 \ell ~|~ n 
\\ 
\textit{Expressions} & e & ::=  & x ~|~ \const~|~
\lambda x.e  ~|~ \Lambda X.e 
\\ & & ~|~ & e_1\;e_2  ~|~e\; \cdot 
~|~ \ecomp(c) \\
\textit{Actions} & a & ::= & A ~|~ a\; e ~|~ a\; \cdot \\
\textit{Computations} & c & ::= &  
 \cact(a) ~|~  \cret(e) ~|~ \cfix\,f(x).c ~|~
c\; e 
\\ & & ~|~&  \cletc(c_1, x.c_2)  ~|~ \clete(e_1, x. c_2)
\\ & & ~|~&  c_1;c_2  ~|~ e_1;c_2
\\& &~|~ & \cif\ e\ \cthen\ c_1\ \celse\ c_2
\end{array}
\]

\[
\begin{array}{lcll}
\textit{\small Expr types} & \tau & ::= &  X ~|~
\bt ~|~\Pi x{:}
\tau_1. \tau_2 ~|~ \forall X.\tau 
~|~ \tcomp(\eta_c) ~|~ \texp
\\
\textit{\small Comp types} & \eta & ::= & x{:}\tau.\varphi
~|~  \varphi ~|~  (x{:}\tau.\varphi, \varphi')\\
\textit{\small Closed c types} & \eta_c & ::= &
u_1.u_2.i.(x{:}\tau.\varphi_1, \varphi_2)
\\ & &~|~ & \Pi x{:}\tau. u_1.u_2.i.(y{:}\tau.\varphi_1, \varphi_2)
\\
\textit{\small Assertions} & \varphi & ::= & P ~|~ e_1 \equiv e_2 ~|~
\varphi \; e ~|~ \top ~|~ \bot ~|~ \neg\varphi 
\\ & & ~|~ & \varphi_1 \conj \varphi_2 
~|~\varphi_1 \disj \varphi_2 
~|~\forall x{:} \tau. \varphi ~|~ \exists
x{:}\tau. \varphi
\vspace{5pt}
\\
  \textit{\small Action Kinds} & \alpha & ::= & \kact(\eta_c)
  ~|~ \Pi x{:}\tau. \alpha ~|~ \forall X.\alpha
\\
\textit{\small Type var ctx} & \Theta & ::= & \cdot ~|~ \Theta, X\\
\textit{\small Signatures} & \Sigma & ::= & \cdot ~|~ \Sigma, A:: \alpha\\
\textit{\small Logic var ctx} & \Gamma^L & ::= & \cdot ~|~ \Gamma^L, x:\bt ~|~ \Gamma^L, x: \texp \\ 
\textit{\small Typing ctx} & \Gamma & ::= & \cdot ~|~ \Gamma, x: \tau \\ 
\textit{\small Formula ctx} & \Delta & ::= & \cdot ~|~\Delta, \varphi\\
\textit{\small Exec ctx} & \Xi & ::= & u_b:\bt, u_e:\bt, i:\bt
\end{array}
\]

\paragraph{Beta reductions}
We define the $\beta$-reduction rules below. 

%
\noindent\framebox{$e\rightarrow_\beta e'$}
\begin{mathpar}
\inferrule{ }{(\lambda x.e)\, e_2\rightarrow_\beta e[e_2/x] }\and
\inferrule{ }{\Lambda X.e\, \cdot\rightarrow_\beta e }\and
\inferrule{e_1\rightarrow_\beta e'_1}{e_1\, e_2\rightarrow_\beta e'_1\,
e_2}\and
\inferrule{e_1\rightarrow_\beta e'_1}{e_1\, \cdot\rightarrow_\beta e'_1\,
\cdot}\and
\end{mathpar}

\noindent\framebox{$\sigma \rt T \stepsone \sigma' \rt T'$}
\begin{center}
\(%
 \inferrule*[right=R-ActS]{
\\\next(\sigma, a) = (\sigma', e) 
\\ e\neq\stuck
 }{ \sigma \rt \langle \iota;
   x.c ::K; \cact(a) \rangle \stepsone \sigma' \rt \langle \iota;
   K;  c[e/x]  \rangle}
\) 

\vspace{10pt} 

\(
 \inferrule*[right=R-ActF]{
\\\next(\sigma, a) = (\sigma', \stuck) 
 }{ \sigma \rt \langle \iota;
   x.c ::K; \cact(a) \rangle \stepsone \sigma' \rt \langle \iota;\stuck \rangle}
\) 

\vspace{10pt} 

\(
 \inferrule*[right=R-Stuck]{  }{ \sigma \rt \langle \iota;\stuck \rangle 
\stepsone \sigma \rt \langle \iota;\stuck \rangle}
\) 

\vspace{10pt} 

\(
\inferrule*[right=R-Ret]{
 }{ \sigma \rt \langle \iota;  x.c :: K; \cret(e) \rangle
  \stepsone  \sigma \rt \langle \iota; K; c[e/x]
  \rangle}
\) 

\vspace{10pt} 

\(
\inferrule*[right=R-SeqE1]{
}{\sigma \rt \langle \iota;
  K; \clete(e_1, x.c_2) \rangle \stepsone  \sigma \rt \langle
  \iota; x.c_2\rhd K; e_1 \rangle}
\)

\vspace{10pt} 

\(
\inferrule*[right=R-SeqE2]{e \rightarrow_\beta e'
}{\sigma \rt \langle \iota;
  K; e \rangle \betaone  \sigma \rt \langle
  \iota; K; e' \rangle}
\)

\vspace{10pt} 

\(
\inferrule*[right=R-SeqE3]{
}{\sigma \rt \langle \iota;
  x.c_2:: K; \ecomp(c_1) \rangle \stepsone  \sigma \rt \langle
  \iota; x.c_2:: K; c_1 \rangle}
\)

\vspace{10pt} 

\(
\inferrule*[right=R-SeqC]{
 }{ \sigma \rt \langle \iota; K; \cletc(c_1, x.c_2)
  \rangle \stepsone  \sigma \rt \langle \iota; x.c_2 :: K; c_1
  \rangle}
\) 

\vspace{10pt} 

\(
 \inferrule*[right=R-Fix]{ 
}{\sigma \rt \langle \iota; K; (\cfix f(x).c)\; e 
   \rangle\\ \stepsone \sigma \rt \langle \iota; K; c[\lambda z.\ecomp(\cfix(f(x).c)\; z)/f][e/x] \rangle}
\)
\end{center}

\framebox{$C \steps{} C'$}
\vspace{-5pt}
\[
\inferrule{\sigma \rt T \stepsone \sigma' \rt
  T'}{ \sigma \rt T, T_1, \ldots, T_n \steps{}  \sigma'
  \rt T', T_1, \ldots, T_n}
\]%

\section{Well-formedness Judgments}

\paragraph{Well-formedness judgments for contexts and
types} 
~\\

\noindent\framebox{$\Theta \vdash \Sigma \jok$}
\begin{mathpar}
\inferrule{ }{\Theta \vdash \cdot \jok} \and 
\inferrule{\Theta \vdash \Sigma \jok \\ \Theta;\Sigma; \cdot \vdash \alpha \jok}{
\Theta \vdash \Sigma, A:: \alpha \jok}\and
\end{mathpar}

\noindent\framebox{$\Theta;\Sigma \vdash \Gamma \jok$}
\begin{mathpar}
\inferrule{ \Theta \vdash\Sigma \jok}{\Theta;\Sigma\vdash \cdot \jok} \and 
\inferrule{\Theta;\Sigma \vdash \Gamma \jok \\ 
  \Theta;\Sigma;\Gamma \vdash \tau \jok}
 {\Theta;\Sigma \vdash\Gamma, x: \tau \jok}
\end{mathpar}

\noindent\framebox{$\Theta;\Sigma; \Gamma\vdash \Delta \jok$}
\begin{mathpar}
\inferrule{\Theta;\Sigma\vdash\Gamma\jok}{\Theta;\Sigma;\Gamma \vdash \cdot \jok} \and 
\inferrule{\Theta;\Sigma;\Gamma \vdash \Delta \jok \\ 
\Gamma \vdash \varphi   \jok
}
 {\Theta;\Sigma,\Gamma\vdash \Delta,\varphi \jok}
\end{mathpar}

\noindent\framebox{$ \Gamma \vdash \varphi\jok$}
\begin{mathpar}
\inferrule{ }{
   \Gamma \vdash P \jok}\and

\inferrule{ \Gamma \vdash \varphi \jok \\ 
\fv(e)\in\dom(\Gamma)
}{
 \Gamma \vdash \varphi \; e \jok}\and
\inferrule{ }{ \Gamma \vdash \top \jok}\and
\inferrule{ }{ \Gamma \vdash \bot \jok}\and
\inferrule{ \Gamma \vdash \varphi_1 \jok \\  \Gamma \vdash
  \varphi_2 \jok}{ \Gamma \vdash \varphi_1 \conj \varphi_2 
  \jok}\and
\inferrule{ \Gamma \vdash \varphi_1 \jok \\  \Gamma \vdash
  \varphi_2 \jok}{ \Gamma \vdash \varphi_1 \disj \varphi_2 
  \jok}\and
\end{mathpar}
\begin{mathpar}
\inferrule{ \Gamma \vdash \varphi \jok}{ \Gamma \vdash \neg \varphi
  \jok}\and
\inferrule{
\tau=b~\mbox{or}~\texp\\
 \Gamma, x: \tau \vdash
  \varphi \jok}{ \Gamma \vdash \forall x{:}\tau. \varphi 
  \jok}\and
\inferrule{
\tau=b~\mbox{or}~\texp\\
 \Gamma, x: \tau \vdash
  \varphi \jok}{ \Gamma \vdash \exists x{:}\tau. \varphi 
  \jok}\and
\inferrule{
\fv(e_1)\cup\fv(e_2) \subseteq\dom(\Gamma)}{
 \Gamma  \vdash e_1 \equiv e_2 \jok}
\end{mathpar}

\noindent\framebox{$\Theta;\Sigma;\Gamma \vdash \tau \jok$}
\begin{mathpar}
\inferrule{ X  \in \Theta 
\\ \Theta;\Sigma\vdash \Gamma\jok
}{\Theta;\Sigma;\Gamma \vdash  X \jok}\and 
\inferrule{\Theta;\Sigma;\Gamma \vdash \tau_1 \jok \\ \Theta;\Sigma;\Gamma, x:\tau_1  \vdash
  \tau_2 \jok}{\Theta;\Sigma;\Gamma \vdash \Pi x{:}\tau_1. \tau_2  \jok}\and
\inferrule{\Theta;\Sigma;\Gamma \vdash \eta_c  \jok}{\Theta;\Sigma;\Gamma \vdash \tcomp(\eta_c)
  \jok}\and
\inferrule{  \Theta;\Sigma \vdash \Gamma \jok
}{\Theta;\Sigma;\Gamma \vdash b \jok}\and
\inferrule{\Theta, X;\Sigma;\Gamma \vdash
  \tau \jok}{\Theta;\Sigma;\Gamma \vdash \forall X. \tau \jok}\and
\inferrule{ \Theta;\Sigma\vdash \Gamma\jok}
{\Theta;\Sigma;\Gamma \vdash \texp \jok}
\end{mathpar}

\noindent\framebox{$\Theta;\Sigma;\Gamma \vdash \alpha \jok$}
\begin{mathpar}%

\inferrule{\Theta;\Sigma;\Gamma \vdash \eta_c \jok}{\Theta;\Sigma;\Gamma \vdash \kact(\eta_c)  \jok}\and
\inferrule{\Theta;\Sigma;\Gamma \vdash \tau \jok \\ \Theta;\Sigma;\Gamma, x:\tau  \vdash
  \alpha \jok}{\Theta;\Sigma;\Gamma \vdash \Pi x{:}\tau. \alpha  \jok}\and
\inferrule{\Theta, X;\Sigma; \Gamma \vdash
  \alpha \jok}{\Theta;\Sigma;\Gamma \vdash \forall X. \alpha \jok}
\end{mathpar}


\noindent\framebox{$\Theta;\Sigma;\Gamma \vdash \eta_c \jok$}
\begin{mathpar}
\inferrule{
\Theta;\Sigma;\Gamma,u_1{:}\bt, u_2{:}\bt,i{:}\bt \vdash
\tau\jok
\\ \Gamma,u_1{:}\bt, u_2{:}\bt,i{:}\bt,x:\tau \vdash \varphi_1\jok \\
 \Gamma,u_1{:}\bt, u_2{:}\bt,i{:}\bt  \vdash \varphi_2 \jok}{\Theta;\Sigma;\Gamma \vdash
  u_1.u_2.i.(x{:}\tau.\varphi_1, \varphi_2)   \jok}\and
\inferrule{\Theta;\Sigma;\Gamma \vdash \tau \jok \\ 
\Theta;\Sigma;\Gamma, y:\tau
\vdash u_1.u_2.i.(x{:}\tau_1.\varphi_1, \varphi_2) \jok
}{\Theta;\Sigma;\Gamma \vdash
\Pi y{:}\tau.u_1.u_2.i.(x{:}\tau_1.\varphi_1, \varphi_2)   \jok}\and
\end{mathpar}

\section{Typing Rules}
\paragraph{Typing for simple terms}

\noindent \framebox{$ \Gamma\vdash_e e: \tau$}
\begin{mathpar}
\inferrule{
 x : \tau \in \Gamma}{ \Gamma\vdash x:
  \tau}\and
\inferrule{ \Gamma, x: \tau_1 \vdash e
  : \tau_2}{ \Theta;\Upsilon \vdash \lambda x.e : \Pi x{:}\tau_1. \tau_2}\and
\inferrule{ \Gamma \vdash e_1 : \Pi x{:}\tau_1.\tau_2 \\  \Gamma \vdash
  e_2: \tau_1}{ \Gamma \vdash e_1\;e_2 : \tau_2}\and
\inferrule{ }{ \Gamma\vdash e:\texp}\and
\end{mathpar}

\paragraph{Confine relation}
\begin{mathpar}
\inferrule{ }{ 
  \cnfn{b}{u_b.u_e.i.\varphi}}
\and
\inferrule{ \cnfn{\tau_1}{u_b.u_e.i.\varphi}~~~~
\cnfn{\tau_2}{u_b.u_e.i.\varphi} }{ 
  \cnfn{\Pi
    \_{:}\tau_1. \tau_2}{u_b.u_e.i.\varphi}}
\and
\inferrule{\cnfn{\tau}{u_b.u_e.i.\varphi} }{
 \cnfn{\tcomp(\tpair{u_b.u_e.i}{x{:}\tau.\varphi}{\varphi})}{u_b.u_e.i.\varphi}}
\end{mathpar}

\paragraph{Typing rules for expressions}
~\\

\noindent\framebox{$u;\Theta; \Sigma;\Gamma^L;\Gamma;\Delta \vdash e: \tau$}
\begin{mathpar}
\inferrule*[right=E-Var]{
\Theta;\Sigma;u,\Gamma^L;\Gamma \vdash\Delta \jok 
    \\ x : \tau \in \Gamma}{u;\Theta; \Sigma;\Gamma^L;\Gamma;\Delta  \vdash x:
  \tau}\and
\inferrule*[right=Un]{
\Theta;\Sigma;\Gamma^L, u,\Gamma\vdash\Delta \jok\\
\fv(e)\subseteq\dom(\Gamma)}
{u;\Theta;\Sigma;\Gamma^L;\Gamma;\Delta\vdash e: \texp}\and
\inferrule*[right=E-BaseVal]{
  \Theta;\Sigma;\Gamma^L,u,\Gamma \vdash\Delta \jok 
   }{u;\Theta; \Sigma;\Gamma^L;\Gamma;\Delta  \vdash \const:
  \bt}\and
\inferrule*[right=E-Fun]{\Theta;\Sigma;\Gamma^L,u,\Gamma\vdash \tau_1 \jok 
\\u;\Theta; \Sigma;\Gamma^L;\Gamma, x: \tau_1 ;\Delta \vdash_Q e
  : \tau_2}{u;\Theta; \Sigma;\Gamma^L;\Gamma;\Delta  \vdash_Q \lambda x.e : \Pi x{:}\tau_1. \tau_2}\and
\inferrule*[right=E-App]{u;\Theta; \Sigma;\Gamma^L;\Gamma;\Delta  \vdash_Q e_1 :
  \Pi x{:}\tau_1.\tau_2 
  \\ u;\Theta; \Sigma;\Gamma^L;\Gamma;\Delta  \vdash_Q e_2: \tau_1}{
  u;\Theta; \Sigma;\Gamma^L;\Gamma;\Delta  \vdash_Q e_1\;e_2 : \tau_2[e_2/x]}\and
\inferrule*[right=E-TFun]{u;\Theta, X ; \Sigma;\Gamma^L;\Gamma;\Delta  \vdash_Q e
  : \tau}{u;\Theta; \Sigma;\Gamma^L;\Gamma;\Delta  \vdash_Q \Lambda X.e : \forall X. \tau}\and
\inferrule*[right=E-TApp]{u;\Theta; \Sigma;\Gamma^L;\Gamma;\Delta 
  \vdash_Q  e: \forall X.\tau_1 
\\ \Theta;\Sigma;\Gamma^L,u,\Gamma \vdash  \tau \jok}{
  u;\Theta; \Sigma;\Gamma^L;\Gamma;\Delta  \vdash_Q e\;\cdot :
  \tau_1[\tau/X]}\and
\end{mathpar}

\begin{mathpar}
\inferrule*[right=Eq]{u;\Theta;\Sigma;\Gamma^L;\Gamma;\Delta \vdash_Q e : \tau\\
\Theta;\Sigma;\Gamma^L,u;\Gamma;\Delta \vdash  e \equiv e' \jtrue 
\\ \fv(e')\subseteq\dom(\Gamma)}{
u;\Theta;\Sigma;\Gamma^L;\Gamma;\Delta \vdash_Q e' : \tau }
\and
\inferrule*[right=Confine]
{  
 \mbox{$\varphi$ is trace composable} 
\\ u_b, u_e, i; \Theta;\Sigma;\Gamma^L,u;\Gamma;\Delta \vdash \varphi\jonempty 
\\ u_b{:}\bt, u_e{:}\bt, i{:}\bt \vdash
\varphi\jok
 \\\fa(e)=\emptyset 
\\\fv(e)\subseteq\Gamma
 \\\cnfn{\tau}{u_b.u_e.i.\varphi}
 \\\cnfn{\Gamma}{u_b.u_e.i.\varphi}
}
{u;\Theta;\Sigma;\Gamma^L;\Gamma;\Delta  \vdash_{u_b.u_e.i.\varphi} e : \tau}
\and
\inferrule*[right=Conf-sub]
{u;\Theta;\Sigma;\Gamma^L;\Gamma;\Delta  \vdash e : \tau
\\ u_b{:}\bt, u_e{:}\bt, i{:}\bt \vdash \varphi\jok
}
{u;\Theta;\Sigma;\Gamma^L;\Gamma;\Delta  \vdash_{u_b.u_e.i.\varphi} e : \tau}
\end{mathpar}

\begin{mathpar}
\mprset{flushleft}
\inferrule*[right=Comp]{
 u_1, u_2, i;  \Theta;\Sigma;\Gamma^L;u_e,\Gamma;\Delta, u_1\geq u_e 
\vdash_Q c : (x{:}\tau.\varphi_1, \varphi_2) 
\\
   \Theta;\Sigma;\Gamma^L, u_e{:}\bt, u_1{:}\bt, u_2{:}\bt, i{:}\bt;\Gamma, x:\tau ;\Delta
\vdash \varphi_1 \Rightarrow \varphi'_1  \jtrue
\\\\
   \Theta;\Sigma;\Gamma^L,u_e{:}\bt, u_1{:}\bt, u_2{:}\bt, i{:}\bt ;\Gamma;\Delta
 \vdash \varphi_2 \imp \varphi'_2  \jtrue
 \\ \Theta;\Sigma;\Gamma^L,u_e{:}\bt;\Gamma\vdash
 \tpair{u_1.u_2.i}{x{:}\tau.\varphi'_1}{\varphi'_2}\jok
\\\fv(c)\subseteq\dom(\Gamma)
}{
u_e; \Theta;\Sigma;\Gamma^L;\Gamma;\Delta  \vdash_Q \ecomp(c) :
  \tcomp(\tpair{u_1.u_2.i}{x{:}\tau.\varphi'_1}{\varphi'_2})}\and
\end{mathpar}

\paragraph{Typing rules for silent threads}
~\\

\noindent\framebox{$\Xi;\Theta;\Sigma;\Gamma^L;\Gamma;\Delta \vdash \varphi\jonempty$}
\begin{mathpar}
\inferrule*[right=Silent]{
  \Theta;\Sigma;\Gamma^L; \Xi, \Gamma;\Delta \vdash \varphi\jtrue
\\ \Gamma^L, \Xi, \Gamma \vdash \varphi \jok
}{
\Xi;\Theta;\Sigma;\Gamma^L;\Gamma;\Delta  \vdash \varphi\jonempty
}
\end{mathpar}

\paragraph{Typing rules for actions}
~\\

\noindent\framebox{$u:\bt;\Theta;\Sigma;\Gamma^L;\Gamma;\Delta \vdash_Q a:: \alpha$}
\begin{mathpar}
\inferrule{\Theta ; \Sigma;\Gamma^L, u:\bt; \Gamma\vdash \Delta \jok 
\\ A :: \alpha \in \Sigma}{u:\bt; \Theta;\Sigma;\Gamma^L; \Gamma;\Delta \vdash
  A :: \alpha}\and
\inferrule{u:\bt; \Theta;\Sigma;\Gamma^L; \Gamma;\Delta \vdash_Q a :: \Pi
  x{:}\tau.\alpha 
\\ u:\bt; \Theta;\Sigma;\Gamma^L; \Gamma;\Delta\vdash_Q  e:
  \tau
}{u:\bt; \Theta;\Sigma;\Gamma^L; \Gamma;\Delta \vdash_Q a \; e :: \alpha[e/x]}\and
\inferrule{u:\bt; \Theta;\Sigma;\Gamma^L; \Gamma;\Delta \vdash_Q a :: \forall X.\alpha
  \\ 
\Theta;\Sigma;\Gamma^L,u:\bt,\Gamma\vdash \tau \jok}{u:\bt; \Theta;\Sigma;\Gamma^L; \Gamma;\Delta \vdash_Q a \; \cdot :: \alpha[\tau/X]}
\end{mathpar}

\paragraph{Logical reasoning rules}
~\\

\noindent\framebox{$\Theta; \Sigma;\Gamma^L; \Gamma ;\Delta
  \vdash \varphi  \jtrue$} 
\begin{mathpar}
\inferrule*[right=Honest]{
   u_1, u_2, i; \Theta;\Sigma;\Gamma^L;\cdot; \Delta \vdash c :  \varphi
\\  \Theta;\Sigma;\Gamma^L; \cdot; \Delta \vdash \pred{start}(I, c, u)
\jtrue 
\\ \Theta;\Sigma\vdash\Gamma^L, \Gamma \jok
 }{
 \Theta;\Sigma;\Gamma^L; \Gamma; \Delta  \vdash \forall u'{:}\bt.(u'{>}u)  \imp
 \varphi[u,u',I/u_1, u_2,i] \jtrue}
\and
\inferrule*[right=Cut]{
\Theta;\Sigma;\Gamma^L;\Gamma;\Delta_1  \vdash \varphi \jtrue\\
\Theta;\Sigma;\Gamma^L;\Gamma;\Delta_1 , \varphi, \Delta_2 \vdash \varphi' \jtrue\\
}{
\Theta;\Sigma;\Gamma^L;\Gamma;\Delta_1, \Delta_2  \vdash \varphi' \jtrue\\
}\and
\inferrule*[right=Init]{
\Theta;\Sigma ; \Gamma^L,\Gamma \vdash\Delta\jok \\
\varphi \in \Delta
}{
\Theta; \Sigma;\Gamma^L;\Gamma;\Delta  \vdash \varphi \jtrue
}\and
\inferrule*[right=$\neg$I]{ 
 \Theta; \Sigma;\Gamma^L;\Gamma;\Delta_1 ,\varphi,\Delta_2 \vdash \cdot
}{ \Theta; \Sigma;\Gamma^L;\Gamma;\Delta_1, \Delta_2  \vdash
  \neg\varphi\jtrue}\and
\inferrule*[right=$\neg$E]{ 
 \Theta; \Sigma;\Gamma^L;\Gamma;\Delta  \vdash \neg\varphi\jtrue
}{ \Theta; \Sigma;\Gamma^L;\Gamma;\Delta , \varphi \vdash \cdot
  }\and

 \inferrule*[right=$\conj$I]{
 \Theta; \Sigma;\Gamma^L;\Gamma;\Delta  \vdash \varphi_1 \jtrue
\\  \Theta; \Sigma;\Gamma^L;\Gamma;\Delta  \vdash \varphi_2 \jtrue
 }{\Theta; \Sigma;\Gamma^L;\Gamma;\Delta  \vdash \varphi_1\conj\varphi_2 \jtrue}\and
\end{mathpar}

\begin{mathpar}
 \inferrule*[right=$\conj$E]{
 i\in[1,2], 
 \Theta; \Sigma;\Gamma^L;\Gamma;\Delta  \vdash \varphi_1\conj\varphi_2  \jtrue
 }{\Theta; \Sigma;\Gamma^L;\Gamma;\Delta  \vdash \varphi_i \jtrue}\and
 \inferrule*[right=$\disj$I]{
 i\in[1,2], 
 \Theta; \Sigma;\Gamma^L;\Gamma;\Delta  \vdash \varphi_i \jtrue
 }{\Theta; \Sigma;\Gamma^L;\Gamma;\Delta  \vdash \varphi_1\disj\varphi_2 \jtrue}\and
 \inferrule*[right=$\disj$E]{
 \Theta; \Sigma;\Gamma^L;\Gamma;\Delta  \vdash \varphi_1\disj\varphi_2 \jtrue
 \\
 \Theta; \Sigma;\Gamma^L;\Gamma;\Delta , \varphi_1, \Gamma' \vdash \varphi\jtrue
 \\ \Theta; \Sigma;\Gamma^L;\Gamma;\Delta , \varphi_2, \Gamma'  \vdash \varphi\jtrue
 }{\Theta; \Sigma;\Gamma^L;\Gamma;\Delta, \Gamma'  \vdash \varphi\jtrue}\and
\inferrule*[right=$\forall$I]{
\Theta; \Sigma;\Gamma^L, x:\tau;\Gamma;\Delta  \vdash \varphi \jtrue
}{\Theta; \Sigma;\Gamma^L;\Gamma;\Delta  \vdash \forall
  x{:}\tau. \varphi \jtrue}\and
\inferrule*[right=$\forall$E]{
\Theta; \Sigma;\Gamma^L;\Gamma;\Delta \vdash \forall
  x{:}\tau. \varphi \jtrue \\
 \Gamma^L \vdash t: \tau
}{\Theta; \Sigma;\Gamma^L;\Gamma;\Delta  \vdash \varphi[t/x]\jtrue}\and
\inferrule*[right=$\exists$I]{
\Theta; \Sigma;\Gamma^L;\Gamma;\Delta \vdash \varphi[t/x]\jtrue
\\  \Gamma^L \vdash t: \tau
}{\Theta; \Sigma;\Gamma^L;\Gamma;\Delta  \vdash \exists
  x{:}\tau. \varphi \jtrue }\and
\inferrule*[right=$\exists$E]{
\Theta; \Sigma;\Gamma^L;\Gamma;\Delta  \vdash \exists x{:}\tau.\varphi \jtrue
\\
\Theta; \Sigma;\Gamma^L, a{:}\tau;\Gamma;\Delta, \varphi[a/x] 
 \vdash \varphi'\jtrue
\\ a\notin\fv(\varphi')
}{\Theta; \Sigma;\Gamma^L;\Gamma;\Delta  \vdash \varphi'\jtrue}\and
\end{mathpar}

\paragraph{Typing rules for computations} 
We summarize the typing rules for computations in
Figures~\ref{fig:typing-comp-1} and ~\ref{fig:typing-comp-2}.

\begin{figure*}
\paragraph{Fixpoint}\framebox{$u:\bt;\Theta; \Sigma;\Gamma^L; \Gamma;\Delta \vdash_Q c : \eta$}
\begin{mathpar}
\mprset{flushleft}
 \inferrule*[right=Fix]{
\Gamma_1 =  y:\tau, f: \Pi
 y{:}\tau. \tcomp(u_1.u_3.i.(x{:}\tau_1.\varphi, \varphi'))
\\\\
 u_1: \bt, u_2:\bt, i: \bt; \Theta;\Sigma;\Gamma^L; \Gamma;\Delta,
  u\leq  u_1\leq u_2 \vdash\varphi_0 \jonempty
\\\\ u_2, u_3, i; \Theta;
   \Sigma;\Gamma^L,u_1: \bt, u:\bt;
   \Gamma, \Gamma_1;\Delta,  u_2 <  u_3, \varphi_0  \vdash_Q c : x{:}\tau_1.\varphi_1
\\\\ u_2, u_3, i; \Theta;
   \Sigma;\Gamma^L; u_1: \bt, u:\bt;
   \Gamma, \Gamma_1;\Delta, u_2 \leq   u_3, \varphi_0  \vdash_Q c :\varphi_2
   \\\\ \Theta;\Sigma;\Gamma^L, u_1: \bt, u:\bt,  u_2: \bt, u_3:\bt,  i:\bt; \Gamma, \Gamma_1,
  x:\tau_1;\Delta   \vdash 
  (\varphi_0 \conj \varphi_1) \Rightarrow \varphi \jtrue
   \\\\ \Theta;\Sigma;\Gamma^L , u_1: \bt , u_2: \bt, u_3:\bt,
   i:\bt, u:\bt; \Gamma,\Gamma_1;\Delta
             \vdash (\varphi_0 \conj \varphi_2 \imp \varphi') \jtrue
   \\\\ \Theta;\Sigma;\Gamma^L, u_1:\bt,u_3:\bt ,
   i:\bt , u:\bt; \Gamma, y:\tau;\Delta\vdash 
   \varphi_0[u_3/u_2] \imp \varphi' \jtrue
\\\\ \Theta;\Sigma;\Gamma^L,u:\bt;\Gamma \vdash \Pi   y{:}\tau.u_1.u_3.i.(x{:}\tau_1.\varphi,
   \varphi') \jok
\\ \fv(\cfix(f(y).c))\in\dom(\Gamma)
 }{ u; \Theta;\Sigma;\Gamma^L; \Gamma;\Delta  \vdash_Q \cfix(f(y).c) : \Pi
   y{:}\tau.u_1.u_3.i.(x{:}\tau_1.\varphi, \varphi') }
\end{mathpar}

\paragraph{Partial correctness typing}\framebox{$\Xi; \Theta; \Sigma;\Gamma^L; \Gamma;\Delta  \vdash_Q c : \eta$}
 \begin{mathpar}
\mprset{flushleft}
\inferrule*[right=App]{
 u_1: \bt; \Theta; \Sigma;\Gamma^L,  u_2:\bt, i: \bt; \Gamma;\Delta  \vdash_Q c: \Pi
  y{:}\tau.u_b.u_e.j.(x{:}\tau'.\varphi, \varphi')\\
u_1: \bt;\Theta; \Sigma;\Gamma^L, u_2:\bt,
  i: \bt; \Gamma;\Delta  \vdash_Q e: \tau \\
\fv(c\;e)\subseteq\dom(\Gamma)\\
\\\mbox{let}~\gamma=[u_1,u_2,i/u_b,u_e,j]
\\ \Theta;\Sigma;\Gamma^L,\Gamma \vdash u_1.u_2.i.((x{:}\tau'.\varphi)\gamma[e/y], \varphi'\gamma[e/y])\jok
}{
u_1: \bt, u_2:\bt, i: \bt; \Theta; \Sigma;\Gamma^L; \Gamma ;\Delta
\vdash_Q c\; e: ((x{:}\tau'\gamma.\varphi\gamma)[e/y], \varphi'\gamma[e/y])
}\and
 \inferrule*[right=Act]{u_1:\bt;\Theta; \Sigma;\Gamma^L,u_2:\bt,i:\bt; \Gamma;\Delta \vdash_Q a ::
   \kact(\tpair{u_b.u_e.j}{x{:}\tau.\varphi_1}{\varphi_2} )
\\
   u_1: \bt, u_2:\bt, i: \bt; \Theta; \Sigma;\Gamma^L; \Gamma;\Delta  \vdash
   \varphi \jonempty 
\\ \fv(a)\in\dom(\Gamma)
\\ \Theta; \Sigma;\Gamma^L; \Gamma\vdash
    u_1.u_2.i.(x{:}\tau.\varphi_1[u_1,u_2,i/u_b,u_e,j],
 \varphi_2[u_1,u_2,i/u_b,u_e,j]\conj \varphi)\jok
}{
 u_1: \bt, u_2:\bt, i: \bt; \Theta; \Sigma;\Gamma^L; \Gamma;\Delta
    \vdash_Q \cact(a) : (x{:}\tau.\varphi_1[u_1,u_2,i/u_b,u_e,j],
 \varphi_2[u_1,u_2,i/u_b,u_e,j]\conj \varphi)}\and
 \inferrule*[right=SeqE]{
 u_0: \bt, u_1:\bt, i: \bt; 
 \Theta; \Sigma;\Gamma^L,  u_3:\bt; \Gamma;\Delta , u_0\leq u_1 \vdash \varphi_0 \jonempty
 \\\\ u_1:\bt;\Theta; \Sigma;\Gamma^L, u_0: \bt, u_3:\bt, i: \bt;\Gamma;\Delta, \varphi_0 
 \vdash_Q e_1 : \tcomp(\tpair{u_b, u_e, j}{x{:}\tau.\varphi_1}{\varphi'_1})
\\\\\mbox{let}~\gamma=[u_1, u_2, i/u_b, u_e, j]
   \\\\ u_2:\bt, u_3:\bt, i:\bt; \Theta; \Sigma;\Gamma^L, u_0: \bt, u_1:\bt;\Gamma,
   x: \tau\gamma;\Delta,   u_2 < u_3,\varphi_0 , \varphi_1\gamma \vdash_Q 
   c_2: y{:}\tau'.\varphi_2
\\\\
  \Theta; \Sigma;\Gamma^L, u_0: \bt, u_3:\bt, i: \bt; \Gamma,
  u_1{:}\bt,u_2{:}\bt, x{:}\tau\gamma, y:\tau' ;\Delta  \vdash 
 (\varphi_0 \conj \varphi_1\gamma \conj
  \varphi_2) \Rightarrow \varphi \jtrue 
\\\\\fv(\clete(e_1, x.c_2))\subseteq\dom(\Gamma)
\\  \Theta;\Sigma;\Gamma^L, u_0: \bt, u_3:\bt, i: \bt;\Gamma, y:\tau'
 \vdash  \varphi\jok
}{
  u_0: \bt, u_3:\bt, i: \bt; \Theta; \Sigma;\Gamma^L; \Gamma;\Delta  \vdash_Q
   \clete(e_1, x.c_2): y{:}\tau'.\varphi }\and
 \inferrule*[right=SeqC]{
    u_0: \bt, u_1:\bt, i: \bt; 
 \Theta; \Sigma;\Gamma^L;  u_3:\bt,\Gamma;\Delta , u_0\leq u_1 \vdash \varphi_0 \jonempty
   \\ u_1: \bt, u_2:\bt, i: \bt; \Theta; \Sigma;\Gamma^L, u_0:
   \bt,  u_3:\bt;\Gamma;\Delta,
    u_1 < u_2, \varphi_0 \vdash_Q c_1 : x{:}\tau.\varphi_1
  \\\\ u_2: \bt, u_3:\bt, i: \bt; \Theta; \Sigma;\Gamma^L,
   u_0: \bt, u_1:\bt;\Gamma,x : \tau;\Delta,u_2 < u_3,\varphi_0, \varphi_1 
    \vdash_Q 
    c_2: y{:}\tau'.\varphi_2
\\\\
  \Theta; \Sigma;\Gamma^L,u_1{:}\bt, u_2{:}\bt,  u_0: \bt, u_3:\bt, i: \bt;\Gamma,
   x{:}\tau,y:\tau';\Delta  \vdash 
  (\varphi_0 \conj \varphi_1 \conj
  \varphi_2) \Rightarrow \varphi \jtrue 
\\  \Theta;\Sigma;\Gamma^L, u_0: \bt, u_3:\bt, i: \bt;\Gamma,
y:\tau' \vdash  \varphi \jok
\\\fv(\cletc(c_1, x.c_2))\subseteq\dom(\Gamma)
}{
  u_0: \bt, u_3:\bt, i: \bt; \Theta; \Sigma;\Gamma^L; \Gamma;\Delta  \vdash_Q
   \cletc(c_1, x.c_2): y{:}\tau'.\varphi
}\and
 \inferrule*[right=Ret]{ 
u_2:\bt; \Theta; \Sigma;\Gamma^L,  u_1: \bt, i: \bt;
\Gamma;\Delta 
\vdash_Q e: \tau \\
    u_1: \bt, u_2:\bt, i: \bt; \Theta; \Sigma;\Gamma^L; \Gamma;\Delta  \vdash
   \varphi \jonempty
\\ \fv(e)\subseteq\dom(\Gamma) }{ 
   u_1: \bt, u_2:\bt, i: \bt;  \Theta; \Sigma;\Gamma^L; \Gamma;\Delta 
     \vdash_Q \cret(e): x{:}\tau.((x \equiv e) \conj
   \varphi)}\and
\inferrule*[right=If]{
    u_0: \bt, u_1:\bt, i: \bt; 
 \Theta; \Sigma;\Gamma^L, u_2: \bt; \Gamma;\Delta , u_0\leq u_1 \vdash \varphi_0 \jonempty
\\\\ u_0: \bt;\Theta; \Sigma;\Gamma^L, u_2:\bt, i: \bt; \Gamma;\Delta  \vdash_Q e : \bt 
\\ u_1: \bt, u_2:\bt, i: \bt;  \Theta; \Sigma;\Gamma^L, u_0:\bt;
\Gamma;\Delta, u_1 < u_2, \varphi_0, 
 (\pred{eval}\, e\, \btrue) \vdash_Q  c_1:
x{:}\tau.\varphi_1
\\ u_1: \bt, u_2:\bt, i: \bt;  \Theta; \Sigma;\Gamma^L, u_0:\bt
; \Gamma;\Delta 
    , u_1 < u_2, \varphi_0, (\pred{eval}\, e\, \bfalse) \vdash_Q  c_2: x{:}\tau.\varphi_2
\\  \Theta; \Sigma;\Gamma^L,u_0: \bt, u_2:\bt, i: \bt;\Gamma, u_1:\bt, x:\tau ;\Delta
  \vdash (\varphi_0\conj \varphi_i)\Rightarrow\varphi
  ~\mbox{where}~i\in[1,2]
\\ \Theta;\Sigma;\Gamma^L, u_0: \bt, u_2:\bt, i: \bt;\Gamma,x:\tau
\vdash  \varphi\jok
\\\fv(e)\cup \fv(c_1)\cup\fv(c_2)\subseteq\dom(\Gamma)
}{ u_0: \bt, u_2:\bt, i: \bt;  \Theta; \Sigma;\Gamma^L; \Gamma;\Delta 
     \vdash_Q \cif\ e\ \cthen\ c_1\ \celse\ c_2:
  x{:}\tau.\varphi}\and
\end{mathpar} 
\caption{Computation typing rules (1)}
\label{fig:typing-comp-1}
\end{figure*}
\begin{figure*}
\paragraph{Invariant typing}\framebox{$\Xi; \Theta; \Sigma;\Gamma^L; \Gamma;\Delta  \vdash_Q c : \eta$}
\begin{mathpar}
\mprset{flushleft}
 \inferrule*[right=SeqEI]{
 \Theta ;\Sigma;\Gamma^L,  u_0: \bt, u_3:\bt, i: \bt;\Gamma;\Delta\vdash  \varphi\jok
\\  u_0: \bt, u_1:\bt, i: \bt; 
 \Theta; \Sigma;\Gamma^L, u_3:\bt;\Gamma;\Delta , u_0\leq u_1 \vdash \varphi_0 \jonempty
\\  u_0: \bt, u_3:\bt, i: \bt; 
 \Theta; \Sigma;\Gamma^L; \Gamma;\Delta , u_0\leq u_3 \vdash \varphi'_0 \jonempty
\\\\  u_1: \bt; \Theta; \Sigma;\Gamma^L, u_0: \bt, u_3:\bt,
i: \bt; \Gamma;\Delta,\varphi_0
 \vdash_Q e_1 : \tcomp(\tpair{u_b, u_e, j}{x{:}\tau.\varphi_1}{\varphi'_1})
  \\ u_2: \bt, u_3:\bt, i: \bt; \Theta; \Sigma;\Gamma^L,  u_0: \bt;\Gamma;\Delta,
   u_1:\bt, x: \tau;
  u_1 < u_2\leq u_3,\varphi_0, \varphi_1[u_1, u_2, i/u_b, u_e, j] \vdash_Q 
   c_2: \varphi_2
\\\\
 \Theta; \Sigma;\Gamma^L,  u_0: \bt, u_3:\bt, i: \bt;\Gamma;\Delta  \vdash 
   \varphi'_0 \imp \varphi \jtrue 
\\\\
 \Theta; \Sigma;\Gamma^L,  u_0: \bt, u_3:\bt, i: \bt;\Gamma, u_1{:}\bt;\Delta  \vdash 
\varphi_0 \conj \varphi'_1[u_1, u_3, i/u_b, u_e, j] \imp \varphi \jtrue 
\\\\
  \Theta; \Sigma;\Gamma^L,  u_0: \bt, u_3:\bt, i: \bt;\Gamma, 
 u_1{:}\bt, u_2{:}\bt, x{:}\tau;\Delta  \vdash 
  (\varphi_0
  \conj \varphi_1[u_1, u_2, i/u_b, u_e, j] \conj
  \varphi_2) \imp \varphi \jtrue 
\\\\\fv(\clete(e_1, x.c_2))\subseteq\dom(\Gamma)
}{
  u_0: \bt, u_3:\bt, i: \bt; \Theta; \Sigma;\Gamma^L; \Gamma;\Delta  \vdash_Q
   \clete(e_1, x.c_2): \varphi }\and
 \inferrule*[right=SeqCI]{
 \Theta ;\Sigma;\Gamma^L,  u_0: \bt, u_3:\bt, i: \bt;\Gamma;\Delta\vdash  \varphi \jok
\\  u_0: \bt, u_1:\bt, i: \bt; 
 \Theta; \Sigma;\Gamma^L, u_3:\bt;\Gamma;\Delta , u_0\leq u_1 \vdash \varphi_0 \jonempty
\\  u_0: \bt, u_3:\bt, i: \bt; 
 \Theta; \Sigma;\Gamma^L; \Gamma;\Delta , u_0\leq u_3 \vdash \varphi'_0 \jonempty
 \\u_1: \bt, u_2:\bt, i: \bt;  \Theta; \Sigma;\Gamma^L,u_0:
 \bt, u_3:\bt;\Gamma;\Delta,   u_1 < u_2, \varphi_0 \vdash_Q c_1 : x{:}\tau.\varphi_1
\\u_1: \bt, u_3:\bt, i: \bt;  \Theta; \Sigma;\Gamma^L;\Gamma;\Delta, u_0:
 \bt, u_1 \leq u_3, \varphi_0 \vdash_Q c_1 : \varphi'_1
  \\ u_2: \bt, u_3:\bt, i: \bt; \Theta; \Sigma;\Gamma^L; \Gamma;\Delta,
   u_0: \bt, u_1:\bt, x: \tau, u_2\leq u_3,\varphi_0, \varphi_1 \vdash_Q 
   c_2: \varphi_2
\\\\
 \Theta; \Sigma;\Gamma^L,  u_0: \bt, u_3:\bt, i: \bt;\Gamma;\Delta  \vdash 
   \varphi'_0 \imp \varphi \jtrue 
\\\\
 \Theta; \Sigma;\Gamma^L,  u_0: \bt, u_3:\bt, i: \bt;\Gamma,u_1{:}\bt;\Delta  \vdash 
  (\varphi_0 \conj \varphi'_1) \imp \varphi \jtrue 
\\\\
  \Theta; \Sigma;\Gamma^L,  u_0: \bt, u_3:\bt, i: \bt;\Gamma, u_1{:}\bt, u_2{:}\bt,
  x{:}\tau;\Delta  \vdash (\varphi_0\conj \varphi_1 \conj
  \varphi_2) \imp \varphi \jtrue 
\\\fv(\cletc(c_1, x.c_2))\subseteq\dom(\Gamma)
}{
  u_0: \bt, u_3:\bt, i: \bt; \Theta; \Sigma;\Gamma^L; \Gamma;\Delta  \vdash_Q
   \cletc(c_1, x.c_2): \varphi }\and
 \inferrule*[right=RetI]{ 
\fv(e)\subseteq\dom(\Gamma)
\\
    u_1: \bt, u_2:\bt, i: \bt; \Theta; \Sigma;\Gamma^L; \Gamma;\Delta  \vdash
   \varphi \jonempty }{ 
   u_1: \bt, u_2:\bt, i: \bt;  \Theta; \Sigma;\Gamma^L; \Gamma;\Delta 
     \vdash_Q \cret(e):  \varphi}  \and
\inferrule*[right=IfI]{
 \Theta; \Sigma;\Gamma^L, u_0: \bt, u_2:\bt, i: \bt;\Gamma;\Delta  \vdash_Q e : \bt 
\\    u_0: \bt, u_2:\bt, i: \bt; \Theta; \Sigma;\Gamma^L;
\Gamma;\Delta, u_1\leq u_2 \vdash
   \varphi_0 \jonempty
\\    u_0: \bt, u_1:\bt, i: \bt; \Theta; \Sigma;\Gamma^L, u_2:\bt;\Gamma;\Delta,
u_0 \leq u_1\vdash
   \varphi'_0 \jonempty
\\ u_1: \bt, u_2:\bt, i: \bt;  \Theta; \Sigma;\Gamma^L,  u_0: \bt;\Gamma;\Delta, 
    u_1 \leq u_2,\varphi'_0, (\pred{eval}\, e\, \btrue) \vdash_Q  c_1:
\varphi_1
\\ u_1: \bt, u_2:\bt, i: \bt;  \Theta; \Sigma;\Gamma^L,  u_0: \bt;\Gamma;\Delta,
 u_1 \leq u_2,\varphi'_0,
(\pred{eval}\, e\, \bfalse)\vdash_Q  c_2: \varphi_2
\\ \Theta; \Sigma;\Gamma^L,  u_0: \bt, u_2:\bt, i: \bt;\Gamma;\Delta
\vdash \varphi_0 \imp \varphi
 \\ \Theta; \Sigma;\Gamma^L,  u_0: \bt, u_2:\bt, i: \bt;\Gamma, u_1:\bt;\Delta 
 \vdash (\varphi'_0 \conj \varphi_1) \imp \varphi
\\ \Theta; \Sigma;\Gamma^L,  u_0: \bt, u_2:\bt, i: \bt;\Gamma, u_1:\bt;\Delta 
 \vdash (\varphi'_0 \conj \varphi_2) \imp \varphi
\\ \Theta;\Sigma;\Gamma^L,  u_0: \bt, u_2:\bt, i: \bt;\Gamma;\Delta
\vdash  \varphi \jok 
\\\fv(e)\cup\fv(c_1)\cup\fv(c_2)\subseteq\dom(\Gamma)
}{ u_0: \bt, u_2:\bt, i: \bt;  \Theta; \Sigma;\Gamma^L; \Gamma;\Delta 
     \vdash_Q \cif\ e\ \cthen\ c_1\ \celse\ c_2:\varphi}\and
\inferrule*[right=ImpI]{ u_1: \bt, u_2:\bt, i: \bt;  \Theta; \Sigma;\Gamma^L; \Gamma;\Delta 
    , \varphi_1 \vdash_Q c: \varphi_2}
{ u_1: \bt, u_2:\bt, i: \bt;  \Theta; \Sigma;\Gamma^L; \Gamma;\Delta 
     \vdash_Q c: \varphi_1\imp\varphi_2}
\end{mathpar}

\paragraph{Misc}

\begin{mathpar}
\inferrule*[right=Pair]{ k\in[1,2]
\\ u_1, u_2, i;  \Theta; \Sigma;\Gamma^L; \Gamma;\Delta \vdash_Q c: (\eta_1, \eta_2)
}{ u_1, u_2, i;  \Theta; \Sigma;\Gamma^L; \Gamma;\Delta \vdash_Q c: \eta_k}
\and
\inferrule*[right=Proj]{ u_1, u_2, i;  \Theta; \Sigma;\Gamma^L; \Gamma;\Delta 
   \vdash_Q c: x{:}\tau.\varphi_1
\\ u_1, u_2, i;  \Theta; \Sigma;\Gamma^L; \Gamma;\Delta 
   \vdash_Q c: \varphi_2 }
{ u_1, u_2, i;  \Theta; \Sigma;\Gamma^L; \Gamma;\Delta 
    \vdash_Q c: (x{:}\tau.\varphi_1, \varphi_2)}\and
 \inferrule*[right=CutC]{
 \Theta;\Sigma;\Gamma^L,\Xi;\Gamma;\Delta_1  \vdash \varphi \jtrue\\
 \Xi; \Theta;\Sigma;\Gamma^L;\Gamma;\Delta_1, \varphi,\Delta_2 \vdash_Q c:\eta
 }{
 \Xi; \Theta;\Sigma;\Gamma^L;\Gamma;\Delta_1,\Delta_2  \vdash_Q c:\eta
 }
\end{mathpar}
\caption{Computation typing (2)}
\label{fig:typing-comp-2}
\end{figure*}

\begin{figure*}[t!]
\begin{mathpar}
\mprset{flushleft}
\inferrule*[right=SeqEComp]{
 u_0: \bt, u_2:\bt, i: \bt; 
 \Theta; \Sigma;\Gamma^L, u_3:\bt;\cdot;\Delta , u_0\leq u_1\leq u_2 \vdash \varphi_0 \jonempty
 \\\\ u_1:\bt;\Theta; \Sigma;\Gamma^L,u_0: \bt, u_2:\bt,u_3:\bt, i: \bt;\cdot;\varphi_0 
 \vdash_{Q1} e_1 : \tcomp(\tpair{u_b, u_e, j}{x{:}\tau.\varphi_1}{\varphi'_1})
\\\\\mbox{let}~\gamma=[u_1, u_2, i/u_b, u_e, j]
   \\\\ u_2, u_3, i; \Theta; \Sigma;\Gamma^L, u_0: \bt, u_1:\bt;\cdot;\Delta,  
 u_2 < u_3,\varphi_0 , \varphi_1\gamma \vdash_{Q2} 
   c_2: y{:}\tau'.\varphi_2
\\\\
  \Theta; \Sigma;\Gamma^L, u_0: \bt, u_3:\bt, i: \bt;\Gamma,
  u_1{:}\bt,u_2{:}\bt, y:\tau' ;\Delta  \vdash 
 (\varphi_0 \conj \varphi_1\gamma \conj
  \varphi_2) \Rightarrow \varphi \jtrue 
\\  \Theta;\Sigma;\Gamma^L, u_0: \bt, u_3:\bt, i: \bt,\Gamma, y:\tau'
 \vdash  \varphi\jok
}{
  u_0: \bt, u_3:\bt, i: \bt; \Theta; \Sigma;\Gamma^L; \Gamma;\Delta  \vdash_{Q2}
   (e_1;c_2): y{:}\tau'.\varphi }
\and
 \inferrule*[right=SeqCComp]{
    u_0: \bt, u_1:\bt, i: \bt; 
 \Theta; \Sigma;\Gamma^L, u_3:\bt;\cdot;\Delta , u_0\leq u_1 \vdash
 \varphi_0 \jonempty
   \\ u_1: \bt, u_2:\bt, i: \bt; \Theta; \Sigma;\Gamma^L, u_0:
   \bt,  u_3:\bt;\cdot;\varphi_0 \vdash_Q c_1 : x{:}\tau.\varphi_1
  \\\\ u_2: \bt, u_3:\bt, i: \bt; \Theta; \Sigma;\Gamma^L;
   u_0: \bt, u_1:\bt,\cdot;\Delta,u_2 < u_3,\varphi_0, \varphi_1 
    \vdash_{Q2} 
    c_2: y{:}\tau'.\varphi_2
\\\\
  \Theta; \Sigma;\Gamma^L;  u_0: \bt, u_3:\bt, i: \bt;\cdot,
   u_1{:}\bt, u_2{:}\bt, y:\tau';\Delta  \vdash 
  (\varphi_0 \conj \varphi_1 \conj
  \varphi_2) \Rightarrow \varphi \jtrue 
\\  \Theta;\Sigma;\Gamma^L;  u_0: \bt, u_3:\bt, i: \bt,\Gamma,
y:\tau' \vdash  \varphi \jok
}{
  u_0: \bt, u_3:\bt, i: \bt; \Theta; \Sigma;\Gamma^L; \Gamma;\Delta  \vdash_{Q2}
   (c_1; c_2): y{:}\tau'.\varphi
}
\and
\inferrule*[right=SeqEIComp]{
 \Theta ;\Sigma;\Gamma^L,  u_0: \bt, u_3:\bt, i: \bt;\Gamma;\Delta\vdash  \varphi\jok
\\  u_0: \bt, u_2:\bt, i: \bt; 
 \Theta; \Sigma;\Gamma^L, u_3:\bt;\cdot;\Delta , u_0\leq u_1\leq u_2 \vdash \varphi_0 \jonempty
\\\\  u_0: \bt, u_3:\bt, i: \bt; 
 \Theta; \Sigma;\Gamma^L; \cdot;\Delta , u_0\leq u_3 \vdash
 \varphi'_0 \jonempty
\\\\  u_1: \bt; \Theta; \Sigma;\Gamma^L, u_0: \bt, u_2:\bt,u_3:\bt,
i: \bt; \cdot;\varphi_0
 \vdash_Q e_1 : \tcomp(\tpair{u_b, u_e, j}{x{:}\tau.\varphi_1}{\varphi'_1})
  \\ u_2: \bt, u_3:\bt, i: \bt; \Theta; \Sigma;\Gamma^L,  u_0: \bt;\cdot;\Delta,
   u_1:\bt;
  u_1 < u_2\leq u_3,\varphi_0, \varphi_1[u_1, u_2, i/u_b, u_e, j] \vdash_Q 
   c_2: \varphi_2
\\\\
 \Theta; \Sigma;\Gamma^L,  u_0: \bt, u_3:\bt, i: \bt;\Gamma;\Delta  \vdash 
   \varphi'_0 \imp \varphi \jtrue 
\\\\
 \Theta; \Sigma;\Gamma^L,  u_0: \bt, u_3:\bt, i: \bt;\Gamma, u_1{:}\bt;\Delta  \vdash 
\varphi_0 \conj \varphi'_1[u_1, u_3, i/u_b, u_e, j] \imp \varphi \jtrue 
\\\\
  \Theta; \Sigma;\Gamma^L,  u_0: \bt, u_3:\bt, i: \bt;\Gamma, 
 u_1{:}\bt, u_2{:}\bt;\Delta  \vdash 
  (\varphi_0
  \conj \varphi_1[u_1, u_2, i/u_b, u_e, j] \conj
  \varphi_2) \imp \varphi \jtrue 
}{
  u_0: \bt, u_3:\bt, i: \bt; \Theta; \Sigma;\Gamma^L; \Gamma;\Delta  \vdash_Q
   (e_1;c_2): \varphi }\and
 \inferrule*[right=SeqCIComp]{
 \Theta ;\Sigma;\Gamma^L,  u_0: \bt, u_3:\bt, i: \bt;\Gamma;\Delta\vdash  \varphi \jok
\\  u_0: \bt, u_1:\bt, i: \bt; 
 \Theta; \Sigma;\Gamma^L, u_3:\bt;\cdot;\Delta , u_0\leq u_1 \vdash \varphi_0 \jonempty
\\\\  u_0: \bt, u_3:\bt, i: \bt; 
 \Theta; \Sigma;\Gamma^L; \cdot;\Delta , u_0\leq u_3 \vdash
 \varphi'_0 \jonempty
 \\u_1: \bt, u_2:\bt, i: \bt;  \Theta; \Sigma;\Gamma^L,u_0:
 \bt, u_3:\bt;\cdot;\varphi_0 \vdash_Q c_1 : x{:}\tau.\varphi_1
\\u_1: \bt, u_3:\bt, i: \bt;  \Theta; \Sigma;\Gamma^L;\cdot;\Delta, u_0:
 \bt, u_1 \leq u_3, \varphi_0 \vdash_Q c_1 : \varphi'_1
  \\ u_2: \bt, u_3:\bt, i: \bt; \Theta; \Sigma;\Gamma^L; \cdot;\Delta,
   u_0: \bt, u_1:\bt, x: \tau, u_2\leq u_3,\varphi_0, \varphi_1 \vdash_Q 
   c_2: \varphi_2
\\\\
 \Theta; \Sigma;\Gamma^L,  u_0: \bt, u_3:\bt, i: \bt;\Gamma;\Delta  \vdash 
   \varphi'_0 \imp \varphi \jtrue 
\\\\
 \Theta; \Sigma;\Gamma^L,  u_0: \bt, u_3:\bt, i: \bt;\Gamma,u_1{:}\bt;\Delta  \vdash 
  (\varphi_0 \conj \varphi'_1) \imp \varphi \jtrue 
\\\\
  \Theta; \Sigma;\Gamma^L,  u_0: \bt, u_3:\bt, i: \bt;\Gamma,
  u_1{:}\bt, u_2{:}\bt;
\Delta  \vdash (\varphi_0\conj \varphi_1 \conj
  \varphi_2) \imp \varphi \jtrue 
}{
  u_0: \bt, u_3:\bt, i: \bt; \Theta; \Sigma;\Gamma^L; \Gamma;\Delta  \vdash_Q
   (c_1;c_2): \varphi }\and
\end{mathpar}
\label{fig:comp-typing-sequencing}
\caption{Sequential composition}
\end{figure*}

\section{Semantics}

\paragraph{Semantics for invariant properties}
Next we define a logical relation indexed only by an invariant
property $u_b.u_e.i.\varphi$. 
\begin{tabbing}
$\rvinv{u_b.u_e.i.\varphi}{\trace;u}~=$ $\{(k, \nf)~|~$
$\nf \neq \lambda x.e, \Lambda X.e, \ecomp(c)\} $
\\$\cup \{(k,\ecomp(c))~|~$
$(k, c)\in \rcinv{u_b.u_e.i.\varphi}{\trace;u}
\}$
\\ $\cup \{(k,\lambda x.e')~|~$\=$
 \forall j, u', j<k, u'\geq u$
\\\>
$ (j,e')\in\reinv{u_b.u_e.i.\varphi}{\trace;u'}$ 
\\\>
$\Longrightarrow(j,
e[e'/x])\in\reinv{u_b.u_e.i.\varphi}{\trace;u'}\}$
\\ $\cup$
$\{(k,\Lambda x.e)~|~\forall j, j<k$
$\Longrightarrow(j, e)\in\reinv{u_b.u_e.i.\varphi}{\trace;u}
\}$
\end{tabbing}

\begin{tabbing}
$\reinv{u_b.u_e.i.\varphi}{\trace;u}$=
\\$\{(k, e)~|$\=
$\forall 0\leq m \leq k, e\rightarrow^{m} e'\nrightarrow$
\\\>$\Longrightarrow (n-m, e')\in \rvinv{u_b.u_e.i.\varphi}{\trace;u}\}$
\end{tabbing}

\begin{tabbing}
  $\rcinv{u_b.u_e.i.\varphi }{\trace; u} =$  
 $\{ (k, c)~|~ $\\
~~ \=  
 $\forall u_B, u_E, \iota, u\leq u_B\leq u_E,$
 let $\gamma=[u_B,u_E,\iota/u_1,u_2,i]$,
\\\>
$j_b$ is the length of the trace from time $u_B$ to the end of
$\trace$,
\\~\> 
$j_e$ is the length of the trace from time $u_E$ to the end of $\trace$
\\~\> $k \geq j_b \geq j_e$, 
\\\> the configuration at time $u_B$ is
 ~~$\steps{u_B}\sigma_{b}\rt \cdots, \langle\iota; x.c'::K; c\rangle\cdots$
\\\> between $u_B$ and $u_E$ (inclusive), the stack of 
thread $i$ always 
\\\>contains prefix $x.c'{::}K$
\\\> $\Longrightarrow$
 $\trace\vDash_\theta \varphi\}\cap$ 
\\$\{ (k,  c)~|~ $
 $\forall u_B, u_E, \iota, u\leq u_B\leq u_E,$
\\\quad\=
 let $\gamma=[u_B,u_E,\iota/u_1,u_2,i]$,
\\\>
 $j_b$ is the length of the trace from time $u_B$ to the end of
$\trace$
\\~\> $j_e$ is the length of the trace from time $u_E$ to the end of $\trace$
\\~\>$k \geq j_b > j_e$, 
\\\> the configuration at time $u_1$ is
$\steps{u_B}\sigma_{b}\rt \cdots, \langle\iota; x.c'::K;
c\rangle\cdots$
\\\> the configuration at time $u_E$ is
$\steps{u_E}\sigma_{e}\rt \cdots, \langle\iota; K;
c'[e'/x]\rangle\cdots$ 
\\\>between $u_B$ and $u_E$, the stack of thread $i$
always contains $x.c'{::}K$
\\\> $\Longrightarrow$ \=
 $(j_e, e') \in  \reinv{u_b.u_e.i.\varphi}{\trace; u_E} $
 and $\trace\vDash_\theta \varphi[e'/x]\}$ 
 \end{tabbing}

\begin{tabbing}
  $\rfinv{u_b.u_e.i.\varphi}{\trace;u} =$ \\
  $\{ (k, c)~|~ $\= 
  $ \forall e$, $(k, e)\in \reinv{u_b.u_e.i.\varphi}{\trace;u}
 \Longrightarrow $ 
\\\> $(k, c\;
  e)\in$ \=$\rcinv{u_b.u_e.i.\varphi}{\trace;u}\}$ 
 \end{tabbing}

\paragraph{Semantics for invariant indexed types}
Figure~\ref{fig:re-inv-rel} summaries the interpretation of types
indexed by the invariant property $u_b.u_e.i.\varphi$. The invariant
property is used to constrain the behavior of expressions that
evaluate to normal forms that do not agree with their types. 

\begin{figure*}[t!]
\[
\begin{array}{l@{~}c@{~}l}
 \rvi{\texp}{\theta;\trace;u}{u_b.u_e.i.\varphi} &=&  \{(k, \nf) ~|~ k \in \mathbb{N} \}
\\
 \rvi{X}{\theta;\trace;u}{u_b.u_e.i.\varphi} &= &  \theta(X) 
\\
\rvi{\bt}{\theta;\trace;u}{u_b.u_e.i.\varphi} &=&
\{(k, e)~|~ (k,e)\in\rvinv{u_b.u_e.i.\varphi}{\theta;\trace;u}\}
 \\
\rvi{\Pi x{:}\tau_1. \tau_2}{\theta;\trace;u}{u_b.u_e.i.\varphi} &=&
 \{(k,\lambda x.e) ~|~ 
\forall j< k,  \forall u', u'\geq u,
\forall e', (j,  e')  \in \rei{\tau_1}{\theta;\trace;u'}{u_b.u_e.i.\varphi}
\\& &\qquad\qquad
 \Longrightarrow (j, e_1[e'/x]) 
 \in\rei{\tau_2[e'/x]}{\theta;\trace;u'}{u_b.u_e.i.\varphi} \}\cup
\\ & &   
 \{(k,\nf) ~|~  \nf \neq \lambda x.e
 \Longrightarrow
 (k, \nf)\in\reinv{ u_b.u_e.i.\varphi}{\trace;u}\}
\\
 \rvi{\forall X.\tau}{\theta;\trace;u}{u_b.u_e.i.\varphi} &=& 
 \{(k,\Lambda X) ~|~ 
\forall j < k, \forall
 \cset\in \ubertype 
\Longrightarrow  (j, e')\in
 \rei{\tau}{\theta[X\mapsto\cset];\trace;u}{u_b.u_e.i.\varphi}\}\cup
\\ & &   
 \{(k,\nf) ~|~ \nf\neq\Lambda X.e
\Longrightarrow
 (k, \nf)\in\reinv{u_b.u_e.i.\varphi}{\trace;u}\}
\\
\multicolumn{3}{l}{\rvi{\tcomp(\tpair{u_1.u_2.i}{x{:}\tau.\varphi_1}{\varphi_2})}{\theta;\trace;u}{u_b.u_e.i.\varphi}~
=} 
\\\multicolumn{3}{l}{\quad\{(k,\ecomp(c))~|~
 \forall u_B, u_E, \iota, u\leq u_B\leq u_E,
\mbox{let}~\gamma=[u_B,u_E,\iota/u_1,u_2,i]}
\\\multicolumn{3}{l}{\qquad\qquad\qquad\quad
(k, c)\in 
\rci{x{:}\tau\gamma.\varphi_1\gamma}
{\theta;\trace;u_B, u_E, \iota}{ u_b.u_e.i.\varphi}
\cap \rci{\varphi_2\gamma}{\theta;\trace;u_B, u_E, \iota}{\_}
\}\cup}
\\\multicolumn{3}{l}{\quad
 \{(k,\nf) ~|~  \nf \neq\ecomp(c)
 \Longrightarrow
 (k, \nf)\in\reinv{u_1.u_2.i.\varphi}{\trace;u}\}
}
\\\\
\multicolumn{3}{l}{\rei{\tau}{\theta;\trace;u}{u_b.u_e.i.\varphi}~
=\{(k, e)~|~ 
\forall j < m, 
e\rightarrow^m_\beta e'\nrightarrow \Longrightarrow 
(k-m,e')\in\rvi{\tau}{\theta;\trace;u}{u_b.u_e.i.\varphi}
\}}
\end{array}
\]
\caption{Semantics for inv-indexed types}
\label{fig:re-inv-rel}
\end{figure*}

\begin{tabbing}
  $\rci{ x{:}\tau.\varphi }
  {\theta;\trace; u_1, u_2, i}{u_b.u_e.i.\varphi_1} =$  $\{ (k,
  c)~|~ $\\
~ \=
 $j_b$ is the length of the trace from time $u_1$ to the end of
$\trace$
\\~\> $j_e$ is the length of the trace from time $u_2$ to the end of $\trace$
\\~\>$k \geq j_b > j_e$, 
\\\> the configuration at time $u_1$ is
$\steps{u_1}\sigma_{b}\rt \cdots, \langle\iota; x.c'::K;
c\rangle\cdots$
\\\> the configuration at time $u_2$ is
$\steps{u_2}\sigma_{e}\rt \cdots, \langle\iota; K;
c'[e'/x]\rangle\cdots$ 
\\\>between $u_1$ and $u_2$, the stack of thread $i$
always contains $x.c'{::}K$
\\\> $\Longrightarrow$ \=
 $(j_e, e') \in  \rei{\tau}{\theta;\trace; u_2}{u_b.u_e.i.\varphi_1} $
\\\>\>  and $\trace\vDash \varphi[e'/x]\}$ 
 \end{tabbing}

\begin{tabbing}
  $\rci{\varphi }{\theta;\trace; u_1, u_2, i}{\_} =$  $\{ (k, c)~|~ $\\
~~ \=  
$j_b$ is the length of the trace from time $u_1$ to the end of
$\trace$,
\\~\> 
$j_e$ is the length of the trace from time $u_2$ to the end of $\trace$
\\~\> $k \geq j_b \geq j_e$, 
\\\> the configuration at time $u_1$ is
 ~~$\steps{u_1}\sigma_{b}\rt \cdots, \langle\iota; x.c'::K; c\rangle\cdots$
\\\> between $u_1$ and $u_2$ (inclusive), the stack of 
thread $i$ always 
\\\>contains prefix $x.c'{::}K$
\\\> $\Longrightarrow$
 $\trace\vDash \varphi\}$ 
 \end{tabbing}

\begin{tabbing}
  $\rfi{ \Pi x{:}\tau.u_1.u_2.i.(y{:}\tau'.\varphi, \varphi')}{\theta;\trace;u}{u_b.u_e.i.\varphi_1} =$ \\
  $\{ (k, c)~|~ $\= 
  $ \forall e,$ \=$\forall u', u_B, u_E, \iota, u\leq u'\leq
  u_B\leq u_E$, 
\\\>let $\gamma = [u_B, u_E, \iota/u_1, u_2,i]$
\\\>$(k, e)\in \rei{\tau\gamma}{\theta;\trace;u'}{u_b.u_e.i.\varphi_1}
 \Longrightarrow $ 
\\\> $(k, c\;
  e)\in$ \=$\rci{(y{:}\tau'\gamma.\varphi\gamma)[e/x]}{\theta;\trace;u_B, u_E, \iota}{u_b.u_e.i.\varphi_1}$ 
\\\>\> $\cap \rci{\varphi'\gamma[e/x]}{\theta;\trace;u_B, u_E, \iota}{}\}$
 \end{tabbing}

\begin{tabbing}
  $\rai{ \kact(u_1.u_2.i.(x{:}\tau.\varphi_1, \varphi_2))}
{\theta;\trace;u}{u_b.u_e.i.\varphi} =$ \\
$\{ (k, a)~|~ $\= $
\forall u_B, u_E, \iota,  u\leq u_B\leq u_E$,
\\\>let $\gamma = [u_B, u_E, \iota/u_1, u_2,i]$
\\\>$(k, \cact(a)) \in 
  ($\=$\rci{x{:}\tau\gamma.\varphi_1\gamma}
{\theta;\trace; u; u_B, u_E,\iota}{u_b.u_e.i.\varphi}$
\\\>$\cap   \rci{\varphi_2\gamma}
 {\theta;\trace; u; u_B, u_E,\iota}{u_b.u_e.i.\varphi} )\}$\\
\\
  $\rai{ \Pi x{:}\tau.\alpha}{\theta;\trace;u}{u_b.u_e.i.\varphi} =$ \\
  $\{ (k, a)~|~ $
 $\forall e,$\=$\forall u', , u'\geq u$, 
$(k,e)\in\rei{\tau}{\theta;\trace; u'}{u_b.u_e.i.\varphi}$
\\\>$ \Longrightarrow   (k, a\; e) \in 
\rai{\alpha[e/x]}{\theta;\trace;u'}{u_b.u_e.i.\varphi}\}$\\
\\
$\rai{ \forall X.\alpha}{\theta;\trace;u}{u_b.u_e.i.\varphi} =$
\\$\{ (k, a)~|~$\=$ 
\forall j\leq k,\forall  \cset\in \ubertype$
\\\>$\Longrightarrow (j, a\; \cdot)\in
\rai{\alpha}{\theta[X\mapsto\cset];\trace;u}{u_b.u_e.i.\varphi}\}$
\end{tabbing}

\paragraph{Formula semantics}

\[
\begin{array}{l@{~}c@{~}l}
 \interp{\texp} &=&  \{e ~|~ e~\mbox{is an expression}\}
\\
\interp{\bt} &=&
\{e~|~ e\rightarrow^* \const\}
 \\
\interp{\Pi x{:}\tau_1. \tau_2} &=&
 \{\lambda x.e ~|~ 
\forall e',e'  \in \interp{\tau_1}
 \Longrightarrow e_1[e'/x]
 \in\interp{\tau_2}\}
\end{array}
\]

\[
\begin{array}{lcl}
\trace \vDash P\;\vec{e} & ~\mbox{iff}~ & P\;\vec{e}  \in\varepsilon(\trace)
\\ 
\trace \vDash \pred{start}(I, c, U)
 & ~\mbox{iff}~ & 
 \mbox{thread $I$ has $c$ as the active} 
\\ & &\mbox{computation with an
   empty stack} 
\\ & & \mbox{at time $U$ on $\trace$}
\\
\trace \vDash \forall x{:}\tau.\varphi & ~\mbox{iff}~ &
\forall e, e \in\interp{\tau}~\mbox{implies}~
 \trace\vDash \varphi[e/x]
\\ 
 \trace \vDash \exists x{:}\tau.\varphi & ~\mbox{iff}~ &
\exists e, e\in\interp{\tau}~\mbox{and}~
  \trace\vDash \varphi[e/x]
\end{array}
\]

\section{Lemmas}

\begin{lem}[$\mathcal{R}_\mathit{INV}$ is downward-closure]
~\label{lem:rinv-downward-closed}
\begin{enumerate}
\item If $(k, c)\in\rvinv{\Phi}{\trace;u}$ 
then  $\forall j{<}k$, $(j, c)\in\rvinv{\Phi}{ \trace;u}$
\item If $(k, c)\in\reinv{\Phi}{\theta, \trace; u}$ 
then  $\forall j{<}k$, $(j, c)\in\reinv{\Phi}{\trace;u}$
\item If $(k, c)\in\rcinv{\Phi}{\trace; u}$ 
then  $\forall j{<}k$, $(j, c)\in\rcinv{\Phi}{\trace;u}$

\end{enumerate}
\end{lem}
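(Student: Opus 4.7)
\emph{Proof plan.} My plan is to establish the three clauses in the dependency order dictated by the mutually recursive definitions of $\rvinv{\Phi}{\trace;u}$, $\reinv{\Phi}{\trace;u}$, and $\rcinv{\Phi}{\trace;u}$. First I would observe a key asymmetry in those definitions: the body of $(k,c)\in\rcinv{\Phi}{\trace;u}$ quantifies over intervals whose trace lengths $j_b,j_e$ satisfy $k\geq j_b\,(\geq\!|>)\,j_e$, but nothing else in its body mentions $k$ directly; similarly, the $\lambda x.e$ and $\Lambda X.e$ clauses of $\rvinv{\Phi}{\trace;u}$ only impose obligations at indices $j<k$. This means the three parts can be established sequentially, with no outer induction on $k$.

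I would handle Part~3 first. Given $(k,c)\in\rcinv{\Phi}{\trace;u}$ and any $j<k$, I would verify $(j,c)\in\rcinv{\Phi}{\trace;u}$ by taking an arbitrary triple $(u_B,u_E,\iota)$ fulfilling the configuration conditions together with $j\geq j_b\,(>j_e$ or $\geq j_e)$. Since $k>j\geq j_b$, the same triple discharges the hypothesis at index $k$, delivering the conclusion ($(j_e,e')\in\reinv{\Phi}{\trace;u_E}$ together with $\trace\vDash\varphi[e'/x]$, or $\trace\vDash\varphi$ in the silent subclause). The conclusion is independent of $k$ (the index on the returned expression is $j_e$, a trace length), so no appeal to downward closure of $\reinv$ is needed here.

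For Part~1 I would case split on $\nf$. The clause for $\nf$ that is not a $\lambda$, $\Lambda$, or $\ecomp$ is immediate. The $\ecomp(c)$ clause reduces to Part~3 at the same index $k$, now available. The $\lambda x.e$ and $\Lambda X.e$ clauses are universal quantifications over strictly smaller indices; weakening $k$ to $j<k$ only shrinks the quantified range, so the hypothesis at $k$ discharges every obligation at $j$. For Part~2, given $(k,e)\in\reinv{\Phi}{\trace;u}$ and $j<k$, I would take any $0\leq m\leq j$ with $e\rightarrow^m e'\not\rightarrow$; since $m\leq j<k$ the hypothesis yields $(k-m,e')\in\rvinv{\Phi}{\trace;u}$, and Part~1 applied with $j-m<k-m$ concludes $(j-m,e')\in\rvinv{\Phi}{\trace;u}$.

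The only subtle point---and the part I would double-check carefully---is the dependency analysis itself: the monad sub-clause of Part~1 must reduce to Part~3 (same $k$), not to Part~1 or Part~2 at the same $k$, otherwise the sequential order would be circular. Once this is verified, every implication is a one-line unfolding of the defining set-builder, and no step-indexed induction on $k$ is needed, which is mildly surprising for a lemma of this flavour.
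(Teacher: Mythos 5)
Your proposal is correct and matches the paper's proof, which is just the one-line sketch ``by examining the definition of the relations'': in each clause the index $k$ occurs only as an upper bound on universally quantified indices ($j<k$ in the $\lambda$/$\Lambda$ cases, $k\geq j_b$ in the computation cases), so shrinking $k$ only weakens the obligation, and the conclusions mention only trace-determined indices like $j_e$. Your additional observations --- the dependency order $3\to1\to2$ and the fact that no induction on $k$ is needed --- are accurate refinements of the same direct-unfolding argument.
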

\begin{proofsketch}
By examining the definition of the relations. 
\end{proofsketch}

\begin{lem}[$\mathcal{R}_\mathit{INV}$ is closed under delay]
~\label{lem:rinv-closed-delay}
\begin{enumerate}
\item If
$(k, e)\in\rvinv {\Phi}{\trace;u}$ then $\forall u'{>}u$,
$(k, e)\in\rvinv{\Phi}{\trace;u'}$
\item If
$(k, e)\in\reinv {\Phi}{\trace;u}$ then $\forall u'{>}u$,
$(k, e)\in\reinv{\Phi}{\trace;u'}$
\item If
$(k, e)\in\rcinv {\Phi}{\trace;u}$ then $\forall u'{>}u$,
$(k, e)\in\rcinv{\Phi}{\trace;u'}$\end{enumerate}
\end{lem}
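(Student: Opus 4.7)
The plan is to prove the three parts simultaneously by induction on the step-index $k$, handling the three relations in the order $\rcinv$, $\rvinv$, $\reinv$ at each level. The guiding intuition is that delaying the starting time $u$ to a later $u'$ only strengthens the preconditions of the universally quantified constraints inside each definition, so every membership witness at $u$ yields a witness at $u'$ essentially for free.

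For the $\rcinv$ part I would observe that membership $(k, c) \in \rcinv{\Phi}{\trace; u}$ is a pair of universal statements over tuples $(u_B, u_E, \iota)$ constrained by $u \leq u_B \leq u_E$, plus a configuration/stack precondition, with conclusions $\trace \vDash \varphi$ and (for the partial correctness half) $(j_e, e') \in \reinv{\Phi}{\trace; u_E}$. Raising $u$ to $u'$ shrinks the range of admissible tuples, and since $u_E$ is fixed independently of $u$, the $\reinv$ obligation at $u_E$ is literally unchanged; so no induction hypothesis is needed here. For the $\rvinv$ part I would split on the shape of the normal form: non-introduction normal forms carry no constraint mentioning $u$; $\ecomp(c)$ reduces directly to the $\rcinv$ case just proved at step-index $k$; for $\lambda x.e$, the antecedent quantifier $u'' \geq u$ becomes strictly harder to satisfy when $u$ is raised, so the original implication applies verbatim; and for $\Lambda X.e$, the body condition is stated at strictly smaller step-indices $j < k$, so I would invoke the outer induction hypothesis for the $\reinv$ part at index $j$ to transfer the body from $u$ to $u'$. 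Finally, the $\reinv$ part unfolds to a claim about normal forms $e'$ reached from $e$ via $\beta$-reduction; since $\beta$-reduction is independent of $u$ and the conclusion is phrased as $(k-m, e') \in \rvinv{\Phi}{\trace; u}$, the claim reduces immediately to the $\rvinv$ result at step-index $k - m$ (which is the current level when $m = 0$, or a smaller level by the outer IH when $m > 0$).

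The main subtlety, more bookkeeping than genuine difficulty, is arranging the mutual dependency among the three relations so that the induction is well-founded. The acyclicity comes from the following chain: $\rcinv$ at step $k$ only mentions $\reinv$ at the strictly smaller index $j_e < k$; $\rvinv$ at step $k$ depends on $\rcinv$ at the same $k$ (for $\ecomp$) and on $\reinv$ at strictly smaller indices (for $\lambda$ and $\Lambda$); and $\reinv$ at step $k$ depends on $\rvinv$ at indices $\leq k$. Proving the three parts in the fixed order $\rcinv, \rvinv, \reinv$ within a given step-index $k$, with the outer induction supplying all claims at smaller step-indices, closes the loop cleanly and completes the argument.
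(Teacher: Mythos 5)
Your proof is correct and takes essentially the same route as the paper, which disposes of this lemma with the one-line sketch ``by examining the definitions'': the substance in both cases is that raising $u$ to $u'$ only shrinks the universally quantified ranges ($u\leq u_B\leq u_E$ in $\rcinv{\Phi}{\trace;u}$, $u''\geq u$ in the $\lambda$-case of $\rvinv{\Phi}{\trace;u}$) while the conclusions are stated at times ($u_E$) independent of $u$. Your added step-index induction with the ordering $\rcinv$, $\rvinv$, $\reinv$ is the right way to make the mutual recursion explicit, and the dependency analysis (same index for $\ecomp$, strictly smaller for $\Lambda$-bodies and for $\reinv$ reducts) is accurate.
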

\begin{proofsketch}
By examining the definitions.
\end{proofsketch}

\begin{lem}[Indexed types are confined]
~\label{lem:cnfn-confine}
$\cnfn{\tau}{u_b.u_e.i.\varphi}$ implies
\begin{enumerate}
\item $\rvi{\tau}{\theta;\trace;u}{u_b.u_e.i.\varphi} =
\rvinv{u_b.u_e.i.\varphi}{\theta;\trace;u}$. 
\item 
$\rei{\tau}{\theta;\trace;u}{u_b.u_e.i.\varphi} =
\reinv{u_b.u_e.i.\varphi}{\trace;u}$. 
\item for all $n$, $c$,
($\forall u_B,u_E,I$ s.t. $u\leq u_B\leq u_E$, 
$(n,c)\in\rci{\tau.\varphi[u_B,u_E,I/u_b,u_e,i]}
 {\theta;\trace;u_B,u_E,I}{u_b.u_e.i.\varphi}
\\\cap
\rci{\varphi[u_B,u_E,I/u_b,u_e,i]}
 {\theta;\trace;u_B,u_E,I}{u_b.u_e.i.\varphi}$) 
\\iff
$(n,c)\in\rcinv{u_b.u_e.i.\varphi}{\trace;u}$
\end{enumerate}
\end{lem}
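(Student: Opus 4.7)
The plan is to prove the three clauses simultaneously by induction on the derivation of $\cnfn{\tau}{u_b.u_e.i.\varphi}$, which by inspection of the three rules defining $\mathit{confine}$ mirrors the structure of $\tau$ and admits only three shapes: the base type $\bt$; a function type $\Pi x{:}\tau_1.\tau_2$ with both $\tau_1$ and $\tau_2$ confined under the same invariant; and a monadic type $\tcomp(u_b.u_e.i.(x{:}\tau'.\varphi,\varphi))$ with $\tau'$ confined under the same invariant and with both postconditions equal to $\varphi$. Within each case I would establish clause (1) first, deduce clause (2) from (1) by unfolding the common ``run expression to normal form, then apply the value relation'' pattern shared by $\rei{\cdot}{}{\cdot}$ and $\reinv{\cdot}{}$, and finally derive clause (3) from (2).

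For the base case $\tau=\bt$, clause (1) is immediate since $\rvi{\bt}{\theta;\trace;u}{u_b.u_e.i.\varphi}$ is defined to be $\rvinv{u_b.u_e.i.\varphi}{\theta;\trace;u}$. For the function case, after applying the induction hypothesis (clause 2) to $\tau_1$ and $\tau_2$, the $\lambda$-subclause of $\rvi{\Pi x{:}\tau_1.\tau_2}{\cdots}{\Phi}$ collapses exactly to the $\lambda$-subclause of $\rvinv{\Phi}{\cdots}$, since both demand that substituting an argument in $\reinv{\Phi}{\trace;u'}$ yields a body in $\reinv{\Phi}{\trace;u'}$; the ``default'' subclause covering non-$\lambda$ normal forms is already stated identically in both relations. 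The monadic case is similar: by clause (3) of the IH applied to $\tau'$, the $\ecomp(c)$-subclause of $\rvi{\tcomp(\cdots)}{\cdots}{\Phi}$ — which requires $(k,c)$ to sit in the intersection of the partial correctness and invariant interpretations for every interval $(u_B,u_E]$ and every thread $\iota$ with $u\leq u_B\leq u_E$ — is literally the $\ecomp(c)$-subclause of $\rvinv{\Phi}{\cdots}$, and again the default clauses match.

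To close clause (3) inside the $\tcomp$ case (and thus make the induction go through), I would unfold the two sets defining $\rcinv{u_b.u_e.i.\varphi}{\trace;u}$ and pair them against the two sets $\rci{x{:}\tau'.\varphi[u_B,u_E,I/u_b,u_e,i]}{\theta;\trace;u_B,u_E,I}{\Phi}$ and $\rci{\varphi[u_B,u_E,I/u_b,u_e,i]}{\theta;\trace;u_B,u_E,I}{\_}$. The invariant conjunct matches on the nose. The partial correctness conjunct requires the returned expression $e'$ to be in $\reinv{\Phi}{\trace;u_E}$ on the $\rcinv$ side and in $\rei{\tau'}{\theta;\trace;u_E}{\Phi}$ on the $\rci$ side; these coincide exactly by clause (2) of the IH applied to $\tau'$. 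Clause (2) in each outer case is obtained from clause (1) by using the shared definition of the expression-level lifting.

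The main obstacle will be the bookkeeping rather than any conceptual difficulty: one has to verify that the time-point substitutions $[u_B,u_E,\iota/u_b,u_e,i]$ appearing inside $\rvi{\tcomp(\cdots)}{\cdots}{\Phi}$ align with the ``for all $u_B,u_E,\iota$ with $u\leq u_B\leq u_E$'' universal quantifier baked into $\rcinv{\Phi}{\trace;u}$, and to handle the non-introduction normal-form ``default'' disjuncts uniformly across all three type shapes by appealing to the fact that $\reinv{\Phi}{\trace;u}$ is common to both relations. Carrying clause (3) in the induction hypothesis is essential, because it is the only way the monadic case closes without circularity.
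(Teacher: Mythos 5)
Your proposal matches the paper's proof: the paper also proceeds by induction on the structure of $\tau$ (equivalently, on the confinement derivation), with exactly the dependency structure you describe --- clause (2) follows from clause (1) at the same type, clause (1) invokes clause (2) at smaller types in the function case and clause (3) at smaller types in the monadic case, and clause (3) follows from clause (2) by unfolding the two computation relations. The case analysis, the treatment of the non-introduction ``default'' normal forms via the shared $\reinv{\Phi}{\trace;u}$ clauses, and the observation that the mutual induction is what closes the $\tcomp$ case are all as in the paper.
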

\begin{proof}
By induction on $\tau$. 2 uses 1 directly, 1 uses 2 when $\tau$ is
smaller, 3 uses 2 directly, and 1 uses 3 when $\tau$ is smaller.
\\Proof of 1.
\begin{description}
\item[case:] $\tau=b$. Follows directly from the definitions
\item[case:] $\tau=\Pi x:\tau_1.\tau_2$
\begin{tabbing}
By assumptions
\\\quad\=
$\cnfn{\tau_1}{u_b.u_e.i.\varphi}$  and
$\cnfn{\tau_1}{u_b.u_e.i.\varphi}$
\`(1)
\\Assume 
\\\>$(n,\nf)\in 
\rvi{\Pi x:\tau_1.\tau_2}
{\theta;\trace;u}{u_b.u_e.i.\varphi}$
\`(2)
\\To show: $(n,\nf)\in\rvinv{u_b.u_e.i.\varphi}{\trace;u}$
\\We first consider the case when $\nf=\lambda x.e_1$
\\ Given $0 \leq j< n$, $u'\geq u$ $(j, e')
  \in\reinv{u_b.u_e.i.\varphi}{\trace;u'}$
\\ By I.H. on $\tau_1$ 
\\\> $(j, e')\in\rei{\tau_1}{\theta;\trace;u'}{u_b.u_e.i.\varphi}$
\\By (2)
\\\> $(j, e_1[e'/x])\in\rei{\tau_2[e'/x]}
{\theta;\trace;u'}{u_b.u_e.i.\varphi}$  
\`(3)
\\ By I.H. on $\tau_2$ and (3)
\\\> $(j, e_1[e'/x])\in\reinv{u_b.u_e.i.\varphi}{\trace;u'}$
\`(4)
\\By (4)
\\\> $(n, \lambda x.e_1)\in\rvinv{u_b.u_e.i.\varphi}{\trace;u}$
\\Next we consider the case where $\nf = \Lambda X.e_1$ or $\ecomp(c)$
\\\> this follows from the definition directly
\\Proofs for the other direction is similar
\end{tabbing}

\item[case:]
  $\tau=\tcomp(\tpair{u_b.u_e.i}{x{:}\tau.\varphi}{\varphi})$
\begin{tabbing}
By assumption
\\\quad\=
 $\cnfn{\tau}{u_b.u_e.i.\varphi}$
\`(1)
\\
Assume
\\\>
$(n,\nf)\in\rvi{\tcomp(\tpair{u_b.u_e.i}{x{:}\tau.\varphi}{\varphi})}{\theta;\trace;u}{u_b.u_e.i.\varphi}$
\`(2)
\\
To show $(n,\nf)\in\rvinv{u_b.u_e.i.\varphi}{\trace;u}$
\\ We show the case when $\nf = \ecomp(c)$, the other cases are
trivial
\\By definitions, $\forall u_B, u_E, \iota, u\leq u_B\leq u_E$,
\\ let $\gamma=[u_B,u_E,\iota/u_b,u_e,i]$
\\\> $(n, c)\in$\=
$\rci{x{:}\tau\gamma.\varphi_1\gamma}{\theta;\trace;u_B, u_E, \iota}{
  u_b.u_e.i.\varphi}$
\\\>\>
 $\cap \rci{\varphi_2\gamma}{\theta;\trace;u_B, u_E, \iota}{\_}$
\`(3)
\\ By I.H.  and (3)
\\\>
 $(n, c)\in\rcinv{u_b.u_e.i.\varphi}{\trace;u}$
\`(4)
\\ By (4)
\\\> $(n,\nf)\in\rvinv{u_b.u_e.i.\varphi}{\trace;u}$
\`(5)
\\The proof of the other direction is similar
\end{tabbing}
\end{description}
3 is proven straightforwardly by expanding the definitions of the two
relations.

\end{proof}

\begin{lem}[Invariant confinement]~\label{lem:inv-confine}

$\varphi$ is composable, 
and thread $\iota$ is silent between time $u_B$
and $u_E$ implies $\trace\vDash\varphi[u_B,u_E,\iota/u_b,u_e,i]$
\begin{enumerate}
\item If $\fa(e)=\emptyset$, $\fv(e)\in\dom(\gamma)$,
$(n,\gamma)\in\reinv{u_b.u_e.i.\varphi}{\trace;u}$
\\then $(n, e\gamma)\in\reinv{u_b.u_e.i.\varphi}{\trace;u}$
\item If $\fa(c)=\emptyset$, $\fv(c)\in\dom(\gamma)$,
$(n,\gamma)\in\reinv{u_b.u_e.i.\varphi}{\trace;u}$
\\then $(n, c\gamma)\in\rcinv{u_b.u_e.i.\varphi}{\trace;u}$
\item If $\fa(c)=\emptyset$, $\fv(\cfix f(x).c)\in\dom(\gamma)$,
\\$(n,\gamma)\in\reinv{u_b.u_e.i.\varphi}{\trace;u}$
\\then $(n,\cfix f(x).c\gamma)\in\rfinv{u_b.u_e.i.\varphi}{\trace;u}$
\end{enumerate}
\end{lem}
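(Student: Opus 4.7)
The plan is to prove parts 1, 2, and 3 simultaneously by a primary induction on the syntactic structure of $e$, $c$, and $\cfix f(x).c$ respectively, with a secondary induction on the step-index $n$ for the fixpoint case. The three parts must be proven together because expressions may embed suspended computations via $\ecomp(c)$ (invoking part 2), computations may embed expressions via $\cret$, $\clete$, $\cact$ (invoking part 1), and computations may apply fixpoints (invoking part 3). Downward closure in the step index (Lemma~\ref{lem:rinv-downward-closed}) and closure under delay (Lemma~\ref{lem:rinv-closed-delay}) will be used pervasively to line up indices and time points when recursively invoking the induction hypothesis.

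For part 1, I would case split on the shape of $e$. Variables reduce to $\gamma(x)$, which is in the relation by assumption. Base values and binders in normal form (lambdas, type-lambdas, suspensions) are handled by showing the body satisfies the invariant under any indexed substitution of an argument in the relation: for $\lambda x.e_1$, given $(j, e') \in \reinv{\Phi}{\trace;u'}$ with $j < n$ and $u' \geq u$, the extended substitution $\gamma[x \mapsto e']$ still lies in $\reinv{\Phi}{\trace;u'}$ (using delay closure on $\gamma$), so the I.H.~on $e_1$ yields $(j, e_1(\gamma[x\mapsto e'])) \in \reinv{\Phi}{\trace;u'}$, which is exactly $(j, (e_1\gamma)[e'/x])$. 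The $\ecomp(c)$ case reduces immediately to part 2. Applications and type-applications evaluate to a normal form via pure $\beta$-reductions; these reductions are silent in the executing thread, so the silent-thread hypothesis ensures $\varphi$ holds over every sub-interval before the normal form is reached, and the normal form lies in the value relation by I.H.

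For part 2, I would exploit that the action case $\cact(a)$ is ruled out by $\fa(c) = \emptyset$. For $\cret(e)$, the single reduction step is silent, so the invariant interval is covered by the silent-thread hypothesis; the returned expression is handled by part 1. For $\cletc(c_1, x.c_2)$ and $\clete(e_1, x.c_2)$ I would partition any interval $(u_B, u_E]$ into (i) a silent prefix pushing the frame, (ii) the sub-interval where $c_1$ (resp.~the reduction of $e_1$ followed by $c_1$) runs, and (iii) the sub-interval where $c_2$ runs with the substituted return value. The I.H.~gives $\varphi$ on each piece, and trace composability of $\varphi$ glues them into $\varphi$ on $(u_B, u_E]$; note that the returned expression from $c_1$ must itself lie in $\reinv{\Phi}{\trace;u}$ so that the extended substitution for $c_2$ still satisfies the hypotheses of part 2. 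The $\cif$ case is similar but simpler: silent steps until a branch is taken, then I.H.~on $c_1$ or $c_2$. Applications $c_1\,e$ invoke part 3.

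For part 3, the body $c$ may mention $f$, so after one reduction step the active term becomes $c[\lambda z.\ecomp(\cfix f(x).c\gamma)\,z / f][e/x]$. The extended substitution must map $f$ to a value in $\reinv{\Phi}{\trace;u}$, and this is where the secondary induction on $n$ comes in: assuming the statement of part 3 at all indices strictly below $n$, the recursive $\ecomp(\cfix f(x).c\gamma)$ can be shown to lie in $\rvinv{\Phi}{\trace;u}$ at index $n-1$, from which I.H.~on the body $c$ (now an expression/computation already inducted on structurally) closes the argument via part 2. The main obstacle is precisely this fixpoint case: coordinating the two inductions so that the step-index strictly decreases at each unfolding while the structural I.H.~on $c$ is still applicable, and ensuring that the substitution for $f$ satisfies the $\reinv{\Phi}{}$-function clause uniformly in the delayed time point. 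Careful use of Lemmas~\ref{lem:rinv-downward-closed} and \ref{lem:rinv-closed-delay}, together with the trace composability of $\varphi$ for bridging the interval where $f$ is unfolded and then runs, will make this go through.
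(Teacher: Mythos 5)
Your proposal follows essentially the same route as the paper's proof: a mutual structural induction over expressions, computations, and fixpoints, with a sub-induction on the step index $n$ for the fixpoint case, using downward closure, closure under delay, the silent-thread hypothesis, and trace composability to glue sub-intervals exactly as the paper does. The only cosmetic difference is that the paper explicitly separates the application case into the subcases where the head normal form is or is not a $\lambda$ (the latter being absorbed by the non-introduction-form clause of $\mathcal{RV}_{\textit{INV}}$), and spells out the $n=0$ base case of the fixpoint sub-induction, both of which your sketch glosses over but would handle without difficulty.
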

\begin{proof}
By induction on the structure of the terms. 3 needs a sub-induction on
$n$. We show a few key cases.
\\Proof of 1.
\begin{description}
\item[case:] $e=e_1\;e_2$
\begin{tabbing}
\\By I.H.
\\\quad\=
 $(n,e_1\gamma)\in \reinv{u_b.u_e.i.\varphi}{\trace;u}$
\`(1)
\\\>
 $(n,e_2\gamma)\in \reinv{u_b.u_e.i.\varphi}{\trace;u}$
\`(2)
\\Assume $(e_1e_2)\gamma\rightarrow^{m} \nf\nrightarrow$
\\\> $e_1\gamma\rightarrow^j\nf_1\nrightarrow$
\`(3)
\\We consider two cases: $\nf_1=\lambda x.e$ and $\nf_1\neq\lambda
x.e$
\\Subcase $\nf_1=\lambda x.e$:
\\By (1)
\\\> $(n-j, \lambda x.e)\in
\rvinv{u_b.u_e.i.\varphi}{\trace;u}$
\`(4)
\\By (2) and Lemma~\ref{lem:rinv-downward-closed}
\\\> $(n-j-1, e_2\gamma)\in
\reinv{u_b.u_e.i.\varphi}{\trace;u}$
\`(5)
\\By (4) and (5)
\\\> $(n-j-1, e[e_2\gamma/x])\in
\reinv{u_b.u_e.i.\varphi}{\trace;u}$
\`(6)
\\By (6)
\\\> $(n,(e_1e_2)\gamma)\in
\reinv{u_b.u_e.i.\varphi}{\trace;u}$
\`(7)
\\Subcase $\nf_1\neq\lambda x.e$:
\\\> $(e_1e_2)\gamma\rightarrow^{m} \nf_1(e_2\gamma)\nrightarrow$
\`(8)
\\By definitions 
\\\> $(n,(e_1e_2)\gamma)\in
\reinv{u_b.u_e.i.\varphi}{\trace;u}$
\`(9)
\end{tabbing}
\end{description}

Proof of 3 is by sub-induction on $n$
\begin{description}
\item[case:] $n=0$
\\The fixpoint couldn't have returned. We only need to show
that the trace satisfies $\varphi$. This is true because the thread
executing the fixpoint is silent. 
\item[case:] $n=k+1$
\begin{tabbing}
\\Assume that $(k, \cfix f(x).c\gamma)
\in\rfinv{u_b.u_e.i.\varphi}{\trace;u}$
\`(1)
\\To show $(k+1, \cfix f(x).c\gamma)
\in\rfinv{u_b.u_e.i.\varphi}{\trace;u}$
\\
 $ \forall e$,
$(k+1, e)\in \reinv{u_b.u_e.i.\varphi}{\trace;u}$
\\To show $(k+1, c\;
  e)\in\rcinv{u_b.u_e.i.\varphi}{\trace;u}$ 
\\By (1), 
\\\quad\=
 $(k,\lambda z.\ecomp((\cfix f(x).c\gamma)\, z))
 \in\reinv{u_b.u_e.i.\varphi}{\trace;u}$
\`(2)
\\By I.H. on $c$ and Lemma~\ref{lem:rinv-downward-closed}
 and~\ref{lem:rinv-closed-delay}
\\\> $(k, c[\lambda z.\ecomp((\cfix f(x).c\gamma)\, z)/f][e/x])$ 
\\\>$\in\rcinv{u_b.u_e.i.\varphi}{\trace;u}$
\`(3)
\\Assume thread $\iota$ executes the fixpoint, 
\\we consider the following time intervals:
\\(i)Before the fixpoint is unrolled, 
\\(ii) the body of the fixpoint is
evaluated, 
\\(iii) the fixpoint returns $e_1$
\\By $\iota$ is silent in (i)
\\\>$\varphi$ holds in (i)
\`(4)
\\By (3) and $\varphi$ is composable,
\\\>$\varphi$ holds in (ii) and (iii) 
\\\> and $(j_e, e_1)\in\reinv{u_b.u_e.i.\varphi}{\trace;u_E}$
\\\> where $u_e$ is the time when $e_1$ is returned 
\\\>and $j_e$ is the
length of $\trace$ from $u_e$ till the end of $\trace$
\`(5)
\\By (4) and (5)
\\\>$(k+1, \cfix f(x).c\gamma)
\in\rfinv{u_b.u_e.i.\varphi}{\trace;u}$
\end{tabbing}
\end{description}
\end{proof}

\section{Properties of Interpretation of Types}


\begin{lem}~\label{lem:stuck-tm-in-inv}
If $\nf\neq\lambda x.e$ or $\Lambda X.e$ or $\ecomp(c)$, 
then 
$(n, nf)\in\rvi{\tau}{\theta;\trace;u}{\Phi}$
\end{lem}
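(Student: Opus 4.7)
The plan is to prove the lemma by a direct case analysis on the structure of $\tau$, unfolding the definition of $\rvi{\tau}{\theta;\trace;u}{\Phi}$ in each case. The common theme is that every clause of the semantics in Figure~\ref{fig:re-inv-rel} is designed to admit non-introduction normal forms unconditionally, so the argument is essentially bookkeeping against the definitions rather than any real induction.

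First I will handle the leaf cases. For $\tau = X$, the interpretation is $\theta(X)$, which lies in $\ubertype$; by the very definition of $\ubertype$ (third conjunct in its defining comprehension), every normal form that is not $\lambda x.e$, $\Lambda X.e$, or $\ecomp(c)$ belongs to every $\cset \in \ubertype$, so $(n,\nf)$ is in $\theta(X)$. For $\tau = \texp$, the interpretation contains all indexed normal forms, so we are done. For $\tau = \bt$, the interpretation is $\rvinv{\Phi}{\theta;\trace;u}$, whose first union clause is precisely $\{(k,\nf) \mid \nf \neq \lambda x.e, \Lambda X.e, \ecomp(c)\}$, so again $(n,\nf)$ is in it by hypothesis on $\nf$.

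For the three remaining cases $\Pi x{:}\tau_1.\tau_2$, $\forall X.\tau'$, and $\tcomp(\eta_c)$, the interpretation is a union in which one disjunct has the form $\{(k,\nf) \mid \nf \neq L \Longrightarrow (k,\nf) \in \reinv{\Phi}{\trace;u}\}$, where $L$ is the corresponding introduction form ($\lambda x.e$, $\Lambda X.e$, or $\ecomp(c)$, respectively). Since $\nf$ is not an introduction form of any of these shapes, the premise $\nf \neq L$ holds and I only need to show $(n,\nf) \in \reinv{\Phi}{\trace;u}$. Because $\nf$ is already a normal form, the only reduction sequence out of $\nf$ has length zero, so this reduces to showing $(n,\nf) \in \rvinv{\Phi}{\trace;u}$, which holds by the first clause of $\rvinv$ (same argument as in the $\bt$ case).

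I do not expect any real obstacle here: the lemma is essentially a sanity check that the semantics, as designed in Figure~\ref{fig:re-inv-rel} and the definition of $\ubertype$, is permissive enough to host arbitrary stuck adversarial values at any type. This is exactly what makes the $\texp$ type and the \rulename{Confine} rule play well together, and the present lemma is precisely the property that will be invoked downstream (together with Lemma~\ref{lem:inv-confine}) when one must show that substituted adversary-supplied normal forms inhabit the expected type interpretations. The only care needed is to observe uniformly that each semantic clause treats non-introduction normal forms trivially, which is immediate from the definitions.
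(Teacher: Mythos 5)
Your proposal is correct and follows the same route as the paper's own proof: a case analysis on $\tau$, discharging the type-variable case via the defining property of $\ubertype$ and all other cases via the clause of $\rvinv{\Phi}{\trace;u}$ (equivalently, the non-introduction-form disjunct of each semantic clause) that unconditionally admits normal forms other than $\lambda x.e$, $\Lambda X.e$, and $\ecomp(c)$. Your added observation that $\reinv{\Phi}{\trace;u}$ collapses to $\rvinv{\Phi}{\trace;u}$ on normal forms is a correct and useful elaboration of a step the paper leaves implicit.
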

\begin{proofsketch}
Case on $\tau$. For all cases except when $\tau=X$, the conclusion
follows from the definition of $\rvinv{\Phi}{\trace;u}$.

When $\tau=X$, $\theta(X)\in\ubertype$. By the definition of
$\ubertype$, every $\cset\in\ubertype$ contains all stuck terms that
are not functions or suspended computations.
\end{proofsketch}

\begin{lem}[Substitution]
~\label{lem:subst}
If $\cset = \rvi{\tau_1}{\theta;\trace;u}{\Phi}$ then
\begin{enumerate}
\item $\rvi{\tau}{\theta[X \mapsto \cset];\trace;u}{\Phi} =
\rvi{\tau[\tau_1/X]}{\theta;\trace;u}{\Phi}$
\item $\rei{\tau}{\theta[X \mapsto \cset];\trace;u}{\Phi} =
\rei{\tau[\tau_1/X]}{\theta;\trace;u}{\Phi}$
\item $\rci{\eta}{\theta[X \mapsto \cset], \trace, \Xi}{\Phi} =
\rci{\eta[\tau_1/X]}{\theta, \trace, \Xi}{\Phi}$ 
\item $\rai{\alpha}{\theta[X \mapsto \cset];\trace;u}{\Phi} =
 \rai{\alpha[\tau_1/X]}{\theta;\trace;u}{\Phi}$ 
\end{enumerate}
\end{lem}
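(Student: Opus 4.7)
The plan is to prove all four equalities by a single simultaneous induction on the combined structure of $\tau$, $\eta$, and $\alpha$ (measuring by the syntactic size of the type being substituted into). Part 2 will be reduced to part 1 directly from the definition $\rei{\tau}{\theta;\trace;u}{\Phi} = \{(k,e) \mid e \to^m_\beta e' \nrightarrow \Longrightarrow (k-m, e') \in \rvi{\tau}{\theta;\trace;u}{\Phi}\}$, since substitution is structural and does not interact with the $\beta$-reduction trace. So the real work is at the level of $\rvi{\cdot}{}{}$, $\rci{\cdot}{}{}$, and $\rai{\cdot}{}{}$.

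The base cases split on the shape of $\tau$. When $\tau = X$, both sides of (1) unfold to $\cset$: the left by definition of $\theta[X\mapsto\cset](X)$, and the right because $X[\tau_1/X] = \tau_1$ and by the hypothesis $\cset = \rvi{\tau_1}{\theta;\trace;u}{\Phi}$. When $\tau = Y \ne X$, both sides are $\theta(Y)$ since $Y[\tau_1/X] = Y$. The cases $\tau = \bt$ and $\tau = \texp$ are immediate: substitution acts as the identity and the set is defined without reference to $\theta$ (beyond its role in $\rvinv{\Phi}{\theta;\trace;u}$, which contains no type variables).

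The inductive cases are all by unfolding the defining clause in Figure~\ref{fig:re-inv-rel} and applying the appropriate inductive hypothesis. For $\tau = \Pi x{:}\tau_1'. \tau_2'$, the quantification ranges over $(j, e') \in \rei{\tau_1'}{\theta[X\mapsto\cset];\trace;u'}{\Phi}$, which by the I.H.\ (part 2, on $\tau_1'$) equals $\rei{\tau_1'[\tau_1/X]}{\theta;\trace;u'}{\Phi}$; the conclusion in the membership clause uses the I.H.\ on $\tau_2'[e'/x]$, combined with the commutation of substitutions $\tau_2'[e'/x][\tau_1/X] = \tau_2'[\tau_1/X][e'/x]$ (assuming $X \notin \ftv(e')$, which holds as $e'$ is an object-level expression). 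For $\tau = \tcomp(\eta_c)$, we unfold to the computation relation and invoke part 3 of the I.H.\ on $\eta_c$ (which in turn uses parts 1 and 2 on its constituent $\tau$ and reduces the formula component to a syntactic substitution that is invariant under the semantics $\trace \vDash \varphi$, as assertions quantify only over the semantic denotations $\interp{\tau}$ of base and arrow types, not over type variables bound in $\theta$). Part 3 splits on $\eta \in \{x{:}\tau.\varphi,\ \varphi,\ (x{:}\tau.\varphi_1, \varphi_2)\}$ and is analogous. Part 4 on $\alpha$ proceeds by the same template: $\kact(\eta_c)$ uses part 3, $\Pi x{:}\tau.\alpha$ uses parts 2 and 4, and $\forall Y.\alpha$ mirrors the $\forall Y.\tau$ case described next.

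The main obstacle is the universally quantified case $\tau = \forall Y.\tau'$ (and the parallel case in part 4). Here we first rename by $\alpha$-equivalence so that $Y \neq X$ and $Y \notin \ftv(\tau_1)$; this lets us push the substitution under the binder: $(\forall Y.\tau')[\tau_1/X] = \forall Y.(\tau'[\tau_1/X])$. Membership on either side quantifies over $\cset' \in \ubertype$ and asserts $(j, e') \in \rei{\tau'}{\theta[X\mapsto\cset][Y\mapsto\cset'];\trace;u}{\Phi}$ (left) versus $(j, e') \in \rei{\tau'[\tau_1/X]}{\theta[Y\mapsto\cset'];\trace;u}{\Phi}$ (right). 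The subtle point is that after extending $\theta$ with $Y \mapsto \cset'$, the hypothesis $\cset = \rvi{\tau_1}{\theta;\trace;u}{\Phi}$ must be upgraded to $\cset = \rvi{\tau_1}{\theta[Y\mapsto\cset'];\trace;u}{\Phi}$; this requires an auxiliary weakening observation stating that $\rvi{\tau_1}{\theta;\trace;u}{\Phi}$ is independent of the valuation at variables not free in $\tau_1$, which follows by a straightforward structural induction on $\tau_1$. With that in hand, the I.H.\ on the strictly smaller $\tau'$ delivers the required equality, and the remaining normal-form clause (the $\reinv{\Phi}{\trace;u}$ disjunct for non-$\Lambda$ terms) transports across directly because it does not mention $\theta$ at all.
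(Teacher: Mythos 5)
Your proposal is correct and follows the same route as the paper, whose proof is simply the one-line sketch ``by induction on the structure of $\tau$, $\eta$, $\varphi$ and $\alpha$''; you fill in exactly the details that sketch leaves implicit (the variable base cases, commutation of expression and type substitution in the $\Pi$ case, and the $\alpha$-renaming plus weakening observation needed for $\forall Y.\tau'$). No discrepancy to report.
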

\begin{proofsketch}
By induction on the structure of $\tau$, $\eta$, $\varphi$ and $\alpha$.
\end{proofsketch}

\begin{lem}[Downward-closure]
~\label{lem:downward-closed}
\begin{enumerate}
\item If $(k, c)\in\rci{\eta}{\theta, \trace, \Xi}{\Phi}$ 
then  $\forall j{<}k$, $(j, c)\in\rci{\eta}{\theta, \trace, \Xi}{\Phi}$

\item If $\ftv(\tau)\subseteq\dom(\theta)$,
$\forall X \in\dom(\theta)$, $\theta(X)\in\ubertype$,
and $(k, e)\in\rvi{\tau}{\theta;\trace;u}{\Phi}$, 
 then $\forall j{<}k$, $(j, e)\in\rvi{\tau}{\theta;\trace;u}{\Phi}$.

\item If   $\ftv(\tau)\subseteq\dom(\theta)$, $\forall X
  \in\dom(\theta)$, $\theta(X)\in\ubertype$, and
$(k, e)\in\rei{\tau}{\theta;\trace;u}{\Phi}$, 
 then $\forall j{<}k$, $(j, e)\in\rei{\tau}{\theta;\trace;u}{\Phi}$.
\end{enumerate}
\end{lem}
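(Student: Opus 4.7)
The plan is to prove the three clauses in the order $(1)$, then $(2)$ by structural induction on $\tau$ using $(1)$, then $(3)$ as a short corollary of $(2)$.

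For clause $(1)$, I would argue directly from the definition of $\rci{\eta}{\theta,\trace,\Xi}{\Phi}$. Suppose $(k,c)$ lies in the relation and $j<k$. Each of the two sets comprising the relation is defined by an implication whose premise begins with the constraint $k \geq j_b$ (or $k \geq j_b \geq j_e$), where $j_b, j_e$ are lengths determined solely by $\trace$, $u_1$, and $u_2$. Replacing $k$ by $j$ can only make this premise harder to satisfy, so the implication remains true. Crucially, the conclusion about the returned expression is $(j_e, e') \in \rei{\tau}{\theta;\trace;u_2}{\Phi}$, whose index $j_e$ is also independent of $k$; hence no downward-closure of $\rei{\cdot}{\cdot}{\cdot}$ is needed inside this step. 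The same reasoning handles the invariant conjunct.

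For clause $(2)$, I would induct on the structure of $\tau$. The cases $\texp$ and $X$ are immediate: the former contains every $(k,\nf)$, and the latter equals $\theta(X) \in \ubertype$, which is downward-closed by definition of $\ubertype$. For $\bt$, the relation coincides with $\rvinv{\Phi}{\theta;\trace;u}$, which is downward-closed by Lemma~\ref{lem:rinv-downward-closed}. For $\Pi x{:}\tau_1.\tau_2$ and $\forall X.\tau$, membership is defined by universal quantification over $j' < k$; shrinking $k$ to $j$ yields a strictly weaker condition $j' < j$, so membership persists. The alternative disjunct consisting of non-introduction normal forms that lie in $\reinv{\Phi}{\trace;u}$ is handled by Lemma~\ref{lem:rinv-downward-closed} again. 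For the monadic case $\tcomp(\tpair{u_1.u_2.i}{x{:}\tau.\varphi_1}{\varphi_2})$, the $\ecomp(c)$ disjunct is closed by clause $(1)$ applied to each of the two $\rci{\cdot}{\cdot}{\cdot}$ conjuncts, and the non-$\ecomp$ disjunct is again covered by Lemma~\ref{lem:rinv-downward-closed}.

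Clause $(3)$ is a direct consequence of $(2)$. Given $(k,e) \in \rei{\tau}{\theta;\trace;u}{\Phi}$ and $j < k$, suppose $e \rightarrow^m_\beta e' \nrightarrow$ with $m \leq j$. Then $m \leq k$ as well, so by the hypothesis $(k{-}m, e') \in \rvi{\tau}{\theta;\trace;u}{\Phi}$. Since $j{-}m < k{-}m$, clause $(2)$ delivers $(j{-}m, e') \in \rvi{\tau}{\theta;\trace;u}{\Phi}$, which is exactly what is needed. I do not anticipate a serious obstacle: the content is bookkeeping of step-indices. The one place that requires care is the monadic case of $(2)$, where I must be sure that the indices controlling the trace windows $(u_1,u_2]$ and the returned expression's index are treated independently, so that clause $(1)$ (rather than clause $(3)$, which would create a circular dependence) suffices.
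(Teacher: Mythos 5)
Your proposal is correct and follows exactly the route the paper takes: the paper's (sketched) proof proceeds by examining the definitions, with clause (2) depending on clause (1) and clause (3) depending on clause (2), which is precisely your decomposition. Your observation that the $\tcomp$ case of (2) needs only clause (1) — because the returned expression's index $j_e$ is fixed by the trace rather than by $k$ — correctly identifies why the argument is not circular.
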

\begin{proofsketch}
By examining the definitions. Proofs of 3 uses proofs of 2 and 2 uses 1.
\end{proofsketch}

\begin{lem}[Substitutions are closed under index reduction]
~\label{lem:ctx-smaller}

\noindent If 
$\ftv(\Gamma)\subseteq\dom(\theta)$, 
$\forall X \in\dom(\theta)$, $\theta(X)\in\ubertype$, 
$(n,\gamma)\in\rgi{\Gamma}{\theta;\trace;u}{\Phi}$, and $j<n$ 
then
$(j,\gamma)\in\rgi{\Gamma}{\theta;\trace;u}{\Phi}$.
\end{lem}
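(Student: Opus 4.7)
The proof will proceed by straightforward induction on the structure of the typing context $\Gamma$, leveraging the already-established downward-closure properties for the value and expression interpretations (Lemma~\ref{lem:downward-closed}).

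First, I would make precise the definition of $\rgi{\Gamma}{\theta;\trace;u}{\Phi}$, which based on usage is the standard pointwise lift of the expression interpretation: a pair $(n,\gamma)$ belongs to $\rgi{\Gamma}{\theta;\trace;u}{\Phi}$ iff $\dom(\gamma) = \dom(\Gamma)$ and for every $(x{:}\tau) \in \Gamma$, the pair $(n, \gamma(x))$ lies in $\rei{\tau\gamma}{\theta;\trace;u}{\Phi}$, where $\tau\gamma$ denotes substituting earlier bindings of $\gamma$ into $\tau$ (as $\Gamma$ is dependent).

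The base case $\Gamma = \cdot$ is immediate since $\gamma$ is the empty substitution, which trivially belongs to the relation at every index. For the inductive step, suppose $\Gamma = \Gamma', x{:}\tau$ and $\gamma = \gamma'[x \mapsto e]$. From $(n, \gamma) \in \rgi{\Gamma}{\theta;\trace;u}{\Phi}$ we extract two facts: $(n, \gamma') \in \rgi{\Gamma'}{\theta;\trace;u}{\Phi}$, and $(n, e) \in \rei{\tau\gamma'}{\theta;\trace;u}{\Phi}$. The inductive hypothesis applied to $\Gamma'$ yields $(j, \gamma') \in \rgi{\Gamma'}{\theta;\trace;u}{\Phi}$ for any $j < n$. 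For the second fact, I would apply part~3 of Lemma~\ref{lem:downward-closed} to conclude $(j, e) \in \rei{\tau\gamma'}{\theta;\trace;u}{\Phi}$. Combining these assembles $(j, \gamma) \in \rgi{\Gamma}{\theta;\trace;u}{\Phi}$.

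The only nontrivial bookkeeping, and thus the main (minor) obstacle, is discharging the side condition $\ftv(\tau\gamma') \subseteq \dom(\theta)$ needed to invoke Lemma~\ref{lem:downward-closed}. This follows from the ambient assumption $\ftv(\Gamma) \subseteq \dom(\theta)$ together with the observation that $\gamma'$ substitutes expressions (which, as program terms, contain no free type variables beyond what can appear through $\theta$) for term variables, hence substitution cannot enlarge the free type-variable set. The condition $\theta(X) \in \ubertype$ for every $X \in \dom(\theta)$ is propagated unchanged to each application of the lemma. No step-index subinduction or operational reasoning is required, making the whole argument essentially mechanical given the prior lemma.
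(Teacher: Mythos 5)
Your proposal matches the paper's proof, which is exactly an induction on the structure of $\Gamma$ invoking the downward-closure lemma (Lemma~\ref{lem:downward-closed}) for the expression interpretation at each binding. The additional bookkeeping you note about free type variables is consistent with the hypotheses the paper carries, so no further comment is needed.
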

\begin{proofsketch}
By induction on the structure of $\Gamma$, 
using Lemma~\ref{lem:downward-closed}.
\end{proofsketch}


\begin{lem}[Validity of types]
~\label{lem:validity-type}
 If 
 $\ftv(\tau)\subseteq\dom(\theta)$
and $\forall X \in\dom(\theta)$, $\theta(X)\in\ubertype$, 
  then ${\rvi{\tau}{\theta;\trace;u}{\Phi}} \in\ubertype$
\end{lem}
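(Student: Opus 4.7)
The plan is to prove the lemma directly from the three closure conditions that define $\ubertype$: (i) every element is an indexed normal form, (ii) the set is closed under decreasing the step-index, and (iii) the set contains every pair $(j,\nf)$ where $\nf$ is not a $\lambda$-abstraction, a type abstraction, or a suspended computation. Conveniently, each of these three obligations is already discharged by an earlier lemma or is immediate from the shape of the definitions in Figure~\ref{fig:re-inv-rel}.

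First I would observe that condition (i) is syntactically immediate from the definition of $\rvi{\tau}{\theta;\trace;u}{\Phi}$: for every type constructor, the defining clause only ever collects pairs $(k,\nf)$ where $\nf$ ranges over normal forms and $k\in\nat$. (The cases for $\texp$ and $X$ are given explicitly; the function, polymorphic, monadic, and base-type cases have the form $\{(k,\lambda x.e)\mid\ldots\}\cup\{(k,\nf)\mid\nf\neq\ldots\}$, so all components are indexed normal forms.) For the type-variable case we use the hypothesis $\theta(X)\in\ubertype$ directly.

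Second, condition (ii) is exactly the statement of Lemma~\ref{lem:downward-closed}(2), whose hypotheses ($\ftv(\tau)\subseteq\dom(\theta)$ and $\theta(X)\in\ubertype$ for every $X\in\dom(\theta)$) are the hypotheses of the present lemma.

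Third, condition (iii) is exactly Lemma~\ref{lem:stuck-tm-in-inv}: every normal form $\nf$ that is not of the form $\lambda x.e$, $\Lambda X.e$, or $\ecomp(c)$ lies in $\rvi{\tau}{\theta;\trace;u}{\Phi}$ for any index. So assembling these three observations finishes the proof. No induction is really needed on the structure of $\tau$ beyond what was already carried out inside Lemmas~\ref{lem:downward-closed} and \ref{lem:stuck-tm-in-inv}; this lemma is essentially a repackaging of those two results together with a cursory inspection of the definitions. The only mildly delicate point is the $\tau=X$ case of condition (iii), but there the required inclusion holds because $\theta(X)\in\ubertype$ by hypothesis, and membership in $\ubertype$ builds in exactly the stuck-normal-form closure we need.
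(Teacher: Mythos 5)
Your proposal is correct and follows essentially the same route as the paper: the paper's own proof is simply ``By Lemma~\ref{lem:downward-closed}'', treating the first and third defining conditions of $\ubertype$ as immediate from the definitions (the third being exactly Lemma~\ref{lem:stuck-tm-in-inv}), which is precisely the decomposition you make explicit. Your additional remark about the $\tau=X$ case relying on $\theta(X)\in\ubertype$ is accurate and, if anything, slightly more careful than the paper's one-line sketch.
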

\begin{proofsketch}
By Lemmas~\ref{lem:downward-closed}.
\end{proofsketch}

\begin{lem}[Closed under delay]~\label{lem:closed-delay}
\begin{enumerate}
\item If
$(k, e)\in\rvi{\tau}{\theta;\trace;u}{\Phi}$ and $u'>u$ 
 then  $(k, e)\in\rvi{\tau}{\theta;\trace;u'}{\Phi}$.
\item If
$(k, e)\in\rei{\tau}{\theta;\trace;u}{\Phi}$ and $u'>u$ 
then $(k, e)\in\rei{\tau}{\theta;\trace;u'}{\Phi}$.
\end{enumerate}
\end{lem}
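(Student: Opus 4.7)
}
My plan is to prove the two parts simultaneously by induction on the structure of $\tau$, with part 2 reduced to part 1 at the same $\tau$ essentially for free: the definition
$\rei{\tau}{\theta;\trace;u}{\Phi} = \{(k,e) \mid e \rightarrow^m_\beta e' \nrightarrow \Longrightarrow (k-m,e') \in \rvi{\tau}{\theta;\trace;u}{\Phi}\}$
uses $u$ only through $\rvi{\tau}{\theta;\trace;u}{\Phi}$, so once part 1 is established at $\tau$, part 2 at $\tau$ follows immediately (the reduction relation $\rightarrow_\beta$ is independent of $u$). Thus the real work is in the induction for part 1.

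The key structural observation driving every case is that every occurrence of $u$ in the definitions of $\rvi{\cdot}{\theta;\trace;u}{\Phi}$ appears only in the \emph{lower bound} of a universal quantifier (either $\forall u' \geq u$ in the function/polymorphic cases, or $\forall u_B, u_E, \iota$ with $u \leq u_B \leq u_E$ in the computation case), or as an argument to $\reinv{\Phi}{\trace;u}$ in the non-introduction-form branch. In both positions, replacing $u$ by a larger $u' > u$ only shrinks the quantifier's domain or appeals to invariant delay. I would proceed case by case as follows. (i) For $\tau = X$ and $\tau = \texp$: the interpretation does not mention $u$ at all, so the statement is immediate. (ii) For $\tau = \bt$: direct from Lemma~\ref{lem:rinv-closed-delay}, since $\rvi{\bt}{\theta;\trace;u}{\Phi}$ is $\rvinv{\Phi}{\trace;u}$. (iii) For $\tau = \Pi x{:}\tau_1.\tau_2$: given $(k,\lambda x.e)$ in the relation at $u$ and $u'' > u$, any candidate $u' \geq u''$ also satisfies $u' \geq u$, so the original defining implication applies verbatim; for the non-$\lambda$ branch, Lemma~\ref{lem:rinv-closed-delay} gives membership in $\reinv{\Phi}{\trace;u''}$. (iv) For $\tau = \forall X.\tau'$: similar to the function case, using the induction hypothesis for $\tau'$ under an extended $\theta$. (v) For $\tau = \tcomp(\eta_c)$: the $\ecomp(c)$ clause quantifies over triples $(u_B,u_E,\iota)$ with $u \leq u_B \leq u_E$, and any triple valid at $u''$ is already valid at $u$, so the existing membership witnesses for $\rci{\cdot}{\theta;\trace;u_B,u_E,\iota}{\cdot}$ carry over unchanged; the non-$\ecomp$ branch again uses Lemma~\ref{lem:rinv-closed-delay}.

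There is no real obstacle here: every case reduces either to the corresponding fact for $\rvinv{\Phi}{\trace;u}/\reinv{\Phi}{\trace;u}$ (which is Lemma~\ref{lem:rinv-closed-delay}) or to the syntactic observation that the $u$-dependence in the defining clauses is antitone in $u$ in the appropriate sense. The one thing to be careful about is that in the function and polymorphic cases the body of the implication mentions $\rei{\tau_i}{\theta;\trace;u'}{\Phi}$ at some $u' \geq u$, but since we only weaken the premise by moving from $u$ to $u'' > u$, no induction hypothesis on $\tau_i$ is actually needed to establish delay; the induction hypothesis is needed only indirectly through Lemma~\ref{lem:rinv-closed-delay} for the non-introduction-form branch. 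The proof is therefore short and mechanical.
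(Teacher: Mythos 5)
Your proposal is correct and follows essentially the same route as the paper, whose entire proof is the one-line sketch ``by examining the definitions and use Lemma~\ref{lem:rinv-closed-delay}''; your case analysis just makes explicit why each clause's $u$-dependence is harmless. One small nit: your closing paragraph claims no induction hypothesis on the component types is ever needed, but in the $\forall X.\tau$ case the body is required at time $u$ itself (not at a quantified $u'\geq u$), so the IH on $\tau$ genuinely is used there --- as your case~(iv) correctly states.
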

\begin{proofsketch}
By examining the definitions and use Lemma~\ref{lem:rinv-closed-delay}
\end{proofsketch}

\begin{lem}[Substitutions are closed under delay]
~\label{lem:ctx-closed-under-delay}
\noindent If 
$(n,\gamma)\in\rg{\Gamma}{\theta;\trace;u}{\Phi}$ and
$u'>u$
then
$(n,\gamma)\in\rg{\Gamma}{\theta;\trace;u'}{\Phi}$.
\end{lem}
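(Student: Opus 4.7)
The plan is to prove this lemma by straightforward induction on the structure of the context $\Gamma$, reducing at each step to the corresponding closure-under-delay property of the underlying type interpretations, which has already been established as Lemma~\ref{lem:closed-delay}.

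In the base case, $\Gamma = \cdot$, so the only valid substitution is the empty one, and the claim holds trivially because the empty substitution belongs to $\rg{\cdot}{\theta;\trace;u'}{\Phi}$ for every $u'$ by definition.

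For the inductive step, suppose $\Gamma = \Gamma_1, x{:}\tau$ and that $(n,\gamma)\in\rg{\Gamma_1, x{:}\tau}{\theta;\trace;u}{\Phi}$. By the expected definition of $\rg{\cdot}{\theta;\trace;u}{\Phi}$ on extended contexts, $\gamma$ decomposes as $\gamma_1[x \mapsto e]$ with $(n,\gamma_1)\in\rg{\Gamma_1}{\theta;\trace;u}{\Phi}$ and $(n,e)\in\rei{\tau\gamma_1}{\theta;\trace;u}{\Phi}$. I would apply the induction hypothesis to the first component to obtain $(n,\gamma_1)\in\rg{\Gamma_1}{\theta;\trace;u'}{\Phi}$, and apply Lemma~\ref{lem:closed-delay} (part 2) to the second component to obtain $(n,e)\in\rei{\tau\gamma_1}{\theta;\trace;u'}{\Phi}$. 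Reassembling these yields $(n,\gamma)\in\rg{\Gamma_1, x{:}\tau}{\theta;\trace;u'}{\Phi}$, as required.

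I do not anticipate any real obstacle here: the lemma is a pointwise lifting of Lemma~\ref{lem:closed-delay} from single expressions to substitutions, and the only subtlety is ensuring the type at which $e$ is interpreted, namely $\tau\gamma_1$, is the same on both sides of the delay — which it is, since the delay changes only the time index $u$, not $\gamma_1$ or $\tau$. The induction is purely on the syntactic shape of $\Gamma$, so no sub-induction on the step index is needed, in contrast with heavier lemmas such as Lemma~\ref{lem:inv-confine}.
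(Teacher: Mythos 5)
Your proposal matches the paper's own argument exactly: the paper proves this lemma by induction on the structure of $\Gamma$, invoking Lemma~\ref{lem:closed-delay} pointwise on each binding, which is precisely your decomposition into $\gamma_1[x\mapsto e]$ with the induction hypothesis on $\gamma_1$ and closure-under-delay on $e$. Your added observation that the interpreting type $\tau\gamma_1$ is unchanged by the delay is a correct and worthwhile clarification, but the route is the same.
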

 \begin{proofsketch}
 By induction on the structure of $\Gamma$, 
 using Lemma~\ref{lem:closed-delay}.
 \end{proofsketch}

\section{Soundness}

\begin{thm}[Soundness]
Assume that $\forall A::\alpha \in\Sigma$, $\forall \Phi, \trace, n, u, (n,
A)\in\rai{\alpha}{\cdot;\trace;u}{\Phi}$, 
then
\begin{enumerate}
\item 
\begin{packedenumerate}
\item  
\begin{itemize}
\item $ \ee::u:\bt; \Theta;\Sigma;\Gamma^L;\Gamma;\Delta \vdash_\Phi e : \tau$, 
\item $\forall \theta\in\rd{\Theta}{}$, 
\item $\forall \gamma^L\in\interp{\Gamma^L}$,
\item $\forall U, U', U'\geq U$, let $\gamma_u=[U/u]$, 
\item  $\forall \trace$,
 $\forall n, \gamma$,   $(n; \gamma)\in \rgi{\Gamma\gamma_u\gamma^L}{\theta;\trace;U'}{\Phi}$,
\item $\trace\vDash\Delta\gamma\gamma_u\gamma^L$
\end{itemize}
 implies $(n; e\gamma) \in \rei{\tau\gamma\gamma_u\gamma^L}
 {\theta;\trace;U'}{\Phi}$
\item 
\begin{itemize}  
\item $ \ee:: u_1, u_2, i; \Theta;\Sigma;\Gamma^L;\Gamma;\Delta \vdash_\Phi c : \eta$, 
\item $\forall$ $u$, $u_B$, $u_E$, $\iota$ s.t. $u\leq u_B\leq u_E$,  
let $\gamma_1 = [u_B, u_E,\iota/u_1,u_2,i]$
\item $\forall \theta\in\rd{\Theta}{}$, 
\item $\forall \gamma^L\in\interp{\Gamma^L}$,
\item $\forall \trace$,
   $\forall n, \gamma, (n; \gamma)\in
 \rgi{\Gamma\gamma_1\gamma^L}{\theta;\trace;u}{\Phi}$,
\item $\trace\vDash\Delta\gamma\gamma_1\gamma^L$ 
\end{itemize}
implies $(n; c\gamma) \in \rci{{\eta}\gamma\gamma_1\gamma^L}{\theta;
   \trace; u_B,u_E,\iota}{\Phi}$
\item 
\begin{itemize} 
\item $ \ee::u:\bt; \Theta;\Sigma;\Gamma^L;\Gamma;\Delta  \vdash_\Phi c :
  \eta_c$,  
\item $\forall \theta\in\rd{\Theta}{}$, 
\item $\forall \gamma^L\in\interp{\Gamma^L}$,
\item $\forall U, U', U'\geq U$, let $\gamma_u=[U/u]$, 
\item  $\forall \trace$,
 $\forall n, \gamma$,   $(n; \gamma)\in \rgi{\Gamma\gamma_u\gamma^L}{\theta;\trace;U'}{\Phi}$,
\item  $\trace\vDash\Delta\gamma\gamma_u\gamma^L$
\end{itemize}
 implies
  $(n; c\gamma) \in \rfi{\eta_c\gamma\gamma_u\gamma^L}
{\theta;\trace;U'}{\Phi}$
\item 
\begin{itemize}
\item $ \ee::u:\bt; \Theta;\Sigma;\Gamma^L;\Gamma;\Delta  \vdash_\Phi
  a : \alpha$, 
\item $\forall \theta\in\rd{\Theta}{}$, 
\item $\forall \gamma^L\in\interp{\Gamma^L}$,
\item $\forall U, U', U'\geq U$, let $\gamma_u=[U/u]$, 
\item  $\forall \trace$,
 $\forall n, \gamma$,   $(n; \gamma)\in \rgi{\Gamma\gamma_u\gamma^L}{\theta;\trace;U'}{\Phi}$,
\item $\trace\vDash\Delta\gamma\gamma_u\gamma^L$
\end{itemize}
implies
  $(n; a\gamma) \in \rai{\alpha\gamma\gamma_u\gamma^L}{\theta;\trace;U'}{\Phi}$
\end{packedenumerate}

\item 
\begin{packedenumerate}
\item  
\begin{itemize}
\item $ \ee::u:\bt; \Theta;\Sigma;\Gamma^L;\Gamma;\Delta \vdash e : \tau$, 
\item $\forall \theta\in\rd{\Theta}{}$, 
\item $\forall \gamma^L\in\interp{\Gamma^L}$,
\item $\forall U, U', U'\geq U$, let $\gamma_u=[U/u]$, 
\item  $\forall \trace$, $\forall\Phi$,
 $\forall n, \gamma$,   $(n; \gamma)\in \rgi{\Gamma\gamma_u\gamma^L}{\theta;\trace;U'}{\Phi}$,
\item $\trace\vDash\Delta\gamma\gamma_u\gamma^L$
\end{itemize}
 implies $(n; e\gamma) \in \rei{\tau\gamma\gamma_u\gamma^L}
 {\theta;\trace;U'}{\Phi}$
\item 
\begin{itemize}  
\item $ \ee:: u_1, u_2, i; \Theta;\Sigma;\Gamma^L;\Gamma;\Delta \vdash c : \eta$, 
\item $\forall$ $u$, $u_B$, $u_E$, $\iota$ s.t. $u\leq u_B\leq u_E$,  
let $\gamma_1 = [u_B, u_E,\iota/u_1,u_2,i]$
\item $\forall \theta\in\rd{\Theta}{}$, 
\item $\forall \gamma^L\in\interp{\Gamma^L}$,
\item $\forall \trace$, $\forall\Phi$,
   $\forall n, \gamma, (n; \gamma)\in
 \rgi{\Gamma\gamma_1\gamma^L}{\theta;\trace;u}{\Phi}$,
\item $\trace\vDash\Delta\gamma\gamma_1\gamma^L$ 
\end{itemize}
implies $(n; c\gamma) \in \rci{{\eta}\gamma\gamma_1\gamma^L}{\theta;
   \trace; u_B,u_E,\iota}{\Phi}$
\item 
\begin{itemize} 
\item $ \ee::u:\bt; \Theta;\Sigma;\Gamma^L;\Gamma;\Delta  \vdash c :
  \eta_c$,  
\item $\forall \theta\in\rd{\Theta}{}$, 
\item $\forall \gamma^L\in\interp{\Gamma^L}$,
\item $\forall U, U', U'\geq U$, let $\gamma_u=[U/u]$, 
\item  $\forall \trace$, $\forall\Phi$,
 $\forall n, \gamma$,   $(n; \gamma)\in \rgi{\Gamma\gamma_u\gamma^L}{\theta;\trace;U'}{\Phi}$,
\item $\trace\vDash\Delta\gamma\gamma_u\gamma^L$
\end{itemize}
 implies
  $(n; c\gamma) \in \rfi{\eta_c\gamma\gamma_u\gamma^L}
{\theta;\trace;U'}{\Phi}$
\item 
\begin{itemize}
\item $ \ee::u:\bt; \Theta;\Sigma;\Gamma^L;\Gamma;\Delta  \vdash
  a : \alpha$, 
\item $\forall \theta\in\rd{\Theta}{}$, 
\item $\forall \gamma^L\in\interp{\Gamma^L}$,
\item $\forall U, U', U'\geq U$, let $\gamma_u=[U/u]$, 
\item  $\forall \trace$, $\forall\Phi$,
 $\forall n, \gamma$,   $(n; \gamma)\in \rgi{\Gamma\gamma_u\gamma^L}{\theta;\trace;U'}{\Phi}$,
\item $\trace\vDash\Delta\gamma\gamma_u\gamma^L$
\end{itemize}
implies
  $(n; a\gamma) \in
  \rai{\alpha\gamma\gamma_u\gamma^L}{\theta;\trace;U'}{\Phi}$
\item 
\begin{itemize}
\item $ \ee:: u_1, u_2, i; \Theta;\Sigma;\Gamma^L;\Gamma;\Delta \vdash \varphi
   \jonempty$, 
\item $\forall$ $u$, $u_B$, $u_E$, $\iota$ s.t. $u\leq u_B\leq u_E$, 
\item let $\gamma_1 = [u_B, u_E,\iota/u_1,u_2,i]$
\item $\forall \theta\in\rd{\Theta}{}$, 
\item $\forall \gamma^L\in\interp{\Gamma^L}$,
\item $\forall \Phi$, 
$\forall \trace$,
$\forall n, \gamma, (n; \gamma)\in 
   \rgi{\Gamma\gamma_1\gamma^L}{\theta;\trace;u}{\Phi}$, 
\item  $j_b$ is the length of $\trace$ from time $u_B$ to the end of
$\trace$, 
\item $j_e$ is the length of  $\trace$ from time $u_E$ to the end
 of $\trace$, 
\item $n \geq j_b \geq j_e$ 
\item  between time $u_B$ and time
   $u_E$,  thread $\iota$ is
   silent 
\item $\trace \vDash \Delta\gamma\gamma_1$ 
\end{itemize}
implies $\trace\vDash (\varphi \gamma\gamma_1)$
\item 
\begin{itemize} 
\item $\ee::\Theta;\Sigma;\Gamma^L;\Gamma;\Delta\vdash \varphi
  \jtrue$,
\item $\forall \theta\in\rd{\Theta}{}$, 
\item $\forall \gamma^L\in\interp{\Gamma^L}$,
\item $\forall \trace$, $\forall \Phi$, 
   $\forall n, \gamma,u, (n; \gamma)\in
  \rgi{\Gamma\gamma^L}{\theta;\trace;u}{\Phi}$,
\item $\trace \vDash \Delta\gamma^L\gamma$
\end{itemize}
 implies
 $\trace \vDash \varphi\gamma^L\gamma$
\end{packedenumerate}
\end{enumerate}
\end{thm}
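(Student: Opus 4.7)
The plan is to prove all clauses of the theorem simultaneously by lexicographic induction on (i) the height of the typing/logical derivation $\ee$ and (ii) the step-index $n$, with the step-index induction needed only for the \rulename{Fix} case. Each clause is proven by case analysis on the last rule of $\ee$. The substitution-closure, delay-closure, and downward-closure lemmas for $\mathcal{RV}$, $\mathcal{RE}$, $\mathcal{RC}$, and $\mathcal{RG}$ are invoked routinely to massage substitutions between premises, and substitution on types (Lemma~\ref{lem:subst}) handles the polymorphic cases. The judgments qualified by $\Phi$ (part \textit{1}) are genuinely distinct from the unqualified judgments (part \textit{2}) only because the unqualified case must quantify over all $\Phi$; for every rule that is shared, the proofs differ only in whether $\Phi$ is fixed or universally chosen, so the two parts run in parallel.

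First I would handle the standard structural rules (\rulename{E-Var}, \rulename{E-App}, \rulename{E-Fun}, \rulename{E-TApp}, \rulename{Ret}, \rulename{Act}, the various sequencing rules, \rulename{Comp}, \rulename{If}, and the logical rules) by unfolding the semantic relations, applying the induction hypotheses to the premises on appropriately modified substitutions, and reconstructing the required membership. The \rulename{SeqC}/\rulename{SeqCI} cases are the typical ``three-segment'' decompositions: the silent prefix fact (clause \textit{2.(e)}) discharges the initial silent reduction, the partial-correctness I.H.\ on $c_1$ gives the first postcondition and the expression returned, which is fed through Lemma~\ref{lem:ctx-smaller} and Lemma~\ref{lem:ctx-closed-under-delay} into the I.H.\ on $c_2$. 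The three case premises of the rule match exactly the three configurations of the interval into which the interpretation of the invariant type splits.

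The two non-routine cases are \rulename{Confine} and \rulename{Fix}. For \rulename{Confine}, given $(n,\gamma)\in\rgi{\Gamma}{\theta;\trace;U'}{\Phi}$ with $\Phi = u_b.u_e.i.\varphi$, I would first apply Lemma~\ref{lem:cnfn-confine-short} on each type in $\Gamma$ (using $\cnfn{\Gamma}{\Phi}$) to conclude that $\gamma(x)\in\reinv{\Phi}{\trace;U'}$ pointwise, i.e.\ $(n,\gamma)$ is a valid substitution in the invariant semantics. The silent-thread premise together with the trace-composability hypothesis discharges the precondition of Lemma~\ref{lem:inv-confine-short}, and since $\fa(e)=\emptyset$ we obtain $(n,e\gamma)\in\reinv{\Phi}{\trace;U'}$. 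One more application of Lemma~\ref{lem:cnfn-confine-short} on $\tau$ (using $\cnfn{\tau}{\Phi}$) lifts this back into $\rei{\tau}{\theta;\trace;U'}{\Phi}$. The \rulename{Conf-Sub} case is immediate: the unqualified I.H.\ applies uniformly to any $\Phi$, so instantiating it at the tag of the conclusion suffices. The \rulename{Eq} case follows because the logical soundness I.H.\ (clause \textit{2.(f)}) gives $\trace\vDash e=e'$, and equality on traces is syntactic, so $e\gamma$ and $e'\gamma$ are the same term and membership in the type interpretation transfers.

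The main obstacles, in order of difficulty, will be \rulename{Fix} and \rulename{Honest}. For \rulename{Fix}, the sub-induction on $n$ mirrors the argument already sketched in Lemma~\ref{lem:inv-confine-short}(3): the base case $n=0$ is trivial since the body cannot complete; in the inductive step I must show that the unrolled body $c[\lambda z.\ecomp(\cfix f(y).c\gamma\;z)/f][e/x]$ satisfies the partial-correctness and invariant specifications, which requires packaging the inductive hypothesis for $(n-1,\cfix f(y).c\gamma)$ back into a value of the recursive function type in the sense of $\rvi{\Pi y{:}\tau.\ldots}{\cdot}{\Phi}$, and then invoking the outer I.H.\ on the body with this as the substitution for $f$, using downward closure to align indices. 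The interaction between the two typing premises for $c$ (one for partial correctness, one for invariant) and the three implication premises that glue their conclusions with the silent postcondition $\varphi_0$ has to be carefully tracked. For \rulename{Honest}, as noted in the paper, the subtle point is that the trace length $m$ after the ``start'' point is unrelated to the ambient step-index $n$ of the substitution $\gamma$; the fix is precisely that $\Gamma$ is required to be empty in the premise, so we can discard $\gamma$ entirely and apply the I.H.\ at index $m$ with the empty substitution, obtaining the invariant for $c$ on the interval $(u,u']$ and closing the goal with the observation that $c$, starting with an empty stack, can never return and hence the invariant clause of $\rci{\varphi}{\cdot}{\Phi}$ is the relevant one.
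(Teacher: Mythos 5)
Your proposal matches the paper's proof in both structure and all the key steps: the outer induction on the typing derivation with a sub-induction on the step-index only for \rulename{Fix}, the three-way use of Lemma~\ref{lem:cnfn-confine-short} and Lemma~\ref{lem:inv-confine-short} for \rulename{Confine}, the direct instantiation of the unqualified induction hypothesis for \rulename{Conf-Sub}, and the observation that the empty $\Gamma$ in \rulename{Honest} is precisely what lets you re-apply the induction hypothesis at the unrelated index $m$ with the empty substitution. No gaps; this is essentially the same argument as the paper's.
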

\begin{proof}
By induction on the structure of $\ee$.
\\Proof of 1.(a).
\begin{description}
\item[case:] \rulename{Confine}
\\\\
\(\mprset{flushleft}
\inferrule*[right=Confine]
{  
 \mbox{$\varphi$ is trace composable} 
\\ \ee'::u_b, u_e, i; \Theta;\Sigma;\Gamma^L,u;\Gamma;\Delta \vdash \varphi\jonempty 
\\ u_b{:}\bt, u_e{:}\bt, i{:}\bt \vdash
\varphi\jok
 \\\fa(e)=\emptyset 
\\\fv(e)\subseteq\Gamma
 \\\cnfn{\tau}{u_b.u_e.i.\varphi}
 \\\cnfn{\Gamma}{u_b.u_e.i.\varphi}
}
{u;\Theta;\Sigma;\Gamma^L;\Gamma;\Delta  \vdash_{u_b.u_e.i.\varphi} e : \tau}
\)
\begin{tabbing}
By assumptions 
\\
\quad \=
$\theta\in\rd{\Theta}{}$, 
$\forall \gamma^L\in\interp{\Gamma^L}$,
\\\>$\gamma_u= U/u$, $U'\geq U$, 
$\trace\vDash\Delta\gamma\gamma_u\gamma^L$
\\\>and  $(n; \gamma)\in \rgi{\Gamma\gamma_u\gamma^L}{\theta;\trace;U'}{u_b.u_e.i.\varphi}$,
\`(1)
\\By Lemma~\ref{lem:cnfn-confine} and (1)
\\\>and  $(n; \gamma)\in
\reinv{\Phi}{\trace;U'}$
\`(2)
\\By I.H. on $\ee'$, given any $\iota$, $u_B$, and $u_E$, 
\\$\iota$ is silent between $u_B$ and $u_E$ implies
\\\> $\trace\vDash \varphi[u_B,u_E,\iota/u_b,u_e,i]$
\`(3)
\\By (1) and (3) and Lemma~\ref{lem:inv-confine}
\\\> $(n,e\gamma)\in\reinv{u_b.u_e.i.\varphi}{\trace;U'}$
\`(4)
\\By (4) and Lemma~\ref{lem:cnfn-confine}
\\\> $(n; e\gamma)\in
\rei{\tau\gamma_u\gamma^L}{\theta;\trace;U'}{u_b.u_e.i.\varphi}$
\`(5)
\end{tabbing}
\end{description}

\noindent Proof of 2.(a).
\begin{description}
\item[case:] 
\(
\inferrule*[right=E-Fun]{\ee'::\Theta;\Sigma;\Gamma^L,u,\Gamma\vdash \tau_1 \jok 
\\u;\Theta; \Sigma;\Gamma^L;\Gamma, x: \tau_1 ;\Delta \vdash e
  : \tau_2}{u;\Theta; \Sigma;\Gamma^L;\Gamma;\Delta  \vdash \lambda x.e : \Pi x{:}\tau_1. \tau_2}
\)
\begin{tabbing}
By assumptions
\\
\quad \=
$\theta\in\rd{\Theta}{}$, 
$\forall \gamma^L\in\interp{\Gamma^L}$,
\\\>$\gamma_u= U/u$, $U'\geq U$, 
$\trace\vDash\Delta\gamma\gamma_u\gamma^L$
\\\>and  $(n; \gamma)\in \rgi{\Gamma\gamma_u\gamma^L}{\theta;\trace;U'}{\Phi}$,
\`(1)
\\ Given
 $j < k$,  $u''\geq U'$,
\\\> and $(j,  e_0) \in \rei{\tau_1\gamma\gamma_u\gamma^L}{\theta;\trace;u''}{\Phi}$
\`(2)
\\By Lemma~\ref{lem:ctx-smaller} and~\ref{lem:ctx-closed-under-delay}
\\ \>
  $(j; \gamma)\in \rgi{\Gamma\gamma_u\gamma^L}{\theta;\trace;u''}{\Phi}$ 
\`(3)
\\ By (2) and (3)
\\\>
  $(j; \gamma[x\mapsto e_0])\in 
 \rgi{(\Gamma, x:\tau_1)\gamma_u\gamma^L}{\theta;\trace;u''}{\Phi}$, 
\`(4)
\\ By I.H. on $\ee'$
\\\>
 $(j, e\gamma[x\mapsto e_0]) \in \rei{\tau_2\gamma_u\gamma^L\gamma[x\mapsto
   e_0]}{\theta;\trace;u''}{\Phi}$
\`(5)
\\ By (5) is derived based on assumption in (2)
\\ \>
 $(n, \lambda x. e\gamma) \in 
\rvi{ (\Pi x{:}\tau_1.\tau_2)\gamma\gamma_u\gamma^L}{\theta;\trace;U'}{\Phi}$ 
\`(6)
\\By (6)
\\ \>
$(n, \lambda x. e\gamma) \in \rei{ (\Pi
  x{:}\tau_1.\tau_2)\gamma\gamma_u\gamma^L}{\theta;\trace;U'}{\Phi}$
\end{tabbing}

\item[case:]
\(
\inferrule*[right=E-App]{\ee_1::u;\Theta; \Sigma;\Gamma^L;\Gamma;\Delta  \vdash e_1 :
  \Pi x{:}\tau_1.\tau_2 
  \\ \ee_2:: u;\Theta; \Sigma;\Gamma^L;\Gamma;\Delta  \vdash e_2: \tau_1}{
  u;\Theta; \Sigma;\Gamma^L;\Gamma;\Delta  \vdash e_1\;e_2 : \tau_2[e_2/x]}
\)
\begin{tabbing}
By assumptions 
\\
\quad \=
$\theta\in\rd{\Theta}{}$, 
$\gamma^L\in\interp{\Gamma^L}$,
\\\>$\gamma_u= U/u$, $U'\geq U$, 
$\trace\vDash\Delta\gamma\gamma_u\gamma^L$
\\\>and  $(n; \gamma)\in \rgi{\Gamma\gamma_u\gamma^L}{\theta;\trace;U'}{\Phi}$,
\`(1)
\\ By I.H. on $\ee_2$
\\ \>
$(n,
e_2\gamma)\in\rei{\tau_1\gamma\gamma_u\gamma^L}{\theta;\trace;U'}{\Phi}$
\`(2)
\\ By I.H. on $\ee_1$
\\ \>
$(n, e_1\gamma)\in\rei{(\Pi x{:}\tau_1.\tau_2)\gamma\gamma_u\gamma^L}
{\theta;\trace;U'}{\Phi}$
\`(3)
\\Assume $(e_1\;e_2)\gamma\rightarrow^{m}_\beta \nf\nrightarrow$
\\By (3),
\\\>
  $(e_1\;e_2)\gamma \rightarrow^*_\beta  \nf_1 (e_2\gamma)$,
\\\> and  $(n-m, \nf_1) \in\rvi{(\Pi
   x{:}\tau_1.\tau_2)\gamma\gamma_u\gamma^L}{\theta;\trace;U'}{\Phi}$ 
\`(4)
\\ We consider two cases:
\\ subcase 1:  $\nf_1 = \lambda x.e'_1$
\\ By (4) 
\\\> 
 $(n{-}m{-}1, e'_1[e_2\gamma/x]) \in
 \rei{\tau_2\gamma\gamma_u\gamma^L[e_2\gamma/x]}
 {\theta;\trace;U'}{\Phi}$ 
\`(5)
\\By (4) and (5)
\\\>
$(n, (e_1e_2)\gamma)\in
\rei{(\tau_2[e_2/x])\gamma\gamma_u\gamma^L}{\theta;\trace;U'}{\Phi}$
\`(6)
\\ subcase 2: $\nf_1\neq\lambda x.e'_1$
\\ By Lemma~\ref{lem:stuck-tm-in-inv}
\\\> $(n-m, \nf_1(e_2\gamma))\in\rvi{\tau_2\gamma\gamma_u\gamma^L[e_2\gamma/x]}
 {\theta;\trace;U'}{\Phi}$
\`(8)
\\By (8)
\\\>
$(n, (e_1e_2)\gamma)\in
\rei{(\tau_2[e_2/x])\gamma\gamma_u\gamma^L}{\theta;\trace;U'}{\Phi}$
\`(9)
\end{tabbing}
\end{description}



\noindent Proof of 2.(b)
\begin{description}
\item[case:] \rulename{SeqC}

\(\mprset{flushleft}
\inferrule*{
 \ee_1::   u_0, u_1, i; 
 \Theta; \Sigma;\Gamma^L;  u_3,\Gamma;\Delta , u_0\leq u_1 \vdash \varphi_0 \jonempty
 \\\\ 
\ee_2::u_1, u_2, i; \Theta; \Sigma;\Gamma^L, u_0:
   \bt,  u_3;\Gamma;\Delta,
    u_1 \leq u_2, \varphi_0 
\\\\~~\vdash c_1 : x{:}\tau.\varphi_1
    \\\\
\ee_3:: u_2, u_3, i; \Theta; \Sigma;\Gamma^L,
   u_0, u_1;\Gamma,x : \tau;\Delta,u_2 \leq u_3,\varphi_0, \varphi_1 
   \\\\~~\vdash
    c_2: y{:}\tau'.\varphi_2
\\\\
\ee_4::  \Theta; \Sigma;\Gamma^L,u_1, u_2,  u_0, u_3, i;\Gamma,
   x{:}\tau,y:\tau';\Delta  
\\\\~~\vdash 
  (\varphi_0 \conj \varphi_1 \conj
  \varphi_2) \Rightarrow \varphi \jtrue 
\\  \Theta;\Sigma;\Gamma^L, u_0, u_3, i;\Gamma,
y:\tau' \vdash  \varphi \jok
\\\fv(\cletc(c_1, x.c_2))\subseteq\dom(\Gamma)
}{
  u_0, u_3, i; \Theta; \Sigma;\Gamma^L; \Gamma;\Delta  \vdash
   \cletc(c_1, x.c_2): y{:}\tau'.\varphi
}
\)

\begin{tabbing}
By assumption 
\\
\quad \=
Pick time points $u$, $u_B$, $u_E$ and thread id $\iota$,
 s.t. $u\leq u_B\leq u_E$,  
\\\>let $\gamma_1 = [u_B, u_E,\iota/u_0,u_3,i]$
\\\>Pick any trace $\trace$, such that $\trace\vDash\Delta\gamma^L\gamma\gamma_1$
\\\> $\theta\in\rd{\Theta}{}$, 
$\gamma^L\in\interp{\Gamma^L}$,
\\\>
   $(n; \gamma)\in \rgi{\Gamma\gamma_u\gamma^L}{\theta;\trace;U'}{\Phi}$,
\`(1)
\\\>the length of the trace from time $u_B$ to the end of
$\trace$ is $j_b$
\\\>the length of the trace from time $u_E$ to the end of $\trace$ is $j_e$
\\\> and  $n \geq j_b \geq j_e$ 
\`(2)
\\\>the configuration at time $u_B$ is
\\\> $\steps{u_B}\sigma_{b}\rt \cdots, \langle\iota; y.c::K;
 \clete(e_1, x.c_2)\gamma\rangle\cdots$
\\\>the configuration at time $u_E$ is
\\\>$\steps{u_E}\sigma_{e}\rt \cdots, \langle\iota; K;c[e/y]\rangle\cdots$ 
\\\>and between $u_B$ and $u_E$ (inclusive), the stack of 
  thread $\iota$ 
\\\>always contains prefix $y.c::K$
\`(3)
\\By the operational semantics
\\\>exists $u_{m1}$, $u_{m2}$,  s.t. $u_B{\leq}u_{m1}
{\leq}u_{m2} {\leq}  u_e$ 
\\\> the configuration at time $u_{m1}$ is
\\\>$\steps{u_{m1}}\sigma_{m1}\rt \cdots, \langle\iota; x.c_2\gamma::y.c::K;
c_1\gamma\rangle\cdots$,
\\\>the configuration at time $u_{m2}$ is
\\\> $\steps{u_{m2}}\sigma_{m2}\rt \cdots, \langle\iota; y.c::K;
c_2\gamma[e_0/x]\rangle\cdots$,
\`(4)
\\By (4)
\\\> between time $u_B$ and $u_{m1}$, thread $\iota$ is silent
\`(5)
\\By (1),
\\\>$\trace\vDash (\Delta\gamma^L\gamma\gamma_1, (u_0\leq u_1)\gamma_1[u_{m1}/u_1])$
\`(6)
\\By (1) 
\\\> $(j_{m1}, \gamma) \in
\rgi{\Gamma\gamma^L [u_E/u_3][u_B, u_{m1},\iota/u_0,u_1,i]}{\theta;\trace;u}{\Phi}$,
\`(7)
\\By I.H. on $\ee_1$ and (5), (6) and (7)
\\\> $\trace\vDash\varphi_0 \gamma\gamma^L\gamma_1[u_{m1}/u_1]$
\`(8)
\\Let $\gamma_2=\gamma\gamma^L\gamma_1[u_{m1}/u_1]$
\\By (1) and Lemma~\ref{lem:ctx-closed-under-delay} and $u\leq u_{m1}$
\\\>$(j_{m1}; \gamma)\in \rgi{\Gamma[u_B, u_{m1},u_{m2}, u_E, \iota/u_0,u_1,u_2,u_3,i]}
{\theta;\trace;u_{m1}}{\Phi}$
\`(9)
\\Let $\gamma_3 = [u_{m1},u_{m2},\iota/u_b,u_e,j]$, 
\\By I.H. on $\ee_2$  and (6), (8), (9)
\\\> $(n, c_1\gamma)$ $\in\rci{(x{:}\tau.\varphi_1)\gamma_2\gamma_3}
  {\theta;\trace;u_{m1};u_{m2};\iota}{\Phi}$
\`(10)
\\By (10), 
\\ let $j_{m2}$ be the length of the trace from time $u_{m_2}$ to the end of
$\trace$
\\\>
$(j_{m2}, e_0)\in\rei{\tau\gamma_2\gamma_3}
 {\theta;\trace; u_{m2}}{\Phi}$ and
\\\>
 $\trace\vDash  \varphi_1\gamma_2\gamma_3[e_0/x]$ 
\`(11)
\\By Lemma~\ref{lem:ctx-smaller} and $j_{m2} < n$
\\let $\gamma_4 = \gamma\gamma^L[u_B, u_E,\iota/u_0,u_3,i][u_{m1},u_{m2}/u_1,u_2][e_0/x]$, 
\\\> $(j_{m2},gamma[e_0/x])$
\\\>$\in\rgi{(\Gamma, x{:}\tau)\gamma_4)[u_{m2},u_E,\iota/u_2,u_3,i]}
 {\theta;\trace;u_{m2}}{\Phi}$
 \`(12)
\\By I.H. on $\ee_3$, (11), (12)
\\\>
 $(j_{m2}, c_2\gamma[e_0/x])
 \in\rci{(y{:}\tau'.\varphi_2)\gamma_4}
 {\theta;\trace; u_{m2},u_E, \iota}{\Phi}$  
\`(13)
\\By (14)
\\\>
 $(j_e, e)\in\rei{\tau'\gamma_4}{\theta;\trace;u_E}{\Phi}$ and
\\\> $\trace\vDash \varphi_2\gamma_4[e/y]$ 
\`(14)
\\By I.H. on $\ee_4$ 
\\\>$\trace\vDash (\varphi_0\conj\varphi_1\gamma_e\varphi_2[e/y])\gamma_4\Rightarrow
\varphi\gamma_4[e/y]$ \`(15)
\\\>$\trace\vDash\varphi\gamma_4[e/y]$
\`(16)
\\By (14) (15)
\\\>
 $(n, \clete(c_1, x.c_2)\gamma)\in
 \rci{(y:\tau'.\varphi)\gamma^L\gamma\gamma_1}{\theta;\trace; u_B, u_E,\iota}{\Phi}$
\end{tabbing}

\end{description}

\noindent Proof of 2.(f)
\begin{description}
\item[case:] \rulename{Honest}

\(
\inferrule*{
 \ee_1::  u_1, u_2, i; \Theta;\Sigma;\Gamma^L;\cdot; \Delta \vdash c :  \varphi
\\  \ee_2:: \Theta;\Sigma;\Gamma^L; \cdot; \Delta \vdash \pred{start}(I, c, u)
\jtrue 
\\ \Theta;\Sigma\vdash\Gamma^L, \Gamma \jok
 }{
 \Theta;\Sigma;\Gamma^L; \Gamma; \Delta  \vdash \forall u'{:}\bt.(u'{>}u)  \imp
 \varphi[u,u',I/u_1, u_2,i] \jtrue}
\)
\begin{tabbing}
By assumptions 
\\
\quad \=
$\theta\in\rd{\Theta}{}$, 
$\gamma^L\in\interp{\Gamma^L}$,
\\\>
$\trace\vDash\Delta\gamma^L$
\`(1)
\\To show  $\trace\vDash_\theta  (\forall u'.(u'>u)  \imp
\varphi[u,u',I/u_1, u_2,i])\gamma\gamma^L$
\\ By I.H. on $\ee_2$
\\\>
 $\trace\vDash_\theta \pred{start}(I, c, u)\gamma^L$
 \`(2)
\\By (2)
\\\>
 at time $u\gamma^L$, thread $I\gamma^L$ starts to evaluate $c$ on an empty
 stack, 
\`(3)
\\Given any time $U'>u\gamma^L$, and $k$ such that 
the length of $\trace$
\\ after $u\gamma^L$ is no less than $k$
\\ By I.H. on $\ee_1$
\\\>
$(k, c)\in\rci{\varphi\gamma[ u\gamma, U', I\gamma/u_1,u_2,i]}
{\theta;\trace; u\gamma, U', I\gamma}{\Phi}$
\`(4)
\\ because $c$ starts from an empty stack, 
\\  $c$ couldn't have returned at time $U'$, 
\`(5)
\\By (4) (5) and (1) and the definition of $\mathcal{RC}$, 
\\\>  $\trace\vDash_\theta\varphi\gamma^L[u\gamma, U', I\gamma/u_1, u_2,
i]$
 \`(7)
\end{tabbing}

\item[case:] \rulename{$\forall$I}

\(
\inferrule*{
\ee'::\Theta; \Sigma;\Gamma^L, x:\tau;\Gamma;\Delta  \vdash \varphi \jtrue
}{\Theta; \Sigma;\Gamma^L;\Gamma;\Delta  \vdash \forall
  x{:}\tau. \varphi \jtrue}
\)
\begin{tabbing}
By assumptions 
\\
\quad \=
$\theta\in\rd{\Theta}{}$, 
$\gamma^L\in\interp{\Gamma^L}$,
$\trace\vDash\Delta\gamma\gamma^L$
\\\>and  $(n; \gamma)\in \rgi{\Gamma\gamma^L}{\theta;\trace;u}{\Phi}$,
\`(1)
\\ Given any  $e$ such that 
$ e\in\interp{\tau}$ 
\\\> $ \gamma^L[e/x]\in\interp{\Gamma^L, x:\tau}$
\`(2)
\\By I.H. on $\ee'$
\\\> $\trace\vDash\varphi \gamma^L[e/x]\gamma$
\`(3)
\\By definitions
\\\>$\trace\vDash (\forall x{:}\tau.\varphi) \gamma^L\gamma$
\end{tabbing}
\end{description}
\end{proof}

\clearpage
\section{Proof Sketch of State Integrity for Memoir}
\label{app:memoir}

We prove the correctness of a TPM based state continuity mechanism
that closely follows Memoir~\cite{memoir}.

\newcommand{\strng}[1]{\mathtt{``#1"}}

\begin{figure*}
\(
\begin{array}{@{}l}
\linestart
\runmodule=\newl
~~\stlet{snapshot}\newl
~~\lambda (state, summary, \skey).\newl
~~~~~~\cbnd{enc\_state}{\cact(\action{encrypt}(\skey,service\_state))}\newl
~~~~~~\cbnd{auth}{\cact(\action{mac}~(\skey, (enc\_state, freshness\_tag))}\newl
~~~~~~\cret(enc\_state, freshness\_tag, auth)\newl
~~\newl
~~
~~\stlet{check\_snapshot} \newl
~~\lambda ((enc\_state, freshness\_tag, auth), request, history, \skey).\newl
~~~~~~{\cact(\action{verify\_mac}~(\skey, (enc\_state, freshness\_tag), auth)};\newl
~~~~~~\cbnd{freshness\_tag'}{\cact(\action{hash}~(freshness\_tag||request))}\newl
~~~~~~\cif(freshness\_tag = history \lor freshness\_tag' = history,\cact(\action{dec}~(\skey,enc\_state)), \cact(abort()))\newl
~~\newl
~~\stlet{initialize} \newl
~~\lambda (service, \Nloc). \newl
~~~~~~\cact(\action{extend\_pcr} (\llpcr, code\_hash(service))) ;\newl
~~~~~~\cact(\action{verify\_pcr} (\llpcr, hash\_chain (-1, code\_hash (runmodule), code\_hash(service)))) ;\newl
~~~~~~\cbnd{\skey}{\cact(\action{gen\_symkey}())} \newl
~~~~~~\stlet{history\_summary}0 \newl
~~~~~~\cact(\action{setNVRAMlocPerms} (\Nloc, \llpcr)) ;\newl
~~~~~~\cact(\action{NVRAMwrite} (\Nloc, (history\_summary, \skey)) ;\newl
~~~~~~\cact((\action{extend\_pcr}(\llpcr, 0))  ;\newl
~~~~~~\cbnd{service\_state}{(\mi{service}~\mi{ExtendPCR}~\mi{ResetPCR}~\cdots)~\mathtt{INIT}} \newl
~~~~~~\cact(\cact(\action{service\_init}(skey, service, service\_state, \Nloc))) ;\newl
~~~~~~\cbnd{snap}{snapshot(service\_state, history\_summary, \skey)} \newl
~~~~~~\cret ((), snap)\newl
~~\newl
~~\stlet{execute} \newl
~~\lambda (service, \Nloc, snap, req). \newl
~~~~~~\cact(\action{extend\_pcr} (\llpcr, code\_hash(service))) ;\newl
~~~~~~\cbnd{(\skey, history\_summary)}{\cact(\action{NVRAMread}~\Nloc)} \newl
~~~~~~\cbnd{service\_state}{check\_snapshot(snap, request, history\_summary, \skey)}\newl
~~~~~~\cbnd{new\_summary}{\cact(\action{hash}~(history\_summary || req))} \newl
~~~~~~\cact(\action{NVRAMwrite} (\Nloc, (new\_summary, \skey)) ;\newl
~~~~~~\cact(\action{extend\_pcr}(\llpcr, 0))  ;\newl
~~~~~~\cact(\action{service\_try}(skey, service, service\_state, \Nloc)) ;\newl
~~~~~~\cbnd{(new\_state, resp)}{(\mi{service}~\mi{ExtendPCR}~\mi{ResetPCR}~\cdots)~(\mathtt{EXEC}(service\_state, req))} \newl
~~~~~~\cbnd{snap}{snapshot(service\_state, history\_summary, \skey)} \newl
~~~~~~\cact(\cact(\action{service\_invoke}(skey, service, service\_state, new\_state, \Nloc))) ;\newl
~~~~~~\cret (resp, snap)\newl
~~\newl
~~\lambda (service, \Nloc, call).\newl
~~~~~~~~(resp, snap) \leftarrow (\mathtt{case}~call~\mathtt{of}\newl
~~~~~~~~~~~~~~~\mathtt{INIT} \Rightarrow initialize (service, \Nloc)\newl
~~~~~~~~~~~~~|~\mathtt{EXEC}(snap, req) \Rightarrow execute (service, \Nloc, snap, req))\newl
~~~~~~~~\cact(\action{send}(response, snap));\newl
~~~~~~~~\cact(\action{ll\_exit}())
\end{array}
\)
\caption{$\mathtt{runmodule}$: A model of Memoir's state isolation mechanism}
\label{fig:memoir:runmodule}
\end{figure*}

\subsection*{Terms, Actions and Predicates}
We first describe here the terms, actions and predicates that model the TPM functionality, cryptography
and communication.

\paragraph{TPM functionality.} The TPM is modeled by the following actions.
The actions $\action{reset\_pcr}(p)$ and $\action{extend\_pcr}(p, h)$,
respectively resets the state of the PCR $p$ to some default value and extends
the value of $p$ with the value $h$.  The action $\action{verify\_pcr}(p, h)$
checks is the state of PCR $p$ is $h$, otherwise aborts.  The action
$\action{setNVRAMlocPerms}(\Nloc, p)$ ties the permissions for NVRAM location
$\Nloc$ to the current contents of the PCR $p$.  The actions
$\action{NVRAMwrite}(\Nloc, m)$ and $\action{NVRAMread}(\Nloc)$ respectively
write the message $m$ and read from the NVRAM.  The action
$\action{ll\_enter}(e)$ starts a new late launch session with computation $e$
called on some arguments. A late launch session is modeled by a new thread that
runs $e$ with no other thread running in parallel. The action
$\action{ll\_exit}()$ exits from a late launch session.

\paragraph{Cryptography.}

Symmetric encryption is modeled by the actions $\action{encrypt}(k, m)$ and
$\action{decrypt}(k, c)$. Message authentication codes are modeled by
$\action{mac}(k, m)$ and $\action{verify\_mac}(k, m, m')$.  Hash functions are
modeled by the action $\action{hash}(m)$. A message $m$ encrypted by a key $k$
is denoted by the term $ENC_k(m)$. Similarly, a MAC of a message $m$ with key
$k$ is denotedy by $\mathit{MAC}_k(m)$. A hash is represented by the term $hash(m)$. The
special term $code\_hash(c)$ refers to the textual reification of the
computation $c$. The term $hash\_chain(m_1, m_2, \cdots, m_k)$ is syntactic
sugar for the iterated hash $hash(hash(hash(m_1) || m_2 \cdots || m_k)\cdots)$.
Here, the term $m_1 || m_2$ represents the concatenation of messages.

\paragraph{Communication.}
Communication is modeled by the $\action{send}(m)$
$\action{receive}()$ action. By default, messages are not authenticated, so we drop
the send and receive respectively do not have a recipient and sender argument.

\paragraph{Flags.}
To state the overall state continuity property, we require three flags
($\action{service\_init}$, $\action{service\_try}$ and
$\action{service\_invoke}$) which simply record the value of variables at a
particular point.

Figure~\ref{fig:memoir:runmodule} contains our model for the Memoir system. The
suspended computation $\runmodule$ is expected to run in a late launch session
that models both the initialization and execution phase of Memoir. Lines 14-26
model the initialization phase and lines 28-40 model the execution phase. We
only describe the initialization phase here and the execution phase proceeds
similarly.  During $initialize$ the code for $service$ is hashed into PCR 17.
Subsequently, it is checked whether PCR 17 contains a hash chain starting with
-1 and followed by a hash of the textual reification of $runmodule$. This
ensures that a late launch session with $runmodule$ was initiated.
A symmetric key is then generated that acts as the encryption and
MAC key for subsequent sessions of Memoir. Then, the permissions on $\Nloc$,
the NVRAM location allocated for the session is tied to the current value of
PCR17.  An initial history summary and the symmetric key are then written to
the NVRAM location, and then the value of PCR 17 is extended with a dummy value
so that $\Nloc$ cannot be read unless a new $runmodule$ session is started.
The service is then initiated to generate a state of the service that is then
encrypted and MACed along with the history summary and sent to the adversary
for persistent storage.

\paragraph{Predicates.} Each action has a corresponding action predicate. All
action predicates are listed in Figure~\ref{fig:memoir1:preds}. Every action
predicate has an additional argument that corresponds to the thread that
performed that action. The one exception is the action predicate
$\pred{LLEnter}$, for which the first argument $j$ is the thread corresponding
to the late launch session.

Apart from action predicates, we have predicates which capture state. The
predicate $\valpcr(p, h)@u$ states that at time $u$, the value of the PCR $p$
is the hash $h$. The predicate $\pred{NVPerms}(\Nloc, p, h)@u$ states that the
permissions on the NVRAM location $\Nloc$ are set to the value of the PCR $p$
being the hash $h$. The predicate $\valNV(\Nloc, m)@u$ states that the NVRAM
location $\Nloc$ contains the value $m$ at time $u$.

We have some predicates about the structure of terms.  The predicate
$\pred{hash\_prefix}(h_1, h_2)$ states that the hash chain $h_2$ can be
obtained by extending $h_1$ with additional hashes.

\begin{figure}[h]
$$\small \begin{array}{@{}l|l}
      \textbf{Action} & \textbf{Predicate} \\ \hline
      \action{reset\_pcr}(p)&\pred{ResetPCR}(i, p)\\
      \action{extend\_pcr}(p, h)&\pred{ExtendPCR}(i, p, h)\\
      \action{verify\_pcr}(p, h)&\pred{VerifyPCR}(i, p, h)\\
      \action{setNVRAMlocPerms}(\Nloc, p)&\pred{SetNVPerms}(i, \Nloc, p)\\
      \action{NVRAMwrite}(\Nloc, m)&\pred{NVWrite}(i, \Nloc, m)\\
      \action{NVRAMread}(\Nloc)&\pred{NVRead}(i, \Nloc, m)\\
      \action{ll\_enter}(e)&\pred{LLEnter}(j, e)\\
      \action{ll\_exit}()&\pred{LLExit}(i)\\
      \action{encrypt}(k, m)&\pred{Encrypt}(i, k, m)\\
      \action{decrypt}(k, m)&\pred{Decrypt}(i, k, m)\\
      \action{mac}(k, m)&\pred{MAC}(i, k, m)\\
      \action{verify\_mac}(k, m, m')&\pred{verifyMAC}(i, k, m, m')\\
      \action{hash}(k, m)&\pred{Hash}(i, k, m)\\
      \action{service\_init}(skey, &\pred{service\_init}(i, skey, \\
      ~~~~~~~service, state, Nloc)&   ~~~~~~~~     service, state, Nloc) \\
      \action{service\_try}(skey, &\pred{service\_try}(i, skey, \\
      ~~~~~~~service, state, Nloc)&   ~~~~~~~~     service, state, Nloc) \\
      \action{service\_invoke}(skey, &\pred{service\_invoke}(i, skey, \\
      ~~~~~~~service, state, state', Nloc)&   ~~~~~~~~     service, state, state', Nloc) \\
    \end{array}
$$
\caption{Action Predicates}
\label{fig:memoir1:preds}
\end{figure}

\subsection*{Abbreviations and Definitions}
Figure~\ref{fig:memoir1:abbs} summarizes  the abbreviations we use.
\begin{figure}[h]
\framebox{Abbreviations}
\[\small \begin{array}{@{}ccc}
(\varphi \conj \psi) \at u & = & (\varphi \at u) \conj (\psi \at u) \\
(\varphi \disj \psi) \at u & = & (\varphi \at u) \disj (\psi \at u) \\
(\varphi \imp \psi) \at u & = & (\varphi \at u) \imp (\psi \at u) \\
(\neg \varphi) \at u & = & \neg (\varphi \at u) \\
\top \at u & = & \top \\
\bot \at u & = & \bot \\
(\forall x. \varphi) \at u & = & \forall x. ~(\varphi \at u) \\
(\exists x. \varphi) \at u & = & \exists x. ~(\varphi \at u) \\
(\varphi \at u') \at u & = & \varphi \at u' \\
\\
\varphi \jon (u_1, u_2) & = & \forall u. ~(u_1 < u < u_2) \imp (\varphi
\at u) \\
\varphi \jon (u_1, u_2] & = & \forall u. ~(u_1 < u \leq u_2) \imp (\varphi
\at u) \\
\varphi \jon [u_1, u_2) & = & \forall u. ~(u_1 \leq u < u_2) \imp (\varphi
\at u) \\
\varphi \jon [u_1, u_2] & = & \forall u. ~(u_1 \leq u \leq u_2) \imp (\varphi
\at u) \\
\end{array}\]
\caption{Abbreviations}
\label{fig:memoir1:abbs}
\end{figure}

\begin{figure}[h]

\framebox{Definitions}
\[\small \begin{array}{@{}l}

\pred{LL}(u_1, u_2, e, j) =  \pred{LLenter}(e, j)@u_1 \land \neg \pred{LLexit}(j)\jon[u_1, u_2)\\
~~~~~~~~~~~~~~~~~~~~~~~~~~~~~~~~~~~~~~~~~~~~~~~~~~~~~~~~~~~~~ \land \pred{LLexit}(j)@u_2\\
\pred{InLLSess}(u, e, j) = \exists u_1. (u_1 \leq u) \land \pred{LLenter}(e, j)@u_1\\
~~~~~~~~~~~~~~~~~~~~~~~~~~~~~~~~~~~~~~~~~~~~~~~~~~~~~~~~~~~~~ \land \neg \pred{LLexit}(j)\jon[u_1, u)\\
\pred{InSomeLLSess}(u, e) = \exists j. \pred{InLLSess}(u, e, j)\\
\pred{LLThread}(j, e) = \exists u. \pred{LLenter}(e, j)@u\\
\pred{PCRPrefix}(p, \shash) = \exists h.~\valpcr(\llpcr, h) \land \hprefix(h, \shash) \\
\pred{ExitsPCRProtected}(i, u, \shash) =  \pred{LLexit}(i)@u \imp \\
~~~~~~~~~~\neg\pred{PCRPrefix}(\llpcr, \shash)@u\\
\pred{LLChain}(h, e) = \pred{hash\_prefix}(\hchain(-1, \chash(e)), h)\\\
\end{array}\]

\framebox{Axioms}
\[\small \begin{array}{@{}ll}
\mbox{(LLExit)}&
\forall \shash, u_2, e\\
&\pred{LLChain}(\shash, e) \imp\\
&~~~~~ \valpcr(\llpcr, \shash)@u_2\\
&~~\land \neg \pred{InSomeLLSess}(u_2, e)\imp\\
&~~~~~~\exists j, u_3.\\
&~~~~~~~~~~~~~\pred{LLThread}(j, e)\\
&~~~~~~~~~~~~~\land \pred{LLexit}(j)@u_3\\
&~~~~~~~~~~~~~\land \valpcr(\llpcr, h)@u_3\\
&~~~~~~~~~~~~~\land \hprefix(h, \shash)\\
&~~~~~~~~~~~~~\land \forall u \in (u_1, u_3). \\
&~~~~~~~~~~~~~~~~~~~~\valpcr(\llpcr, \shash)@u \imp \pred{InSomeLLSess}(u, e)\\

\mbox{(PCRInit)}&
\valpcr(p, 0)@-\infty\\

\mbox{(LLHonest)}& \pred{LLEnter}(i, e)@u \imp \exists e'.~\pred{start}(-\infty, e~e', i)\\

&\mbox{Th next two axiom schemas holds for any action $a(i, t)$}\\
\mbox{(LLAct1)} &a(i, t)@u \land \pred{InSomeLLSess}(u, e) \imp \pred{InLLSess}(u, e, i)\\
\mbox{(LLAct2)} &a(i, t)@u \land \pred{LLThread}(i, e) \imp \pred{InLLSess}(u, e, i)\\

\end{array}\]
\caption{Definitions and Model-specific axioms about late launch}
\label{fig:memoir1:axioms}
\end{figure}

\subsection{Proof Overview}

The proof proceeds in four stages. Each step employs the rely-guarantee
technique in the style of~\cite{garg10:ls2} to prove a particular invariant
about executions of the system.  At a high level, the four stages of the proof
are as follows:

\begin{enumerate}

\item\textbf{PCR Protection:} We show that the value of $\llpcr$ contains a
  certain measurement $h$ only during late launch sessions running
  a session of Memoir.

\item\textbf{NVRAM Protection:} We show that after the permissions on a
  location in the NVRAM has been set to $h$, then the permissions on that
  location are never changed.

\item\textbf{Key Secrecy:} We show that if the key corresponding to the service
  is available to a thread, then it must have either generated it or read it from
  the NVRAM.

\item\textbf{History Summary-State Correspondence:} We show that if on any two
  executions of the Memoir, if the history summaries are equal then the states
  must also be equal.

\end{enumerate}

Finally, from these, we prove the overall state continuity property for Memoir.

Next, we sketch the proofs of each of the above stage.
The proofs require axioms about the above predicates, which we state
along with the stage the axioms are first required.

\subsubsection{PCR Protection.}


In Figure \ref{fig:memoir1:axioms}, we list the definitions and model specific
axioms we need. The predicate $\pred{LL}(u_1, u_2, e, j)$ states that thread
$j$ runs a late launch session for $e$ between $u_1$ and $u_2$. The
predicate $\pred{InLLSess}(u, e, j)$ states at time $u$, thread $j$ runs
a late launch session for $e$. The predicate $\pred{InSomeLLSess}(u, e)$ states
that at time $u$, some thread is running a late launch session for $e$.
$\pred{LLThread}(j, e)$ states that $j$ is a thread that runs
a late launch session for $e$. $\pred{PCRPrefix}(p, \shash)$ states that
the value contained in $p$ is a hash prefix of $\shash$. $\pred{ExitsPCRProtected}(i, u, \shash)$
states that whenever a late launch thread exists, the state of PCR 17 is not
a prefix of $\shash$. $\pred{LLChain}(h,e)$ states that $h$ is a hash chain,
which if contained in PCR 17, is evidence of a late launch session for $e$.

Axiom (LLExit) states that whenever outside a late launch session,
the value of PCR 17 is found to be a late launch chain $\shash$,
we can conclude, that some late launch session exited with the
state of PCR 17 being a prefix of $\shash$. (PCRInit) states
that the value of any PCR begins at 0. (LLEnter) states
that late launch threads for a computation $e$ exclusively run $e$
with some arguments $e'$. (LLAct1) and (LLAct2) are axiom schemas
that essentially state that no other threads are active during
late launch sessions.

Consider an arbitrary service $s$.
Let $\shash = \hchain(-1, \\\chash(runmodule), \chash(s))$.
We show that if the value of $\llpcr$ at time $u$ is $\shash$, then it must be the case
that we are in a late launch session at time $u$. Formally, we show that,

\begin{equation}
 \forall u. \valpcr(\llpcr, \shash)@u \imp \pred{InSomeLLSess}(u, runmodule)
\label{eq:memoir:pcrinv}
\end{equation}

To prove an invariant $\forall u\mbox{$>$}u_i.~\varphi(u)$, using rely guarantee
reasoning, it is sufficient to show for a choice of $\psi(i, u)$ and $\iota(i)$
that

\begin{enumerate}
\item[(1)]
$\varphi(u_i)$
\item[(2)]
$\begin{array}{@{}l}
\forall i,u.~(\iota(i) \conj \forall u' < u. ~\varphi(u')) \imp \psi(u,i)
\end{array}$
\item[(3)]
$\begin{array}{@{}l}
(\varphi(u_1) \conj \neg \varphi(u_2) \conj (u_1 < u_2)) \imp \\
~~~~\exists i, u_3. ~(u_1 < u_3 \leq u_2) \conj \iota(i) \conj
\neg \psi(u_3,i) \conj \\
~~~~~~~~~~~~~~~~~\forall u_4 \in (u_1, u_3).~\varphi(u_4)
\end{array}$
\end{enumerate}

We choose $\varphi$,$\psi$ and $\iota$ as below:
\[\small \begin{array}{@{}l}
\varphi(u) = \valpcr(\llpcr, \shash)@u \imp \pred{InSomeLLSess}(u, runmodule)\\
\psi(i, u) =  \pred{ExitsPCRProtected}(i, u, \shash)\\
\iota(i) = \pred{LLThread}(i, runmodule)\\
\end{array}
\]

Condition (1) follows (PCRInit) and $\neg \hprefix(0, \shash)$. Condition (3) follows
directly from axiom (LLExit). To prove conditions (2) above, expanding out the definitions
of $\varphi$, $\iota$ and $\psi$ above, we need to show that

\begin{equation}
\begin{array}{@{}l}
\forall i,u.~(\pred{LLThread}(i, runmodule) \\
~~~~~~~~~~~~\conj \forall u' < u. ~(\valpcr(\llpcr, \shash)@u\\
~~~~~~~~~~~~~~~~~~~~~~~~~~~~~\imp \pred{InSomeLLSess}(u, runmodule)(u')) \\
~~~~~~~~~~~~~~~~~~\imp \pred{ExitsPCRProtected}(u,i)\\
\end{array}
\end{equation}

This can be rewritten as
 \begin{equation}
\begin{array}{@{}l}
\forall i.~(\pred{LLThread}(i, runmodule) \\
~~~~~~~~~~~~\conj\forall u.~(\forall u' < u. ~(\valpcr(\llpcr, \shash)@u'\\
~~~~~~~~~~~~~~~~~~~~~~~~~~~~~\imp \pred{InSomeLLSess}(u', runmodule)) \\
~~~~~~~~~~~~~~~~~~\imp \pred{ExitsPCRProtected}(u,i))\\
\end{array}
\label{eq:memoir:pcrinv:rg3}
\end{equation}

Choose an arbitrary thread $i$ such that $\pred{LLThread}(i, runmodule)$.
Therefore, we have by (LLHonest) that for some $e'$, \\$\pred{start}(-\infty,
runmodule~e', i)$. To use rule \rulename{Honest} to show
(\ref{eq:memoir:pcrinv:rg3}), we need to show that $runmodule$
satisfies the following invariant.

\begin{equation}
\begin{array}{@{}l}
\vdash runmodule : \\
~~~~~~~~~~~~~(\forall u_b < u' < u_e.(\valpcr(\llpcr, \shash)@u'\\
~~~~~~~~~~~~~~~~~~~~~~~~~~~~~\imp \pred{InSomeLLSess}(u', runmodule)) \\
~~~~~~~~~~~~~~~~~~\imp \pred{ExitsPCRProtected}(u,i))\\
\end{array}
\label{eq:memoir:pcrinv:rmtype}
\end{equation}

The key step in typing $runmodule$ is to type the execution of $s$ supplied by
the adversary using the \rulename{Confine} rule. Essentially, we need to show
that the service cannot exit with the $\llpcr$ containing a prefix of
$\shash$. The service is confined to the actions provided by the TPM and we can
show that each of them has the following computational type $\tcmp(u_b, u_e, i. (x.\varphi_c, \varphi_c))$, where $\varphi_c$ is:

\begin{equation}
\begin{array}{@{}l}
\varphi_c = \neg\pred{PCRPrefix}(\llpcr, \shash)@u_b \imp \\
~~~~~~~~~~~~~~~~ \forall u \in [u_b, u_e].~(\pred{InLLSess}(u, runmodule, i)  \\
~~~~~~~~~~~~~~~~~~~~~~~~~~~~~~\imp \neg\pred{PCRPrefix}(\llpcr,\shash)@u)
 \end{array}
\end{equation}

Therefore, we can give $s$ the same type. We have now shown that by the end of
service, the late launch session has either terminated or the value of $\llpcr$
is not a prefix of $\shash$.
Using (LLAct2), we can now show
(\ref{eq:memoir:pcrinv:rmtype}).

\subsubsection{NVRAM Protection.}


\begin{figure}


\framebox{Axioms}
\[\small \begin{array}{@{}ll}

(SetPerms)&
\pred{SetNVPerms}(i, Nloc, p)@u \land \valpcr(p, h)@u\\
&~~~~~~~~~~~\imp \pred{NVPerms}(Nloc, p, h)@u\\
(GetPerms)&
(\pred{SetNVPerms}(i, Nloc, p')@u \lor\\
&~\pred{NVRead}(i, Nloc, p')@u \lor \\
&~\pred{NVWrite}(i, Nloc, p')@u) \\
&~~~~~~~~~~~~~~~ \land \pred{NVPerms}(Nloc, p, h)@u \imp \valpcr(p, h)@u\\
(NVPerms)&
\pred{NVPerms}(Nloc, p, h)@u_1 \land \neg \pred{NVPerms}(Nloc, p, h)@u_2\\
&~~~~\land (u_1 < u_2) \imp\\
&~~~~~~\exists u_3,j.p',h'.~(u_1< u_3 \leq u_2) \land \valpcr(p',h')@u_3\\
&~~~~~~~~~~~~~~~~~~~~~~~~~~~~~~~~~~~\land\pred{SetNVPerms}(j, Nloc, p') @u_3 \\
&~~~~~~~~~~~~~~~~~~~~~~~~~~~~~~~~~~~\land (p\not=p' \lor h\not=h')\\
&~~~~~~~~~\land \forall u_4 \in (u_1, u_3).~ \pred{NVPerms}(Nloc, p, h)@u_4

\end{array}\]
\caption{Model-specific axioms about NVRAM}
\label{fig:memoir2:axioms}
\end{figure}

Figure~\ref{fig:memoir2:axioms} contains axioms governing the behavior of NVRAM. (SetPerms)
states that on the successful execution of setting permissions on NVRAM at time $u$, the permissions
are correct at $u$. (GetPerms) states that when the permissions on a particular NVRAM location
is tied to the PCR $p$ being $h$, then accessing that NVRAM location implies that
the value of PCR $p$ is $h$. (NVRAMPerms) states that if the permissions on a NVRAM location
changes, then it must have been changed via a $\action{setNVRAMlocPerms}$ action.

We wish to show that the permissions on the NVRAM are always tied to the value of $\llpcr$ being
$\shash$:

\begin{equation}
\begin{array}{@{}l}
(\pred{SetNVPerms}(i, Nloc, \llpcr) \land \valpcr(\llpcr, \shash))@u_i\\
~~~\imp \forall (u>u_i).~\pred{NVPerms}(Nloc, \llpcr, \shash)@u
\end{array}
\label{eq:memoir:nvraminv}
\end{equation}

Assume that for some time point $u_i$.
\begin{equation}
\pred{SetNVPerms}(i, Nloc, \llpcr) \land \valpcr(\llpcr, \shash))@u_i
\label{eq:memoir2:ass1}
\end{equation}

We now need to show that
\[
\forall (u>u_i) \imp \pred{NVPerms}(Nloc, \llpcr, \shash)@u
\]
Again, we prove this invariant by rely guarantee reasoning, where we choose $\varphi$, $\psi$ and $\iota$ to be the following.

\[\small \begin{array}{@{}l}
\varphi(u) = \pred{NVPerms}(\Nloc, \llpcr, \shash)@u\\
\psi(u, i) = (\pred{SetNVPerms}(i, Nloc, p)\\
~~~~~~~~~~~~~~~~~ \imp (p = \llpcr) \land \valpcr(\llpcr, \shash))@u\\
\iota(i) = \pred{LLThread}(i, runmodule)
\end{array}\]

Expanding condition (1), we need to show the following  $$\pred{NVPerms}(\Nloc, \llpcr, \shash)@u_i$$
This holds by Axiom (SetPerms) and \ref{eq:memoir2:ass1}.

Expanding condition (2), choose $i$ such that $\pred{LLThread}(i, \runmodule)$.
We need to show that
 $\forall u>u_i.(\forall u' \in (u_i,u).~\varphi(u')) \imp \psi(i, u)$. To
 use \rulename{Honest}, we need to show that $runmodule$ satisfies the following invariant.

\begin{equation}
\begin{array}{@{}l}
\vdash runmodule : \\
~~~~~~~~~~~~~~~~~~~~~~\forall u\in(u_b, u_e]\forall u'\in [u_i,u).\\
~~~~~~~~~~~~~~~~~~~~~~~~~\pred{NVPerms}(\Nloc, \llpcr, \shash)@u'\imp\\
~~~~~~~~~~~~~~~~~~~~~~~~~\pred{SetNVPerms}(i, Nloc, p)@u\imp\\
~~~~~~~~~~~~~~~~~~~~~~~~~(p = \llpcr) \land \valpcr(\llpcr, \shash)@u\\
\end{array}
\label{eq:memoir2:rmtype}
\end{equation}

Again, the key step in typing $runmodule$ is to type the execution of $s$ supplied by
the adversary using the \rulename{Confine} rule. Essentially, we show that the
service is not allowed to set the permissions of $\Nloc$ at all. Each
action $f$ provided by the TPM interface can be confined to the type $\tcmp(u_b, u_e, i.(x.\varphi_c, \varphi_c))$, where $\varphi_c$ is:

\begin{equation}
\begin{array}{@{}l}
f : \tcmp(u_b,u_e,i.~\neg\pred{PCRPrefix}(\llpcr, \shash)@u_b \imp \\
~~~~~~~~~~~~~~~~~~~ \forall u \in [u_b, u_e].~(\pred{InLLSess}(u, \runmodule, i)  \\
~~~~~~~~~~~~~~~~~~~~~~~~~~~~~~~~~~~\imp\forall p.~ \neg\pred{SetNVRAMPerms}(i,\Nloc, p)@u)
 \end{array}
\end{equation}

Condition (3) follows from
(NVPerms), (GetPerms) and (\ref{eq:memoir:pcrinv}).

In particular, we can show from  \ref{eq:memoir:nvraminv}
and (GetPerms):

\begin{equation}
\begin{array}{@{}l}
(\pred{SetNVPerms}(i, Nloc, \llpcr) \land \valpcr(\llpcr, \shash))@u_i\\
~~~\imp \forall (u>u_i).~ \pred{ReadNV}(I, Nloc)@u\\
~~~~~~~~~~~~~~~~~~~~~~\imp \valpcr(\llpcr, \shash)@u\\

\end{array}
\end{equation}
And by \ref{eq:memoir:pcrinv}

\begin{equation}
\begin{array}{@{}l}
(\pred{SetNVPerms}(i, Nloc, \llpcr) \land \valpcr(\llpcr, \shash))@u_i\\
~~~ \imp \forall (u>u_i) \imp \pred{ReadNV}(I, Nloc)@u\\
~~~~~~~~~~~~~~~~~~~~~~\imp \pred{InSomeLLSess}(u, runmodule)
\end{array}
\end{equation}

Therefore, by (LLAct), we have that

\begin{equation}
\begin{array}{@{}l}
(\pred{SetNVPerms}(i, Nloc, \llpcr) \land \valpcr(\llpcr, \shash))@u_i\\
~~~~\imp \forall I,(u>u_i) \imp \pred{ReadNV}(I, Nloc)@u\\
~~~~~~~~~~~~~~~~~~~~~~\imp \pred{InLLSess}(u, runmodule, I)\\
\end{array}
\label{eq:memoir:nvramread}
\end{equation}

This means that whenever, a thread $i$ reads from the $\Nloc$ at time $u$,
it must be the case that $i$ is in a late launch session running $\runmodule$ at time $u$.

\subsubsection{Key Secrecy.}

Figure~\ref{fig:memoir3:axioms} lists the definitions and axioms pertaining to key secrecy.
The definition $\pred{NVContains}(\Nloc, s)$ states that the NVRAM location $\Nloc$ contains
the secret $s$. $\pred{Private}(s, \Nloc, u)$ states that the secret $s$ hash not been sent
out on the network and the only NVRAM location it has been stored in is $\Nloc$. $\pred{KeepsPrivate}(i, s, \Nloc)$ states that whenever thread $i$ sends a message,
it does not contain the secret $s$. Additionally, it only stores $s$ in $\Nloc$.
$\pred{NewInLL}(s, e)$ states that $s$ was generated in a late launch session of $e$.

The axiom (Shared) states that if a secret is private at time $u_1$ and not private
at $u_2$, then it must be the case, that at some point in the middle some thread
violated $\pred{KeepsPrivate}(i, s, \Nloc)$. (POS) states that if some thread
posses a secret $s$ that is private to $\Nloc$, then it must have been either
generated in that thread or read from $\Nloc$. (PrivateInit) states that
a secret is private as soon as it is generated. (New3) is an axiom about
non-collision of nonce values. (Init) is a logical assumption we make that
states that $\action{service\_init}$ can only be called by honest threads
running $runmodule$.

\begin{figure}

\framebox{Definitions}
\[\small \begin{array}{@{}l}
\pred{NVContains}(Nloc, s) = \exists m. \pred{Contains}(m, s) \land \pred{\valNV}(m, s)\\
\pred{Private}(s, Nloc, u) = \forall u' < u. (Send(i, m)@u \imp \neg \pred{Contains}(m, s) \\
~~~~~~~~~~~~~~~~~~~~~~~~~~~\land \forall Nloc'. (\pred{NVContains}(Nloc, s)@u' \imp (Nloc' = Nloc))\\
\pred{KeepsPrivate}(i, s, Nloc) = Send(i, m) \imp \neg \pred{Contains}(m, s) \\
~~~~~~~~~~~~~~~~~~~~~~~~~~~~~~~~~~~~~~~\land \forall Nloc'. (\pred{WriteNV}(Nloc', m) \land \pred{Contains}(m,s)\\
~~~~~~~~~~~~~~~~~~~~~~~~~~~~~~~~~~~~~~~~~~~~~~~~~~~~~~~~~~~~~~~~ \imp Nloc = Nloc')\\

\pred{NewInLL}(s, e) = \pred{New}(i, s)@u \imp \pred{InLLSess}(u, e, i)\\
\end{array}\]

\framebox{Axioms}
\[\small \begin{array}{@{}ll}

\mbox{(Shared)}&
\pred{LLChain}(h, e) \land\\
&\pred{NewInLL}(s, e) \land\\
&\forall u > u_i. \pred{NVPerms}(Nloc, \llpcr, h) \imp\\
&\forall u_1, u_2 \in (u_i, \infty]\\
&\pred{Private}(s, Nloc, u_1) \land \neg \pred{Private}(s, Nloc, u_2) \imp\\
&~~~~~\exists i,u_3. (u_1 < u_3 <=u_2)\\
&~~~~~~~~~~(\pred{LLThread}(i, e)\\
&~~~~~~~~~~~\neg \pred{KeepsPrivate}(i, s, Nloc)@u_3) \land\\
&~~~~~~~~~~~\forall u \in(u_1, u_3) \pred{Private}(s, K, u))\\

\mbox{(POS)}&
(\pred{Private}(s, Nloc, u) \land \pred{Has}(i, s)@u \imp\\
&~~(\exists u'. (u' < u) \land \pred{New}(i, s) @u') \lor\\
&~~(\exists u'. (u' < u) \land \pred{ReadNV}(i, Nloc, m) @u' \land \pred{Contains}(m, s))\\

\mbox{(PrivateInit)}&
\pred{New}(s)@u \imp \pred{Private}(s, Nloc, u)\\

\mbox{(New3)}&
\pred{New}(i, n)@u \land \pred{New}(i', n)@u' \imp (i = i') \land (u = u')\\

&\text{Assumption about about $\sinit$}\\
\mbox{(Init)}&
\sinit(i, \skey, service, state, Nloc)@u_i \imp\\
&~~~~\exists u. (u<u_i) \land \pred{Start}(i, \runmodule~\service~\Nloc~\mathtt{INIT})@u\\

\end{array}\]
\caption{Definitions and Model-specific axioms about Secrecy}
\label{fig:memoir3:axioms}
\end{figure}

We now show that after initialization, if any thread $j$ has the key corresponding to
the service, then that thread must have read it from $\Nloc$ or that the thread $j$
is the initialization thread itself.

\begin{equation}
\begin{array}{@{}l}
\forall i, u_i, state, \skey, Nloc\\
~~~~\sinit (i, skey, service, state, Nloc)@u_i \imp\\
~~~~~~~\forall j, u>u_i . \pred{Has}(j, skey)@u \imp (j = i) \lor\\
~~~~~~~~~~\exists u',m.( u_i< u' < u) \land \pred{ReadNV}(j, Nloc, m)@u'\\
~~~~~~~~~~\land \pred{Contains} (m, skey)\\
\end{array}
\label{eq:memoir:keysecrecy}
\end{equation}

\noindent Fix $I_i$, $u_i$, $skey$, $service$, $Nloc$.
\\Assume $\sinit(I_i, skey, service, state, Nloc)@u_i$

We prove \ref{eq:memoir:keysecrecy} by another rely-guarantee proof, very similar to the proof of Kerberos in ~\cite{garg10:ls2}. We choose the following $\varphi$, $\psi$ and $\iota$.

\[\small
\begin{array}{@{}l}
\varphi(u) = \pred{Private}(skey, Nloc, u)\\
\psi(i, u) = \pred{KeepsPrivate}(i, skey, Nloc)@u\\
\iota(i) = \pred{LLThread}(i, runmodule)\\
\end{array}\]

To show condition (1): $\varphi(u_i)$ we can first show using (Init), (HON) and reasoning about ordering and atomicity of events that:

\begin{equation}
\begin{array}{@{}l}
\exists u_1, u_2, u_3, u_4. (u_1<u_2<u_3<u_4<u_i)\\
~~~~~~~~\pred{VerifyPCR}(\llpcr, \shash)@u_1\\
~~~~~~~~\pred{New}(skey)@u_2 \land\\
~~~~~~~~\pred{SetNVPerms}(I_i, Nloc, \llpcr)@u_3 \land\\
~~~~~~~~\pred{NVWrite}(I_i, Nloc, (\skey, h))@u_4 \land\\
~~~~~~~~\neg \pred{SetNVPerms}(I_i, Nloc, p) \circ (u_3, u_i] \land\\
~~~~~~~~\neg (\pred{Extend}(I_i, \llpcr, t) \lor \pred{Reset}(I_i, \llpcr)) \circ (u_1, u_3]\\
~~~~~~~~\neg \pred{Send}(I_i, m) \circ (u_1, u_i]
\end{array}
\label{eq:memoir3:d1}
\end{equation}

Now we can show using (Shared), (PrivateInit), (LLAct) and (\ref{eq:memoir3:d1}) that $\pred{Private}(skey, Nloc, u_i)$ holds. Essentially, at $u_i$, $s$ is still private
because, the thread $I_i$ did not leak the key, and no other thread was
running in parallel.

To prove condition (2) we again use the \rulename{Honest} rule. However, the property required
is not derived using \rulename{Confine}. The key step is to show that if the service, which
is untrusted code, is not given the key as an input, then it cannot leak the key during execution.
We do this by assuming that the original service that Memoir was initialized with had this property
and then prove that the service passed into any session of Memoir has to be equal to the
service was initialized with. This is where we require the \rulename{Eq} rule to be used.
\cut{
\begin{equation}
\begin{array}{@{}l}
\vdash
~~~~~~~~\Pi (s:\texp, l:\tptr, snap:\tmsg).~\tcmp(u_b, u_e, i.~\\
~~~~~~~~~~~~\forall u \in (u_1,u_2] \forall u' < u.\pred{Private}(skey, Nloc,u') \imp
~~~~~~~~~~~~\pred{KeepsPrivate}(i, skey, Nloc)@u)
\end{array}
\end{equation}}

The key step here is the typing of the execution of $s$
\[(\mi{s}~\mi{ExtendPCR}~\mi{ResetPCR}~\cdots)~(\mathtt{EXEC}(service\_state, req))\]
Here, we use the \rulename{Eq} rule
As we can show that $s = service$, by comparing the hash chains in PCR 17, we assign $s$ the following type:

\begin{equation}
\begin{array}{@{}l}
  (s~\mi{ExtendPCR}~\mi{ResetPCR}\cdots) : \Pi i:\tmsg.~\tcmp(
  u_b, u_e, i.~\\
  ~~~~~~~(x:\tmsg. \neg \pred{Contains}(i, s) \imp \neg \pred{Contains}(x,s),\\
  ~~~~~~~~\pred{KeepsSecret}(i, skey, Nloc)\jon[u_b, u_e])) \\
\end{array}
\end{equation}

Condition (3) Follows from (Shared), and (\ref{eq:memoir:nvramread}).

\subsubsection{State to History Summary Correspondence.}

We state without proof an invariant that the history summary has a one-to-one correspondence with the state. This is proved through an induction on the history summary.

\begin{equation}
\begin{array}{@{}l}
~~\forall i, u_i, state, \skey, \Nloc\\
~~~~~~\sinit (i, \skey, state, \Nloc, ....)@u_i \land\imp\\
~~~~~~~~~\forall h,state, state', j, j' u, u'. u>u_i \land u'>u_i \imp\\
~~~~~~~~~~~~\pred{mac}(j, \skey, (state, h))@u \land \pred{mac}(j', \skey, (state', h))@u' \imp\\
~~~~~~~~~~~~~~(state = state')\\
\end{array}
\label{eq:lemma4}
\end{equation}

\subsubsection{State Continuity}

The property we prove about Memoir is as follows:

\begin{equation}
\begin{array}{@{}l}
~~\forall u_i, \state, \newstate, \skey, i_{init}, s_{init}\\
~~~~~~\sinit (i_{init}, \skey, \service, s_{init})@u_i \imp\\
~~~~~~~~~~\forall u>u_i.~\stry(i, \skey, \state)@u \imp\\
~~~~~~~~~~~  \exists j, u' < u.~((\exists s. \sexec(j, \skey, s, \state)@u'\\
~~~~~~~~~~~~~~~~~~~~~~~~~~~~~~~~~~~~\lor \stry(j, \skey, \state)@u'  \\
~~~~~~~~~~~~~~~~~~~~~~~~~~~~~~~~~~~~\lor \sinit(j, \skey, \state)@u')\\
~~~~~~~~~~~~~~~~~~~~~~~~~~~\land (\forall j'.~ \neg \sexec(j', skey, \cdots)\jon(u',u]))
\end{array}
\end{equation}

In the above statement, we elide unnecessary arguments in the flag predicates.
This property states that for every execution attempt of the service with state
$state$ at time $u$, there exists a prior time point $u'$ such that at $u'$
either (1) service was invoked resulting in state $state$, or (2) there was an
execution attempt of the service with state $state'$ or (3) the service was
initialized with state $state$. Additionally, since  $u'$, the service has not
been invoked, which would have advanced the state of the service. This last
clause rules out any rollback attacks.  Each flag is indexed with the same
secret key $skey$ that the service was initialized with.  This key ties all the
flags in the property to to the same instance of Memoir.

\noindent
   Fix an $i$, $u_i$, $state$, $\skey$,

\noindent
   Assume $\sinit (i_{init}, \skey, \service, s_{init})@u_i$

\noindent
   For some $u > u_i$ assume that

\begin{equation}
\begin{array}{@{}l}
\stry(i, \skey, \state, \newstate)@u.
\end{array}
\label{eq:fin0}
\end{equation}

   Therefore we have $\pred{Has}(i, \skey)@u$.
   By (\ref{eq:memoir:keysecrecy}) we have that one of the two hold

\begin{equation}
\begin{array}{@{}l}
  i = i_{init} \lor\\
   \exists u'. u_i< u' < u. \pred{ReadNV}(i, \Nloc, m)@u' \land \pred{Contains} (m, \skey)
\end{array}
\end{equation}

We analyze each case:
\begin{itemize}
  \item   \textbf{Case} $i = i_{init}$:

    We have from (Init) and $\sinit(i_{init}, \skey, service, s_{init})$ that
   \[\exists u. (u<u_i) \land \pred{Start}(i, \runmodule~\service~\Nloc~\mathtt{INIT})@u\]

   With \rulename{Honest}, we can show that $\stry$ does not occur on $i$ and we have a contradiction.

\item   \textbf{Case} $\exists u'\in (u_i, u).\pred{ReadNV}(i, \Nloc, m)@u' \land \pred{Contains}(m, \skey)$:
   In this case, by (\ref{eq:memoir:nvramread})
   We have that $\pred{LLThread}(j, \runmodule)$
  Therefore, by \rulename{Honest}

\begin{equation}
\begin{array}{@{}l}
\pred{ReadNV}(i, \Nloc, (\skey, h))@u'
\end{array}
\label{eq:fin1}
\end{equation}

   By (NVRAMRead), we have that $\exists u'' < u$ such that

\begin{equation}
\begin{array}{@{}l}
   \pred{WriteNV}(j, \Nloc, (\skey, h))@u'' \land\\  \forall j''.~\neg \pred{WriteNV}(j''. \Nloc, m') \jon (u'', u']
\end{array}
\label{eq:fin2}
\end{equation}

Again, by (\ref{eq:memoir:nvramread}) and (\ref{eq:fin2}) we have that
\begin{equation}
\begin{array}{@{}l}
   \pred{LLThread}(j, \runmodule)
\end{array}
\end{equation}


And by \rulename{Honest}, as we know (\ref{eq:fin2}), we can derive that
\begin{equation}
\begin{array}{@{}l}
  \pred{mac}(j, skey, (ENC_{skey}(state', h))
\end{array}
\label{eq:fin3a}
\end{equation}

Also, from \ref{eq:fin0} and \rulename{Honest}  we know that the branch at Line 12 of $\runmodule$ executed. This gives us two cases:
\begin{itemize}
\item
      \textbf{Case} 1:
\begin{equation}
\begin{array}{@{}l}
 \pred{verifyMAC}(i, \skey, (ENC_{skey}(state), h))
\end{array}
\label{eq:fin4}
\end{equation}
This is the case where the history summary $h$ matches the MACed history summary.
      From \ref{eq:fin4} and (MAC), we have for some $j'$
\begin{equation}
\begin{array}{@{}l}
 \pred{mac}(j', \skey, (ENC_{skey}(state), h))
\end{array}
\label{eq:fin5}
\end{equation}
      By (\ref{eq:lemma4}) along with (\ref{eq:fin4}) and (\ref{eq:fin5}), we have $state' = state$
      We then have from (\ref{eq:fin3a}) that there exists a $u'$ such that
\begin{equation}
\begin{array}{@{}l}
     \sexec(j, \skey, s', state)@u' \\\lor \sinit(j, \skey, service, state)@u'
\end{array}
\label{eq:fin5}
\end{equation}
Also, from (\ref{eq:fin2}), we can show that $\forall j''.~\neg\sexec(j'', \cdots)$.

\item \textbf{Case} 2:
\begin{equation}
\begin{array}{@{}l}
\pred{verifyMAC}(j, \skey, (ENC_{\skey}(state), h') \land h = H(req||h')
\end{array}
\label{eq:fin6}
\end{equation}
This is the case where at Line 12 of runmodule, the current history summary
is the hash of the current request and the history summary in the snapshot.
This means that Memoir was called with exactly the same request in the past
and no other request has completed since then.
This case proceeds similarly to Case 1.
\end{itemize}
\end{itemize}

\end{document}